\newcommand{\vR}{\mbox{$\mathbf {R}$}}
\newcommand{\vu}{\mathbf{u}}
\newcommand{\vv}{\mathbf{v}}
\newcommand{\vh}{\mathbf{h}}
\newcommand{\vy}{\mathbf{y}}
\newcommand{\vx}{\mathbf{x}}
\newcommand{\vwx}{\mathbf{x}}
\newcommand{\vwy}{{\mathbf{y}}}
\newcommand{\vs}{\mathbf{s}}
\newcommand{\bX}{\mathbf{X}}
\newcommand{\bwX}{\bm{\mathcal{X}}}
\newcommand{\bwY}{\bm{\mathcal{Y}}}
\newcommand{\bwH}{\bm{\mathcal{H}}}
\newcommand{\bY}{\mathbf{Y}}
\newcommand{\vW}{\mathbf{W}}
\newcommand{\vU}{\mathbf{U}}
\newcommand{\vV}{\mathbf{V}}
\newcommand{\vM}{\mathbf{M}}
\newcommand{\vY}{\mathbf{Y}}
\newcommand{\vX}{\mathbf{X}}
\newcommand{\vA}{\mathbf{A}}
\newcommand{\vS}{\mathbf{S}}
\newcommand{\vH}{\mathbf{H}}
\newcommand{\vG}{\mathbf{G}}
\newcommand{\vP}{\mathbf{P}}
\newcommand{\vD}{\mathbf{D}}
\newcommand{\vQ}{\mathbf{Q}}
\newcommand{\vI}{\mathbf{I}}
\newcommand{\bGam}{\bm{\Gamma}}
\newcommand{\bSig}{\bm{\Sigma}}
\newcommand{\bTheta}{\bm{\Theta}}
\newcommand{\Pcal}{\mathcal{P}}
\newcommand{\bLambd}{\bm{\Lambda}}
\newtheorem{theorem}{Theorem}
\newtheorem{lemma}{Lemma}
\newif\iftikzinclude
\newif\ifpalgoinclude
\newif\ifantisparseexinclude
\newif\ifsparseexinclude
\newif\ifappinclude
\title{Biologically-Plausible Determinant Maximization Neural Networks for Blind Separation of Correlated Sources}
\newcommand\extralabel[2]{{\edef\@currentlabel{\@currentlabel#2}\label{#1}}}
\providecommand\phantomcaption{\caption@refstepcounter\@captype}
\author{Bariscan Bozkurt\textsuperscript{1,2} \quad Cengiz Pehlevan\textsuperscript{3} \quad Alper T. Erdogan\textsuperscript{1,2} \\
\textsuperscript{1}KUIS AI Center, Koc University, Turkey \quad \textsuperscript{2}EEE Department, Koc University, Turkey \\
\textsuperscript{3}John A. Paulson School of Engineering \& Applied Sciences and Center for\\   Brain Science,
Harvard University, Cambridge, 02138 MA, USA\\
\texttt{\{bbozkurt15, alperdogan\}@ku.edu.tr}\quad 
\texttt{cpehlevan@seas.harvard.edu}
}
\begin{document}

\maketitle

\begin{abstract}

Extraction of latent sources of complex stimuli is critical for making sense of the world. While the brain solves this blind source separation (BSS) problem continuously, its algorithms remain unknown. Previous work on biologically-plausible BSS algorithms assumed that observed signals are linear mixtures of statistically independent or uncorrelated sources, limiting the domain of applicability of these algorithms. To overcome this limitation, we propose novel biologically-plausible neural networks for the blind separation of potentially dependent/correlated sources. Differing from previous work, we assume some general geometric, not statistical, conditions on the source vectors allowing separation of potentially dependent/correlated sources. Concretely, we assume that the source vectors are sufficiently scattered in their domains which can be described by certain polytopes. Then, we consider recovery of these sources by the Det-Max criterion, which maximizes the determinant of the output correlation matrix to enforce a similar spread for the source estimates. Starting from this normative principle, and using a weighted similarity matching approach that enables arbitrary linear transformations adaptable by local learning rules, we derive two-layer biologically-plausible neural network algorithms that can separate mixtures into sources coming from a variety of source domains. We demonstrate that our algorithms outperform other biologically-plausible BSS algorithms on correlated source separation problems. \looseness = -1
\end{abstract}

\section{Introduction}

Our brains constantly and effortlessly extract latent causes, or sources, of complex visual, auditory or olfactory stimuli sensed by sensory organs \citep{bell1995information,olshausen1996emergence,bronkhorst2000cocktail,lewicki2002efficient,asari2006sparse,narayan2007cortical,bee2008cocktail,mcdermott2009cocktail,mesgarani2012selective,golumbic2013mechanisms,isomura2015cultured}. This extraction is mostly done without any instruction, in an unsupervised manner, making the process an instance of the blind source separation (BSS) problem \citep{comon2010handbook,cichocki2009nonnegative}. Indeed, visual and auditory cortical receptive fields were argued to  be the result of performing BSS on natural images \cite{bell1995information,olshausen1996emergence} and sounds \cite{lewicki2002efficient}. The wide-spread use of BSS in the brain suggests the existence of generic circuit motifs that perform this task \cite{sharma2000induction}. Consequently, the literature on biologically-plausible neural network algorithms for BSS is growing \citep{eagleman2001cerebellar,pehlevan2017blind, isomura2018error, erdogan2020blind,bahroun2021normative}.

 Because BSS is an underdetermined inverse problem, BSS algorithms make generative assumptions on observations. In most instances of the biologically-plausible BSS algorithms, complex stimuli are assumed to be linear mixtures of latent sources. This assumption is particularly fruitful and is used to model, for example, natural images  \cite{olshausen1997sparse,bell1995information}, and responses of olfactory neurons to complex odorants \citep{zhang2016robust,krishnamurthy2017disorder,singh2021odor}. However, linear mixing by itself is not sufficient for source identifiability; further assumptions are  needed. Previous work on biologically-plausible algorithms for BSS of linear mixtures assumed sources to be statistically independent \citep{isomura2018error,bahroun2021normative,lipshutz2022biologically} or uncorrelated \cite{pehlevan2017blind,erdogan2020blind}. However, these assumptions are very limiting when considering real data where sources can themselves be correlated. \looseness = -1 

In this paper, we address the limitation imposed by independence assumptions and provide biologically-plausible BSS neural networks that can separate potentially correlated sources. We achieve this by considering various general geometric identifiability conditions on sources instead of statistical assumptions like independence or uncorrelatedness. In particular, 1) we make natural assumptions on the domains of source vectors--like nonnegativity, sparsity, anti-sparsity or boundedness (Figure \ref{fig:sourcedomains})--and 2) we assume that latent source vectors are sufficiently spread in their domain \cite{lin2015identifiability,tatli2021tspsubmitted}. Because these identifiability conditions are not stochastic in nature, our neural networks are able to separate both independent and dependent sources.

We derive our biologically-plausible algorithms from a normative principle. A common method for exploiting our geometric identifiability conditions is to disperse latent vector estimates across their presumed domain by maximizing the determinant of their sample correlation matrix, i.e., the Det-Max approach \citep{schachtner2011towards,lin2015identifiability,erdogan2013class,inan2014convolutive,babatas2018algorithmic}. Starting from a Det-Max objective function with constraints that specify the domain of source vectors, and using  mathematical tools introduced for mapping optimization algorithms to adaptive Hebbian neural networks \cite{pehlevan2015normative,pehlevan2019neuroscience,erdogan2020blind}, we derive two-layered neural networks that can separate potentially correlated sources from their linear mixtures (Figure \ref{fig:GenericNN}). These networks contain feedforward, recurrent and feedback synaptic connections updated via Hebbian or anti-Hebbian update rules. The domain of latent sources determines the structure of the output layer of the neural network (Figure \ref{fig:GenericNN}, Table \ref{tab:nnoutdynamicsexamples} and Appendix \ref{sec:dersourcedomains}).  

In summary, our main contributions in this article are the following:
\begin{itemize}
  \item We propose a normative framework for generating biologically plausible neural networks that are capable of separating correlated sources from their mixtures by deriving them from a Det-Max objective function subject to source domain constraints.
\item Our framework can handle infinitely many source types by exploiting their source domain topology.
\item We demonstrate the performance of our networks in simulations with synthetic and realistic data.
\end{itemize}




\begin{figure}[t]%
    \centering
    \subfloat[ $\mathcal{B}_{\ell_1}$ \\ (sparse)]{{\includegraphics[width=2cm, trim=7.5cm 12.0cm 6.5cm 12.0cm,clip]{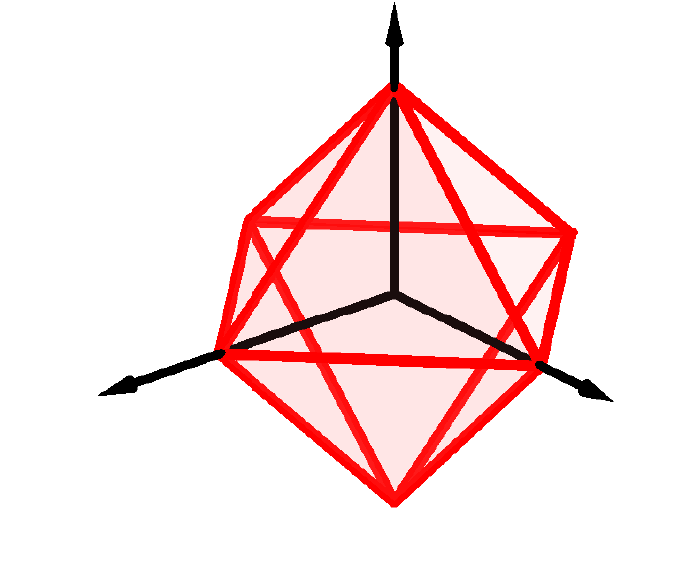} }}%
    \hspace{0.25in}
    \subfloat[ $\mathcal{B}_{\ell_\infty}$\\ (anti-sparse)]{{\includegraphics[width=2cm, trim=5cm 9.8cm 4cm 10.5cm,clip]{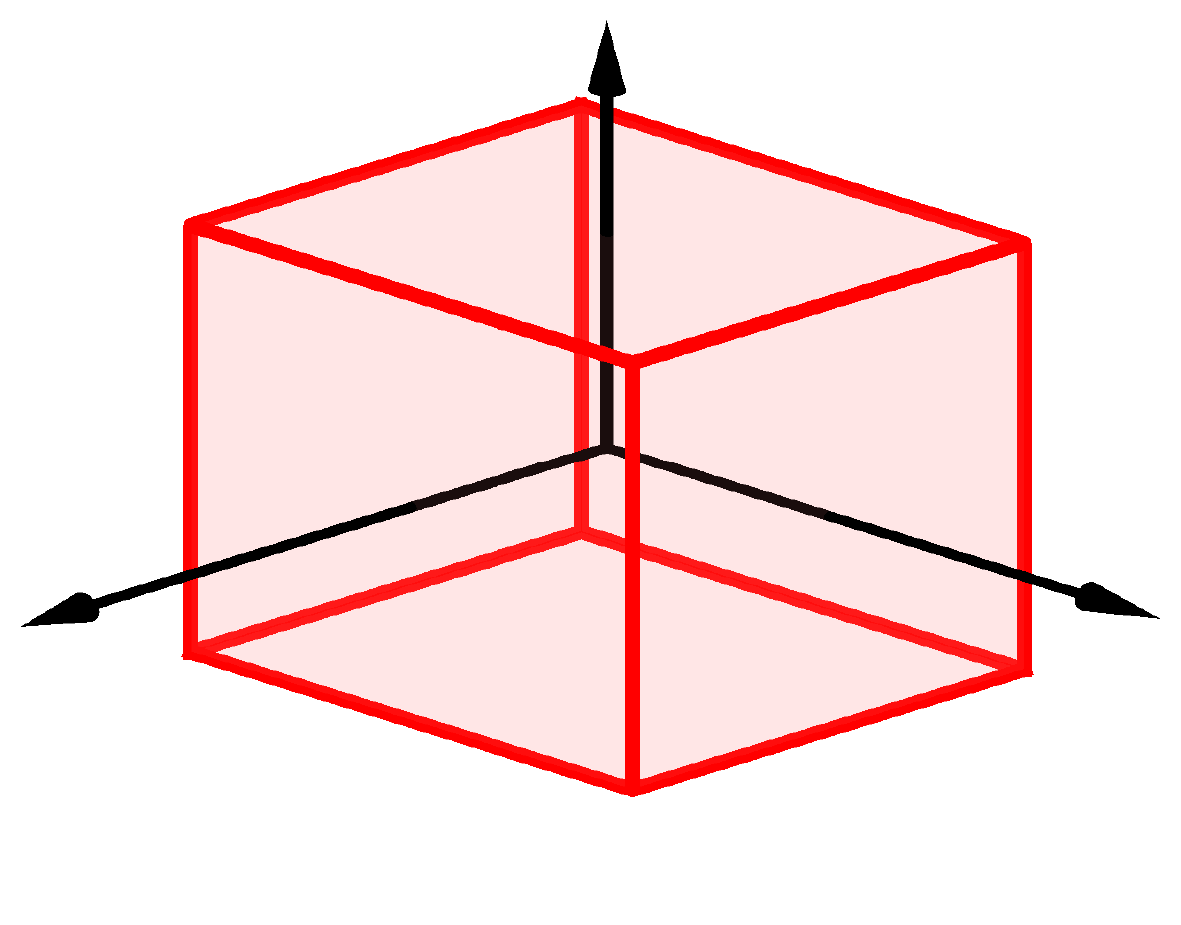} }}
    \hspace{0.25in}
    \subfloat[ $\Delta$ \\ (normalized nonnegative)]{{\includegraphics[width=1.8cm, trim=7.0cm 12.3cm 8.5cm 11.0cm,clip]{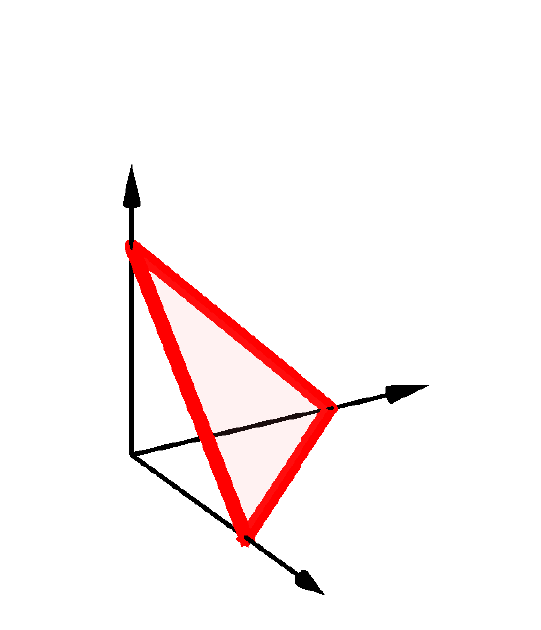} }}
    \hspace{0.25in}
    \subfloat[ $\mathcal{B}_{\ell_\infty,+}$\\ (nonnegative \\anti-sparse)]{\includegraphics[width=2.4cm, trim=6cm 12.0cm 5cm 10.5cm,clip]{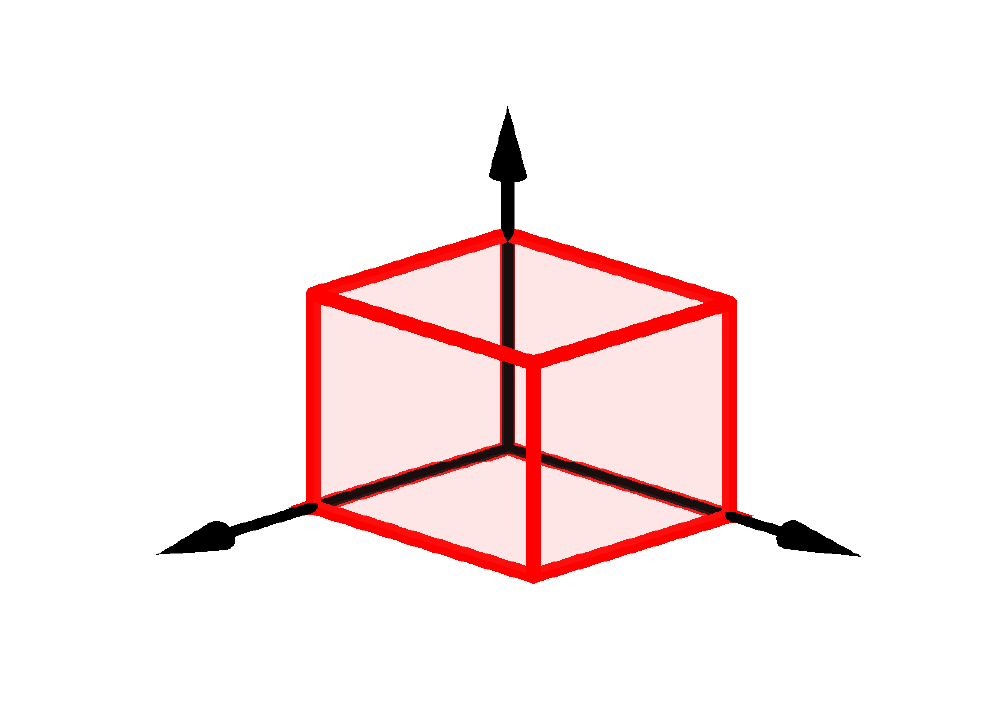} }%
    \hspace{0.25in}
    \subfloat[ $\mathcal{B}_{\ell_1,+}$\\ (nonnegative  sparse)]{\includegraphics[width=1.8cm, trim=7.0cm 12.3cm 8.5cm 12.0cm,clip]{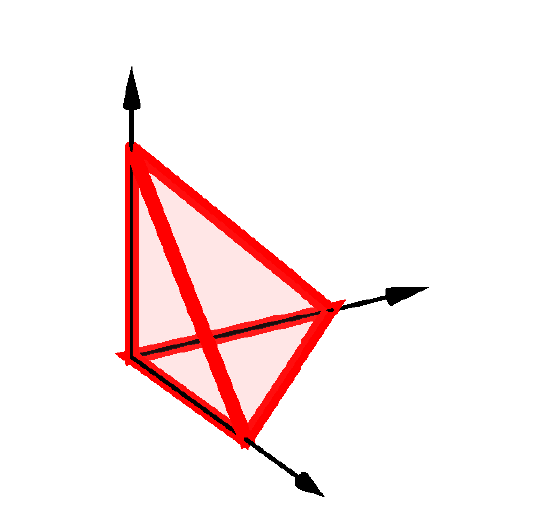} }
    \caption{Examples of source domains leading to identifiable generative models.}%
    \label{fig:sourcedomains}%
\end{figure}

\begin{figure}[ht]
\centering
\subfloat[General source domain with  sparse components]{{\includegraphics[width=7.5cm, trim=13.5cm 13.0cm 15.5cm 0.0cm,clip]{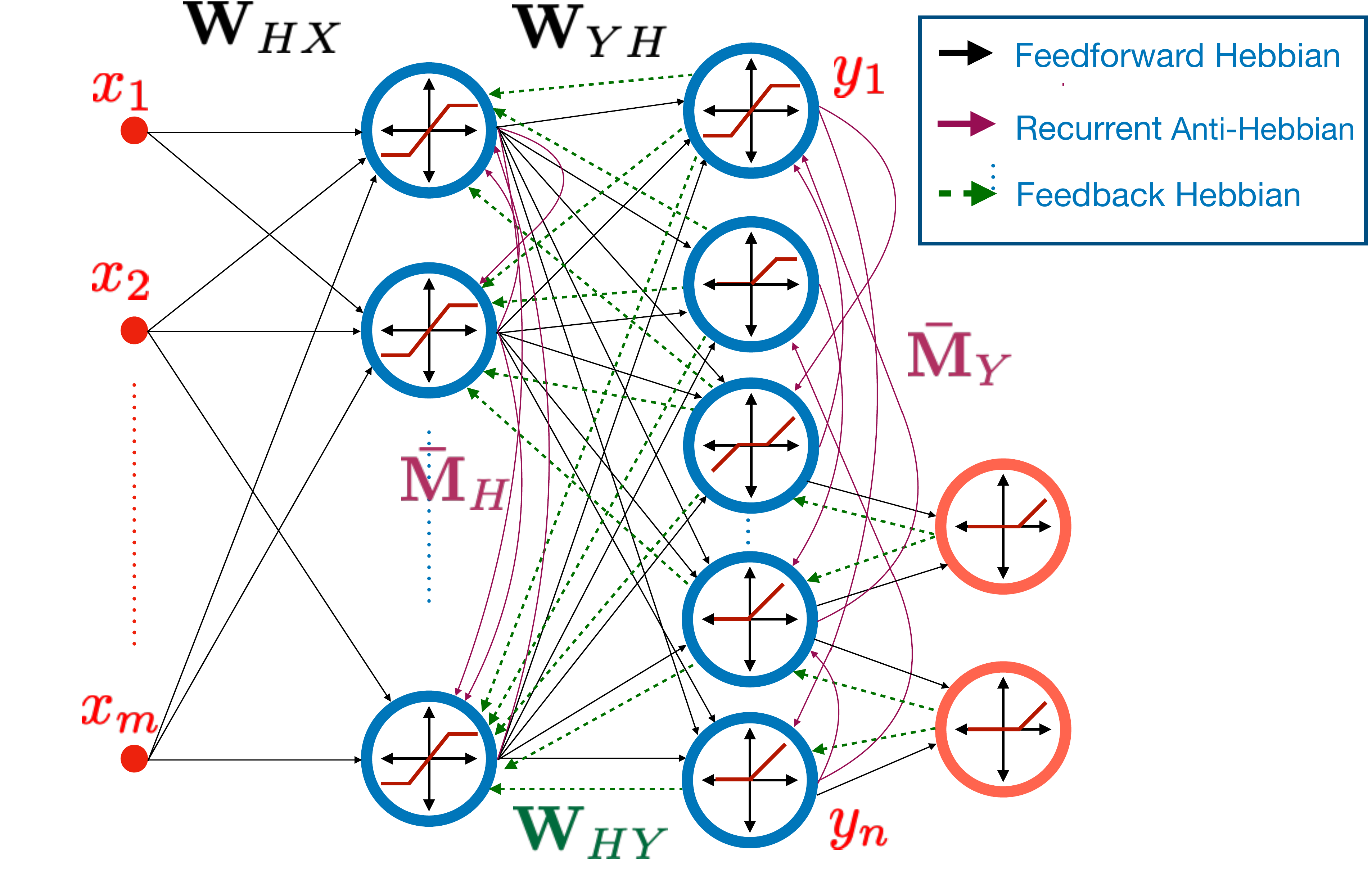} }}
\hspace{0.1in}
\subfloat[ Antisparse sources]{{\includegraphics[width=5.5cm, trim=13.5cm 15.0cm 27.5cm 0.0cm,clip]{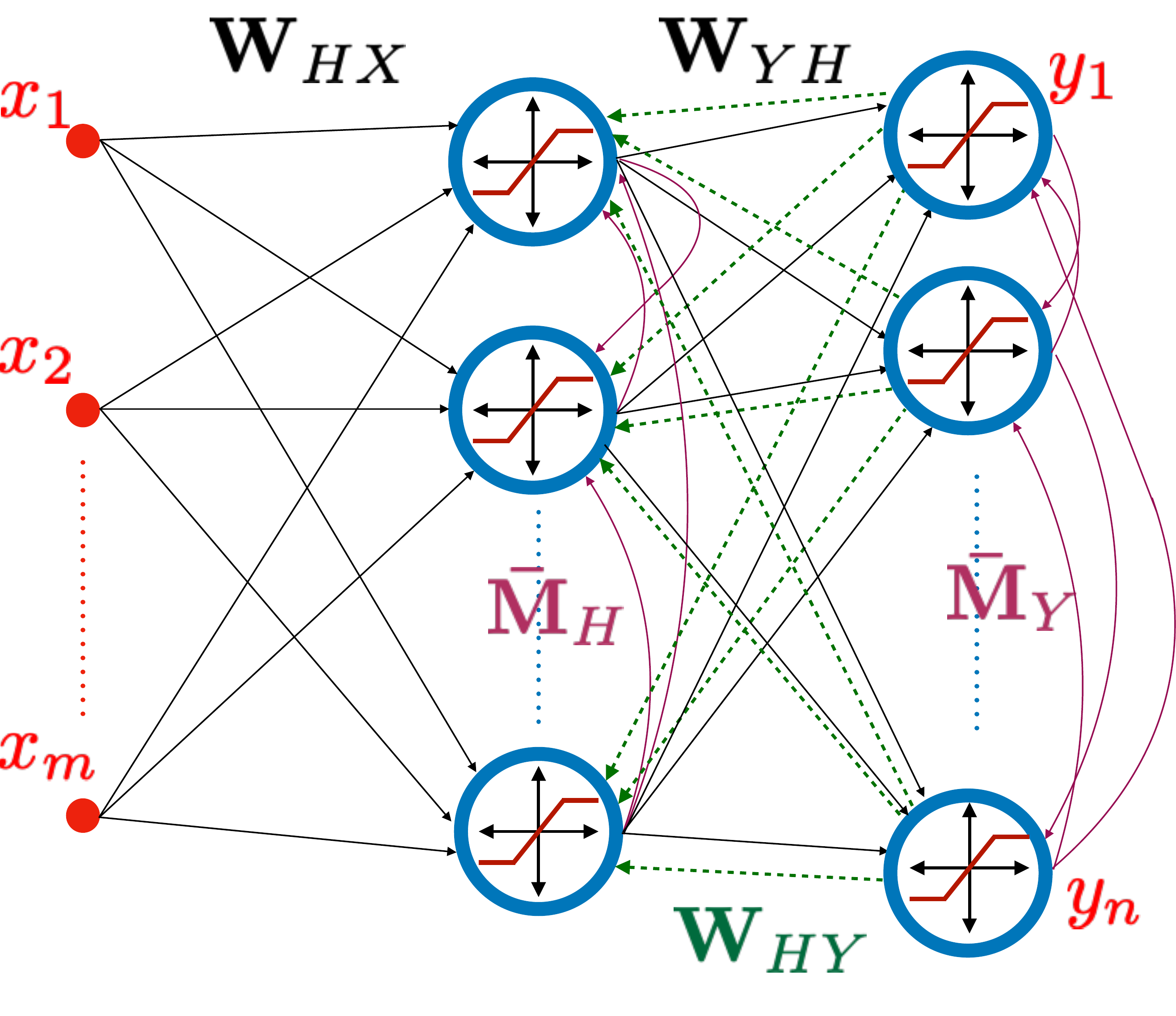} }}
	\caption{ Det-Max WSM neural network for blind source separation. The network takes a mixed input $\vx$ and produces latent components $\vy$ at the output. The output layer depends on the choice of source domain. Mutually sparse components are connected by inhibitory neurons at the output layer.}
	\label{fig:GenericNN}
\end{figure}

\subsection{Other related work}

 

Several algorithms for separation of linearly mixed and correlated sources have been proposed outside the domain of biologically-plausible BSS.  These algorithms make other forms of assumptions on the latent sources. 
Nonnegative matrix factorization (NMF) assumes that the latent vectors are nonnegative 
\citep{chen1984nonnegative, paatero1994positive,cichocki2009nonnegative,fu2019nonnegative}. Simplex structured matrix factorization (SSMF) assumes that the latent vectors are members of the unit-simplex 
\citep{chan2011simplex,lin2018maximum,lin2015identifiability}.
Sparse component analysis (SCA) often assumes that the latent vectors lie in the unity $\ell_1$-norm-ball \citep{georgiev2005sparse,donoho2006most,rozell2007locally,elad2010sparse,babatas2018algorithmic,babatas2020time}. Antisparse bounded component analysis (BCA) assumes latent vectors are in the $\ell_\infty$-norm-ball \citep{cruces2010bounded,erdogan2013class,inan2014convolutive}. Recently introduced polytopic matrix factorization (PMF) extends the identifiability-enabling domains to infinitely many polytopes obeying a particular symmetry restriction \citep{tatli:2021icassp,tatli2021tspsubmitted,bozkurt:2022icassp}.

The mapping of optimization algorithms to biologically-plausible neural networks have been formalized in the similarity matching framework \citep{pehlevan2014hebbian,pehlevan2015normative,sengupta2018manifold,pehlevan2019neuroscience}. 
Several BSS algorithms were proposed within this framework: 1) Nonnegative Similarity Matching (NSM) \citep{pehlevan2017blind,pehlevan2019spiking} separates linear mixtures of uncorrelated nonnegative sources, 2) \cite{bahroun2021normative} separates independent sources,
 and 3) Bounded Similarity Matching (BSM) separates uncorrelated anti-sparse bounded sources from $\ell_\infty$-norm-ball \citep{erdogan2020blind}. BSM  introduced a weighted inner product-based similarity criterion, referred to as the weighted similarity matching (WSM). Compared to these algorithm, the neural network algorithms we propose in this article 1) cover more general source domains, 
2) handle potentially correlated sources, 
3) use a two-layer WSM architecture (relative to single layer WSM architecture of BSM, which is not capable of generating arbitrary linear transformations) and  4) offer a general framework for neural-network-based optimization of the Det-Max criterion.




\section{Problem statement}
	\label{sec:BSSsetting}
	

\subsection{Sources}\label{sec:sources}

We assume that there are $n$ real-valued sources, represented by the vector $\vs\in \Pcal$, where $\Pcal$ is a  particular subset of $\mathbb{R}^n$. Our algorithms will address a wide range of source domains. We list some examples before giving a more general criterion:
\begin{itemize}
\item {\it Bounded sparse sources}: A natural convex domain choice for sparse sources is the unit    $\ell_1$ norm ball $\mathcal{B}_{\ell_1}=\{\vs \hspace{0.1in} \vert  \hspace{0.1in}  \|\vs\|_1 \le  1\}$ (Figure \ref{fig:sourcedomains}.(a)).  The use of $\ell_1$-norm as a convex (non)sparsity measure has been quite successful with various applications including sparse dictionary learning/component analysis \citep{donoho2006most,elad2010sparse,kreutz2003dictionary, li2004analysis,babatas2018algorithmic} and modeling of V1 receptive fields \cite{olshausen1996emergence}.
		
\item {\it Bounded anti-sparse sources}: A common domain choice for anti-sparse sources is the unit $\ell_\infty$-norm-ball:
$\mathcal{B}_{\ell_\infty}=\{\vs \hspace{0.1in} \vert  \hspace{0.1in}  \|\vs\|_\infty\le  1\}$
		(Figure \ref{fig:sourcedomains}.(b)).  If  vectors drawn from  $\mathcal{B}_{\ell_\infty}$ are well-spread inside this set, some samples would contain near-peak magnitude values simultaneously  at all their components. The potential equal spreading of values among the components justifies  the term ``anti-sparse'' \citep{elvira2016bayesian} or ``democratic'' \citep{studer2014democratic} component representations. This choice is well-suited  for both  applications in natural images and digital communication constellations \citep{cruces2010bounded, erdogan2013class}.
		
\item {\it Normalized nonnegative sources}:  Simplex structured matrix factorization \citep{chan2011simplex,lin2015identifiability,lin2018maximum}
uses the unit simplex \citep{donoho2003does, fu2019nonnegative} ${\Delta}=\{\vs \hspace{0.1in} \vert  \hspace{0.1in}  \vs\ge 0, \mathbf{1}^T\vs=1\}$ (Figure \ref{fig:sourcedomains}.(c)) as the source domain. Nonnegativity of sources naturally arises in biological context, for example in demixing olfactory mixtures \cite{grabska2017probabilistic}.
		
\item {\it Nonnegative bounded anti-sparse sources}: A non-degenerate polytopic choice of the nonnegative sources can be obtained through the combination of anti-sparseness and nonnegativity constraints. This corresponds to the intersection of  $\mathcal{B}_{\ell_\infty}$ with the nonnegative orthant $\mathbb{R}^n_+$,  represented as  $\mathcal{B}_{\ell_\infty,+}= \mathcal{B}_{\ell_\infty}\cap \mathbb{R}^n_+$ \citep{tatli2021tspsubmitted} (Figure \ref{fig:sourcedomains}.(d)).
\item {\it Nonnegative bounded sparse sources}: Another polytopic choice for nonnegative sources can be obtained through combination of the sparsity and nonnegativity constraints which yields the intersection of  $\mathcal{B}_{\ell_1}$ with the nonnegative orthant $\mathbb{R}_+$, \citep{tatli2021tspsubmitted}: 
$\mathcal{B}_{\ell_1,+}= \mathcal{B}_{\ell_1}\cap \mathbb{R}^n_+$ (Figure \ref{fig:sourcedomains}.(e)).
\end{itemize}
	
 Except the unit simplex $\Delta$, all the examples above are examples of an infinite set of identifiable polytopes whose symmetry groups are restricted to the combinations of component permutations and sign alterations  
as formalized in 
PMF framework for BSS \citep{tatli:2021icassp}. 
%
Further, instead of a homogeneous choice of features, such as sparsity and nonnegativity, globally imposed on all elements of the component vector,  we can assign these attributes at the subvector level and still obtain identifiable polytopes. For example, the reference  \citep{tatli2021tspsubmitted} provides the set
$\Pcal_{ex}=\left\{\mathbf{s}\in \mathbb{R}^3\ \middle\vert  s_1,s_2\in[-1,1],\, s_3\in[0,1],\,  \left\|\left[\begin{array}{c} s_1 \\ s_2 \end{array}\right]\right\|_1\le 1,\, \left\|\left[\begin{array}{c} s_2 \\ s_3 \end{array}\right]\right\|_1\le 1 \right\}$,
as a simple  illustration  of such polytopes with heterogeneous structure where $s_3$ is nonnegative, $s_1,s_2$ are signed, and $ \left[\begin{array}{cc} s_1 & s_2 \end{array}\right]^T$, $\left[\begin{array}{cc} s_2 & s_3 \end{array}\right]^T$ are sparse subvectors, while sparsity is not globally imposed. 
In this article, we concentrate on particular source domains including the unit simplex, and the subset of identifiable polytopes for which the attributes such as sparsity and nonnegativity are defined at the subvector level in the general form
\begin{eqnarray}
\Pcal=\left\{\mathbf{s}\in \mathbb{R}^n\ \middle\vert  s_i\in[-1,1] \, \forall i\in \mathcal{I}_s,\, s_i\in[0,1] \, \forall i\in \mathcal{I}_+, \, \left\|\vs_{\mathcal{J}_k}\right\|_1\le 1, \, \mathcal{J}_k\subseteq \mathbb{Z}_n, \, k\in\mathbb{Z}_L  \right\}, \label{eq:polygeneral}
\end{eqnarray}
where $\mathcal{I}_+\subseteq \mathcal{Z}_n$ is the index set for nonnegative sources, and $I_s$ is its complement, $\vs_{\mathcal{J}_k}$ is the subvector constructed from the elements with indices in $\mathcal{J}_k$, and $L$ is the number of sparsity constraints imposed in the subvector level.
	


The Det-Max criterion for BSS is based on the assumption that the source samples are well-spread in their presumed domain. The references \cite{fu2018identifiability} and \cite{tatli2021tspsubmitted} provide precise conditions on the scattering of source samples which guarantee their identifiability for the unit simplex and polytopes, respectively. Appendix \ref{appsec:suffscat} provides a brief summary of these conditions.

We emphasize that our assumptions about the sources are deterministic.  Therefore, our proposed algorithms do not exploit any stochastic assumptions such as independence or uncorrelatedness, and can separate both independent and dependent (potentially correlated) sources. 

\subsection{Mixing}
\label{sec:mixingmodel}
 The sources $\vs_t$ are mixed through a mixing  matrix $\vA\in \mathbb{R}^{m \times n}$. 
\begin{eqnarray}
\vx_t=\vA \vs_t, \hspace{0.4in} t \in \mathbb{Z}.  \label{eq:mixing}
\end{eqnarray}
We only consider the (over)determined case with $m\ge n$ and assume that the mixing matrix is full-rank. While we consider noiseless mixtures to achieve perfect separability, the optimization setting proposed for the online algorithm features a particular objective function that safeguards against potential noise presence. 
We use 
$ \vS(t)=\left[\begin{array}{ccc} \vs_1 & \ldots & \vs_t \end{array}\right] \in \mathbb{R}^{n \times t}$ and $\vX(t)=\left[\begin{array}{ccc} \vx_1 & \ldots & \vx_t \end{array}\right]\in \mathbb{R}^{m \times t}$
to represent data snapshot matrices, at time $t$,  for sources and mixtures, respectively.

\subsection{Separation}
The goal of the source separation  is to obtain an estimate of  $\vS(t)$ from the mixture measurements $\vX(t)$ when the mixing matrix $\vA$ is unknown.  We use the notation $\vy_t$ to refer to source estimates, which are linear transformations of  observations, i.e., 
$\vy_i=\vW \vx_i$, 
where $\vW \in \mathbb{R}^{n \times m}$. We define
$\vY(t)=\left[\begin{array}{cccc} \vy_1 & \vy_2& \ldots & \vy_t \end{array}\right]\in \mathbb{R}^{n \times t}$
as the output snapshot matrix. "Ideal separation" is defined as the condition where the  outputs are scaled and permuted versions of original sources, i.e., they satisfy
$\vy_t=\vP \bLambd\vs_t$, 
 where $\vP$ is a permutation matrix, and $\bLambd$ is a full rank diagonal matrix.


\section{Determinant maximization based blind source separation}
\label{sec:DetMaxBSS}

 Among several alternative solution methods for the BSS problem, the determinant-maximization (Det-Max) criterion has been proposed within the NMF, BCA, and PMF frameworks,  \citep{schachtner2011towards,fu2019nonnegative,erdogan2013class,babatas2018algorithmic, tatli:2021icassp,tatli2021tspsubmitted}. Here, the separator is trained to maximize  the  (log)-determinant of the sample  correlation matrix for the separator outputs, 
$J(\vW)= \log(\det(\hat{\vR}_y(t)))$, 
where $\hat{\vR}_y(t)$ is the sample correlation matrix
$\hat{\vR}_y(t)=\frac{1}{t}\sum_{i=1}^{t}\vy_i \vy_i^T = \frac{1}{t} \vY(t)\vY(t)^T$.
Further, during the training process, the separator outputs are constrained to lie inside the presumed source domain, i.e. $\mathcal{P}$. As a result, we can  pose the corresponding optimization problem  as \cite{fu2019nonnegative, tatli2021tspsubmitted} 
\begin{maxi!}[l]<b>
{\vY(t)}{ \log(\det(\vY(t)\vY(t)^T))\label{eq:detmaxobjective}}{\label{eq:bssobjective}}{}
\addConstraint{\vy_i \in \mathcal{P}, i=1, \dots, t,}{\label{eq:detmax}}{}
\end{maxi!}
where we ignored the constant $\frac{1}{t}$ term. Here, the determinant of the correlation matrix acts as a spread measure for the output samples. If the original source samples $\{\vs_1, \ldots, \vs_t\}$ are sufficiently scattered inside the source domain $\mathcal{P}$, as described in Section \ref{sec:sources} and Appendix \ref{appsec:suffscat}, then the global solution of this optimization can be shown to achieve perfect separation  \citep{fu2018identifiability, fu2019nonnegative, tatli2021tspsubmitted}.




\section{An alternative optimization formulation of determinant-maximization based on weighted similarity matching}
\label{sec:WSM}

Here, we reformulate the Det-Max problem \ref{eq:bssobjective} described above in a way that allows derivation of a biologically-plausible neural network for the linear BSS setup  in Section \ref{sec:BSSsetting}. 
Our formulation applies to all source types discusses in \ref{sec:sources}. 

We propose the following optimization problem:
\begin{mini!}[l]<b>
{\substack{\vY(t),\vH(t),  D_{1,11}(t), \ldots D_{1,nn}(t),\vD_1(t)\\
 D_{2,11}(t), \ldots D_{2,nn}(t),\vD_2(t)}} {\sum_{i=1}^n \log(D_{1,ii}(t))+\sum_{i=1}^n \log(D_{2,ii}(t))\label{eq:wsm3objective}}{\label{eq:wsm3optimization}}{}
\addConstraint{\vX(t)^T\vX(t) -\vH(t)^T\vD_1(t)\vH(t)=0}{\label{eq:wsm3constr1}}{}
\addConstraint{\vH(t)^T\vH(t)-\vY(t)^T\vD_2(t)\vY(t)=0}{\label{eq:wsm3constr2}}{}
\addConstraint{ \vy_i \in \mathcal{P}, i=1, \ldots, n}{\label{eq:wsm3constr3}}{}
\addConstraint{\vD_l(t)=\text{diag}(D_{l,11}(t), \ldots, D_{l,nn}(t)), \hspace{0.1in} l=1,2}{\label{eq:wsm3constr4}}{}
\addConstraint{D_{l,11}(t), D_{l,22}(t), \ldots, D_{l,nn}(t)>0,  \hspace{0.1in} l=1,2}{\label{eq:wsm3constr5}}{}
\end{mini!}
Here,  $\vX(t)\in\mathbb{R}^{m\times t}$ is the matrix containing input (mixture) vectors, $\vY(t)\in\mathbb{R}^{n\times t}$ is the matrix containing output vectors, $\vH(t)\in\mathbb{R}^{n\times t}$ is a slack variable containing an intermediate signal \\ $\{\vh_i \in \mathbb{R}^n, i=1, \ldots, t\}$, corresponding to the hidden layer of the neural network implementation in Section \ref{sec:nnsolutiontoWSM}, in its columns $\vH(t)=\left[\begin{array}{cccc} \vh_1 & \vh_2 & \ldots & \vh_t \end{array}\right]$. $D_{l,11}(t), D_{l,22}(t), \ldots, D_{l,nn}(t)$ for $l=1,2$ are nonnegative slack variables  to be described below, and $\vD_l$ is the diagonal matrix containing weights $D_{l,ii}$ for  $i=1, \dots,n$ and $l=1,2$. The constraint  \eqref{eq:wsm3constr3} ensures that the outputs lie in the presumed domain of sources.

This problem is related to the weighted similarity matching (WSM) objective introduced in \cite{erdogan2020blind}. Constraints \eqref{eq:wsm3constr1} and \eqref{eq:wsm3constr2} define two separate WSM conditions.
In particular, the  equality constraint in \eqref{eq:wsm3constr1}  is a WSM constraint between inputs and the intermediate signal $\vH(t)$. This constraint imposes  that the pairwise weighted correlations of the signal $\{\vh_i, i=1,\ldots, t\}$ are the same as correlations among the elements of the input signal $\{\vx_i, i=1, \ldots, t\}$, i.e.,  $ \vx_i^T\vx_j= \vh_i^T\vD_1(t)\vh_j, \hspace{0.1in} \forall i,j \in \{1, \ldots, t\}$. $D_{1,11}(t), D_{1,22}(t), \ldots, D_{1,nn}(t)$ correspond to inner product weights used in these equalities. Similarly, the equality constraint in (\ref{eq:wsm3constr2})  defines a WSM constraint between the intermediate signal and outputs. This equality can be written as $\vh_i^T\vh_j=\vy_i^T\vD_2(t)\vy_j, \hspace{0.1in} i,j \in \{1, \ldots, t\}$, and $D_{2,11}(t), D_{2,22}(t), \ldots, D_{2,nn}(t)$ correspond to the inner product weights used in these equalities. The optimization involves minimizing the logarithm of the determinant of the weighting matrices. 

Now we state the relation between our WSM-based objective and the original Det-Max criterion \eqref{eq:bssobjective}.
\begin{theorem}
If $\vX(t)$ is  full column-rank, then global optimal $\bY(t)$ solutions   of \eqref{eq:bssobjective} and \eqref{eq:wsm3optimization} coincide. 
\label{th:1}
\end{theorem}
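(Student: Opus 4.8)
The plan is to show that, at every feasible point of \eqref{eq:wsm3optimization}, the WSM cost $\sum_i\log D_{1,ii}(t)+\sum_i\log D_{2,ii}(t)$ equals a constant determined solely by $\vX(t)$ minus the Det-Max objective $\log\det(\vY(t)\vY(t)^T)$. Once this identity is in hand, minimizing the WSM cost is the same as maximizing $\log\det(\vY(t)\vY(t)^T)$, the inner minimization over the slack variables $\vH(t),\vD_1(t),\vD_2(t)$ becomes vacuous, and the two problems must share the same optimal $\vY(t)$. The argument therefore splits into (i) the determinant identity and (ii) the matching of the feasible $\vY(t)$-sets.

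For the identity I would convert the two $t\times t$ Gram constraints \eqref{eq:wsm3constr1}--\eqref{eq:wsm3constr2} into relations between $n\times n$ correlation matrices using the elementary fact that $\vA\vB$ and $\vB\vA$ have the same nonzero eigenvalues. Applied to \eqref{eq:wsm3constr2}, written as $\vH(t)^T\vH(t)=(\vD_2(t)^{1/2}\vY(t))^T(\vD_2(t)^{1/2}\vY(t))$, the nonzero eigenvalues of $\vH(t)\vH(t)^T$ coincide with those of $\vD_2(t)^{1/2}\vY(t)\vY(t)^T\vD_2(t)^{1/2}$; applied to \eqref{eq:wsm3constr1}, the nonzero eigenvalues of $\vX(t)\vX(t)^T$ coincide with those of $\vD_1(t)^{1/2}\vH(t)\vH(t)^T\vD_1(t)^{1/2}$. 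Under the rank hypothesis the $n\times n$ matrices here are nonsingular, so equality of nonzero-eigenvalue multisets upgrades to equality of determinants, giving $\det(\vH(t)\vH(t)^T)=\det(\vD_2(t))\det(\vY(t)\vY(t)^T)$ and $\det(\vD_1(t))\det(\vH(t)\vH(t)^T)=\prod$ of the nonzero eigenvalues of $\vX(t)\vX(t)^T$. Chaining these and taking logarithms yields $\sum_i\log D_{1,ii}(t)+\sum_i\log D_{2,ii}(t)=C_{\vX}-\log\det(\vY(t)\vY(t)^T)$, where $C_{\vX}$ is the log of the product of the nonzero eigenvalues of $\vX(t)\vX(t)^T$, a constant.

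For the feasibility matching I would use that the separator outputs are linear images $\vy_i=\vW\vx_i$, so the rows of any admissible $\vY(t)$ lie in the row space of $\vX(t)$. The forward inclusion (a WSM-feasible $\vY(t)$ has $\vy_i\in\mathcal P$) is immediate from \eqref{eq:wsm3constr3}. For the converse I must exhibit slack variables realizing \eqref{eq:wsm3constr1}--\eqref{eq:wsm3constr2} for an arbitrary rank-$n$ Det-Max optimizer. Writing the compact factorization $\vX(t)^T\vX(t)=\vV\bSig^2\vV^T$ and $\vY(t)=\vB\vV^T$ with $\vB\in\mathbb{R}^{n\times n}$ invertible (legitimate since $\vY(t)$ shares the row space of $\vX(t)$), I would pick any positive diagonal $\vD_2(t)$, form the symmetric positive-definite matrix $\bSig^{-1}\vB^T\vD_2(t)\vB\bSig^{-1}$, read off an orthogonal $\vQ$ and a positive diagonal $\vD_1(t)$ from its eigendecomposition, and set $\vH(t)=\vD_1(t)^{-1/2}\vQ\bSig\vV^T$. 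A direct substitution then verifies $\vH(t)^T\vD_1(t)\vH(t)=\vV\bSig^2\vV^T=\vX(t)^T\vX(t)$ and $\vH(t)^T\vH(t)=\vV\vB^T\vD_2(t)\vB\vV^T=\vY(t)^T\vD_2(t)\vY(t)$, so the two problems have the same feasible set of rank-$n$ outputs.

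Putting the pieces together, the outer minimization of $C_{\vX}-\log\det(\vY(t)\vY(t)^T)$ over $\{\vy_i\in\mathcal P\}$ is exactly \eqref{eq:bssobjective}, so the global optimizers $\vY(t)$ coincide. I expect the main obstacle to be the converse feasibility construction in the third step---producing $\vH(t),\vD_1(t),\vD_2(t)$ for an arbitrary Det-Max optimizer via the eigendecomposition above---together with the rank bookkeeping that lets one pass from equality of the singular $t\times t$ Gram matrices to equality of the $n\times n$ determinants. The full-column-rank assumption on $\vX(t)$ enters precisely at this last point: it is what keeps the correlation matrices nonsingular so that the nonzero-eigenvalue multisets are full, the determinant identity is exact, and $C_{\vX}$ is finite; I would also note that any rank-deficient $\vY(t)$ drives both objectives to their degenerate extreme and hence cannot be optimal, which is what lets me restrict attention to rank-$n$ outputs throughout.
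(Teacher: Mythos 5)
Your proof is correct, and it takes a genuinely different route from the paper's. The paper goes through its Lemma 1: invoking the mixing model $\vx_i=\vA\vs_i$ and the reduced SVD of $\vA$, it parametrizes every feasible point of \eqref{eq:wsm3optimization} as $\vy_i=\vD_2(t)^{-1/2}\bTheta_2(t)\vD_1(t)^{-1/2}\bTheta_1(t)^T\vx_i$ with $\bTheta_2$ orthogonal and $\bTheta_1$ having orthonormal columns, writes $\vY(t)=\vG(t)\vS(t)$ with $\vG(t)=\vW(t)\vA$, and then computes $\log\det(\vY(t)\vY(t)^T)=2\log|\det\vG(t)|+\log\det(\vS(t)\vS(t)^T)$ together with $2\log|\det\vG(t)|=-\log\det\vD_1(t)-\log\det\vD_2(t)+2\log\det\bSig_\vA$, so that the two objectives differ by a constant; the realizability of an arbitrary full-rank map is compressed into the word ``arbitrary'' in that lemma. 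You instead obtain the identity $\sum_i\log D_{1,ii}(t)+\sum_i\log D_{2,ii}(t)=C_{\vX}-\log\det(\vY(t)\vY(t)^T)$ directly from the two Gram constraints, using only the fact that a product of two matrices has the same nonzero eigenvalues in either order--no mixing model, no SVD of $\vA$, and a constant expressed purely in terms of $\vX(t)$--and you make the converse realizability explicit via the eigendecomposition construction of $(\vH,\vD_1,\vQ)$. Your route is more elementary and makes transparent that the objective of \eqref{eq:wsm3optimization} depends on a feasible point only through $\vY(t)$, so the minimization over the slack variables is indeed vacuous; the paper's parametrization, in exchange, exhibits the explicit form of the separator map, which it reuses when interpreting the network. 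Two points to tighten in your write-up: (i) in the converse construction the positive diagonal factor must be the \emph{inverse} of the eigenvalue matrix of $\bSig^{-1}\vB^T\vD_2(t)\vB\bSig^{-1}$, since what you need is $\vQ^T\vD_1(t)^{-1}\vQ$ equal to that matrix; (ii) in the forward inclusion, Det-Max feasibility under the paper's separator reading ($\vy_i=\vW\vx_i$) requires not only \eqref{eq:wsm3constr3} but also that a WSM-feasible $\vY(t)$ is a linear image of $\vX(t)$; this follows from the constraints because $\vD_1(t),\vD_2(t)\succ 0$ force $\mathrm{row}(\vY(t))=\mathrm{row}(\vH(t))=\mathrm{row}(\vX(t))$, which is the same row-space fact you already invoke for the converse. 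Note finally that both your argument and the paper's rely on interpreting \eqref{eq:bssobjective} as an optimization over separator outputs $\vY(t)=\vW\vX(t)$ (as described in Section \ref{sec:DetMaxBSS}); over completely arbitrary $\vY(t)$ with columns in $\Pcal$ the two optima need not coincide, so this reading is essential to the theorem itself, not a defect of your proof.
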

\begin{proof}[Proof of Theorem \ref{th:1}]
See Appendix \ref{app:thm1proof} for the proof. The proof relies on a lemma that states that the optimization constraints enforce inputs and outputs to be related by an arbitrary linear transformation.
\end{proof}


\section{Biologically-plausible neural networks for WSM-based BSS}
\label{sec:nnsolutiontoWSM}

The optimization problems we considered so far were in an offline setting, where all inputs are observed together and all outputs are produced together. However, biology operates in an online fashion, observing an input and producing the corresponding output, before seeing the next input. Therefore, in this section, we first introduce an online version of the batch WSM-problem  \eqref{eq:wsm3optimization}. 
Then we show that the corresponding  gradient descent algorithm leads to a two-layer neural network with biologically-plausible local update rules.

\subsection{Online optimization setting for WSM-based BSS}
\label{sec:onlinewsm}

We first propose an online extension of  WSM-based BSS   (\ref{eq:wsm3optimization}). In the online setting,
past outputs cannot be altered, but past inputs and outputs still carry valuable information about solving the BSS problem. We will write down an optimization problem whose goal is to produce the sources $\vy_t$ given a mixture $\vx_t$, while exploiting information from all the fixed previous inputs and outputs.
%
%

 We first introduce our notation. We consider exponential weighting of the signals  as a recipe for dynamical adjustment to potential nonstationarity in the data. We define the weighted  input data  snapshot matrix  by time $t$ as, $\bwX(t)=\left[\begin{array}{cccc}  \gamma^{t-1} \vwx_1 & \ldots & \gamma \vwx_{t-1} & \vwx_t \end{array}\right]=\bX(t)\bGam(t) 
$,
	where $\gamma$ is the forgetting factor and $\bGam(t)=\mbox{diag}(\gamma^{t-1}, \ldots, \gamma,1)$. The exponential weighting 
	emphasizes recent mixtures by reducing the impact of past samples. Similarly, we define the corresponding weighted output snapshot matrix for output as
	$
	\bwY(t)=\left[\begin{array}{cccc}  \gamma^{t-1} \vwy_1   & \ldots & \gamma \vwy_{t-1} & \vwy_t \end{array}\right]=\bY(t)\bGam(t)$, 
	and the hidden layer vectors as
		$ \bwH(t)=\left[\begin{array}{cccc}  \gamma^{t-1} \vh_1  & \ldots & \gamma \vh_{t-1} & \vh_t \end{array}\right]= \vH(t)\bGam(t)$.
 We  further define $\tau=\lim_{t\rightarrow \infty}=\sum_{k=0}^{t-1}\gamma^{2k}=\frac{1}{1-\gamma^2}$
  as a measure of  the effective time window length for sample correlation calculations based on the exponential weights. 
	
In order to derive an online cost function, we first converted  equality constraints in (\ref{eq:wsm3constr1}) and (\ref{eq:wsm3constr2})  to similarity matching cost functions $J_{1}(\vH(t),\vD_1(t))=\frac{1}{2\tau^2}\|\bwX(t)^T\bwX(t)-\bwH(t)^T\vD_1(t)\bwH(t)\|_F^2$, $J_{2}(\vH(t),\vD_2(t), \vY(t))=\frac{1}{2\tau^2}\|\bwH(t)^T\bwH(t)-\bwY(t)^T\vD_2(t)\bwY(t)\|_F^2$.
%
Then, a weighted combination of similarity matching costs and  the objective function in (\ref{eq:wsm3objective}) yields  the final cost function
	\begin{eqnarray}
\mathcal{J}(\vH(t),\vD_1(t),\vD_2(t),\vY(t))&=&\lambda_{SM}[\beta J_1(\vH(t),\vD_1(t))+(1-\beta)J_2(\vH(t),\vD_2(t),\vY(t))]\nonumber \\
	&&+(1 - \lambda_{SM})[\sum_{k=1}^n \log(D_{1,kk}(t))+\sum_{k=1}^n \log(D_{2,kk}(t))]. \label{eq:onlinecost}
	\end{eqnarray}
Here, $\beta\in [0,1]$ and $\lambda_{SM}\in [0,1]$ are parameters that convexly combine similarity matching costs and the objective function. Finally, we can state the online optimization problem for determining the current output $\vy_t$, the corresponding hidden state $\vh_t$ and for updating the gain parameters $\vD_l(t)$ for $l=1,2$, as 
	\begin{mini!}[l]<b>
{\substack{\vy_t,\vh_t,  D_{1,11}(t), \ldots D_{1,nn}(t),\vD_1(t)\\
 D_{2,11}(t), \ldots D_{2,nn}(t),\vD_2(t)}} { \mathcal{J}(\vH(t),\vD_1(t),\vD_2(t),\vY(t))\label{eq:wsmonlineobj}}{\label{eq:wsmonline}}{}
\addConstraint{\vy_t\in \Pcal}{\label{eq:wsmonlineconstr1}}{}
\addConstraint{\vD_l(t)=\text{diag}(D_{l,11}(t), \ldots, D_{l,nn}(t)), \hspace{0.1in} l=1,2}{\label{eq:wsmonlineconstr2}}{}
\addConstraint{D_{l,11}(t), D_{l,22}(t), \ldots, D_{l,nn}(t)>0,  \hspace{0.1in} l=1,2}{\label{eq:wsmonlineconstr3}}{}
\end{mini!}
As shown in Appendix \ref{sec:smcsimplify}, part of $\mathcal{J}$ that depends on  $\vh_t$ and $\vy_t$ can be written as 
\begin{eqnarray}
	C(\vh_t,\vy_t)&=&2\vh_t^T\vD_1\vM_H(t)\vD_1(t)\vh_t-4\vh_t^T\vD_1(t)\vW_{HX}(t)\vx_t \nonumber\\
	&& +2\vy_t^T\vD_2(t)\vM_Y(t)\vD_2(t)\vy_t-4\vy_t^T\vD_2(t)\vW_{YH}(t)\vh_t + 2\vh_t^T\vM_H(t)\vh_t, \label{eq:c}
\end{eqnarray}
where the dependence on past inputs and outputs appear in the weighted correlation matrices:
\begin{gather}
\begin{array}{c}
\vM_H(t)=\frac{1}{\tau}\sum_{k=1}^{t-1}(\gamma^2)^{t-1-k}\vh_k\vh_{k}^T, \quad
\vW_{HX}(t)=\frac{1}{\tau}\sum_{k=1}^{t-1}(\gamma^2)^{t-1-k}\vh_k\vx_{k}^T, \\
\vW_{YH}(t)=\frac{1}{\tau}\sum_{k=1}^{t-1}(\gamma^2)^{t-1-k}\vy_k\vh_{k}^T, \quad
\vM_Y(t)=\frac{1}{\tau}\sum_{k=1}^{t-1}(\gamma^2)^{t-1-k}\vy_k\vy_{k}^T.\end{array} \label{eq:synapseweights}
\end{gather}


\subsection{Description of network dynamics for bounded anti-sparse sources}
\label{linfdynamics}

We now show that the gradient-descent minimization of the online WSM cost function in (\ref{eq:wsmonline}) can be interpreted as the dynamics of a neural network with local learning rules.  
The exact network architecture is determined by the presumed identifiable source domain $\mathcal{P}$, which can be chosen in infinitely many ways. In this section,  we concentrate on the domain choice $\mathcal{P}=\mathcal{B}_{\infty}$ as an illustrative example.  In Section \ref{sec:generalsourcedomains}, we discuss how to generalize the results of this section by modifying the output layer for different identifiable source domains. We start by writing  the update expressions for the optimization variables based on the gradients of  $\mathcal{J}(\vh_t,\vy_t, \vD_1(t),\vD_2(t))$:

\underline{Update dynamics for $\vh_t$:}  Following previous work \citep{rozell2008sparse,pehlevan2019spiking}, and using the gradient of \eqref{eq:c} in (\ref{eq:gradht}) with respect to $\vh_t$, we can write down an update dynamics for $\vh_t$ in the form
	\begin{eqnarray}
	\frac{d\vv(\tau)}{d\tau} &=&-\vv(\tau)- \lambda_{SM} [((1-\beta )\bar{\vM}_H(t)+\beta\vD_1(t)\bar{\vM}_H(t)\vD_1(t))\vh(\tau) \nonumber \\
	                                && \hspace*{-0.25in}+\beta \vD_1(t)\vW_{HX}(t)\vx(\tau)+(1-\beta)\vW_{YH}(t)^T\vD_2(t)\vy(\tau)]  \label{eq:descv}\\
 \vh_{t,i}(\tau)&=&\sigma_A\left(\frac{\vv_i(\tau)}{{\lambda_{SM}\Gamma_{H}}_{ii}(t)((1 - \beta)+\beta {D_{1,ii}(t)}^2)}\right), \hspace{0.1in} \text{for } i=1, \ldots n, \label{eq:htnn}
\end{eqnarray}
where $\bGam_H(t)$ is a diagonal matrix containing diagonal elements of $\vM_H(t)$ and $\bar{\vM}_H(t)=\vM_H(t)-\bGam_H(t)$, $\sigma(\cdot)$ is the clipping function, defined as $\sigma_A(x)=\left\{\begin{array}{cc}  x & -A \le x \le A, \\
   A\text{sign}(x) & \text{otherwise.} \end{array} \right.$.
This dynamics can be shown to minimize \eqref{eq:c} \cite{rozell2008sparse}. Here $\vv(\tau)$ is an internal variable that could be interpreted as the voltage dynamics of a biological neuron, and is defined based on a linear transformation of $\vh_t$ in (\ref{eq:v}). 
Equation (\ref{eq:descv}) defines $\vv(\tau)$ dynamics from the gradient of \eqref{eq:c} with respect to $\vh_t$ in (\ref{eq:gradht}). Due to the positive definite linear map in (\ref{eq:v}), the expression in (\ref{eq:gradht}) also serves as a descent direction for $\vv(\tau)$. Furthermore, $\sigma(\cdot)$ function is the projection onto $A\mathcal{B}_\infty$, where $[-A,A]$ is the presumed dynamic range for the components of $\vh_t$. We note that there is no explicit constraint set for $\vh_t$ in the online optimization setting of Section \ref{sec:onlinewsm}, and therefore, $A$ can be chosen as large as desired in the actual implementation.  We included the nonlinearity in (\ref{eq:htnn}) to model the limited dynamic range of an actual (biological) neuron.

\underline{Update dynamics for output $\vy_t$:} We write the update dynamics for the output $\vy_t$, based on (\ref{eq:gradyt}) as
\begin{eqnarray}
\frac{d\vu(\tau)}{d\tau}&=&-\vu(\tau)+ \vW_{YH}(t)\vh(\tau)-\bar{\vM}_Y(t)\vD_2(t)\vy(\tau),\label{eq:descu}\\
\vy_{t,i}(\tau)&=& \sigma_1\left(\frac{\vu_i(\tau)}{{\Gamma_Y}_{ii}(t){D_{2,ii}(t)}}\right),  \hspace{0.2in} \text{for } i=1, \ldots n,   \label{eq:ytnn}
\end{eqnarray}
which is derived using the same approach for $\vh_t$, where we used the descent direction expression in (\ref{eq:gradyt}), and the substitution in (\ref{eq:ut}). Here, $\bGam_Y(t)$ is a diagonal matrix containing diagonal elements of $\vM_Y(t)$ and $\bar{\vM}_Y(t)=\vM_Y(t)-\bGam_Y(t)$. Note that the nonlinear mapping $\sigma_1(\cdot)$ is the  projection onto the presumed domain of sources, i.e.,  $\mathcal{P}=\mathcal{B}_\infty$, which is elementwise clipping operation.

The state space representations in (\ref{eq:descv})-(\ref{eq:htnn}) and (\ref{eq:descu})-(\ref{eq:ytnn}) correspond to a two-layer  recurrent neural network with input $\vx_t$, hidden layer activation $\vh_t$, output layer activation $\vy_t$, $\vW_{HX}$ ($\vW_{HX}^T$) and $\vW_{YH}$ ($\vW_{YH}^T$) are the feedforward (feedback) synaptic weight  matrices for the first and the second layers, respectively,  and $\bar{\vM}_{H}$ and $\bar{\vM}_Y$ are recurrent synaptic weight matrices for the first and  the second layers, respectively. The corresponding neural network schematic is provided in Figure \ref{fig:GenericNN}.(b). The gain and synaptic weight dynamics below describe the learning mechanism for this network:

\underline{Update dynamics for gains $D_{l,ii}$:} Using the derivative of the cost function with respect to $D_{1,ii}$ in (\ref{eq:derD1ii}),  we can write the dynamics corresponding to the gain variable $D_{1,ii}$ as
\begin{eqnarray}
&&\hspace*{-0.9in}\mu_{D_{1}}\frac{d D_{1,ii}(t)}{dt}=-{(\lambda_{SM} \beta) (\|{\vM_{H}}_{i,:}\|_{\vD_1(t)}^2-\|{\vW_{HX}}_{i,:}\|^2_2)-(1 - \lambda_{SM})\frac{1}{D_{1,ii}(t)}}, \label{eq:D1update}
\end{eqnarray}
where $\mu_{D_{1}}$ corresponds to the learning time-constant.
  Similarly, for the gain variable $D_{2,ii}$,  the corresponding coefficient dynamics expression based on (\ref{eq:derD2ii}) is given by
\begin{eqnarray}
&&\hspace*{-0.9in}\mu_{D_{2}}\frac{d D_{2,ii}(t)}{dt}=-{(\lambda_{SM} \beta) (\|{\vM_{Y}}_{i,:}\|_{\vD_2(t)}^2-\|{\vW_{YH}}_{i,:}\|^2_2)-(1 - \lambda_{SM})\frac{1}{D_{2,ii}(t)}  },\label{eq:D2update}
\end{eqnarray}
where $\mu_{D_{2}}$ corresponds to the learning time-constant.

The inverses of the inner product weights $D_{l,ii}$ correspond to homeostatic gain parameters. The inspection of the gain updates in (\ref{eq:D1update}) and (\ref{eq:D2update})  leads to an interesting observation: whether the corresponding gain is going to increase or decrease depends on the balance between the norms of the recurrent  and the feedforward synaptic strengths, which are the statistical indicators of the recent output and input activations, respectively. Hence, the homeostatic gain of the neuron will increase (decrease) if the level of recent output activations falls behind (surpasses)  the level of recent input activations to balance input/output energy levels. The resulting dynamics align with the experimental homeostatic balance observed in biological neurons \citep{turrigiano1999homeostatic}. 

Based on the definitions of  the synaptic weight matrices in (\ref{eq:synapseweights}), we can write their updates as
%
\begin{align}
\vM_H(t+1)&=\gamma^2\vM_H(t)+(1-\gamma^2)\vh_{t}\vh_{t}^T, \quad 
\vM_Y(t+1)=\gamma^2\vM_Y(t)+(1-\gamma^2)\vy_{t}\vy_{t}^T,\label{eq:sydynWYH}\\
\vW_{HX}(t+1)&= \gamma^2\vW_{HX}(t)+(1-\gamma^2)\vh_{t}\vx_{t}^T,\quad 
\vW_{YH}(t+1)=\gamma^2\vW_{YH}(t)+(1-\gamma^2)\vy_{t}\vh_{t}^T.\nonumber  
\end{align}
These updates are local in the sense that they only depend on variables available to the synapse, and hence are biologically plausible.





\subsection{Det-max WSM neural network examples for more general source domains}
\label{sec:generalsourcedomains}
Det-Max Neural Network obtained for the  source domain  $\Pcal=\mathcal{B}_\infty$ in Section \ref{linfdynamics} can be extended to more general identifiable source domains 
by only changing the output dynamics. 
In Appendix \ref{sec:dersourcedomains}, we provide illustrative examples 
for different identifiable domain choices. Table \ref{tab:nnoutdynamicsexamples} summarizes the output dynamics obtained for the identifiable source domain examples in Figure \ref{fig:sourcedomains}.

\begin{table}[h!]
    \centering
\caption{Example source domains from Figure \ref{fig:sourcedomains} and the corresponding output dynamics.}
    \label{tab:nnoutdynamicsexamples}
    \vspace{0.1in} 

    \begin{tabular}{c c l l}
    \hline
    & {\bf Source Domain} & {\bf Output Dynamics} & {\bf Output } \\
    &                    &                       & {\bf Activation}\\
    \arrayrulecolor{black}\hline
    \arrayrulecolor{white}\hline 
   \multirow{3}{*}{ \begin{minipage}{.11\textwidth}
   \includegraphics[width=2.0cm, trim=6cm 12.0cm 5cm 11.0cm,clip]{picture_results_new/Binftyplusplot.pdf}
   \end{minipage}} & $\Pcal=\mathcal{B}_{\infty,+}$ &  & \multirow{3}{*}{ \begin{minipage}{.125\textwidth}
   \includegraphics[width=1.6cm, trim=11cm 5.0cm 18.5cm 3.5cm,clip]{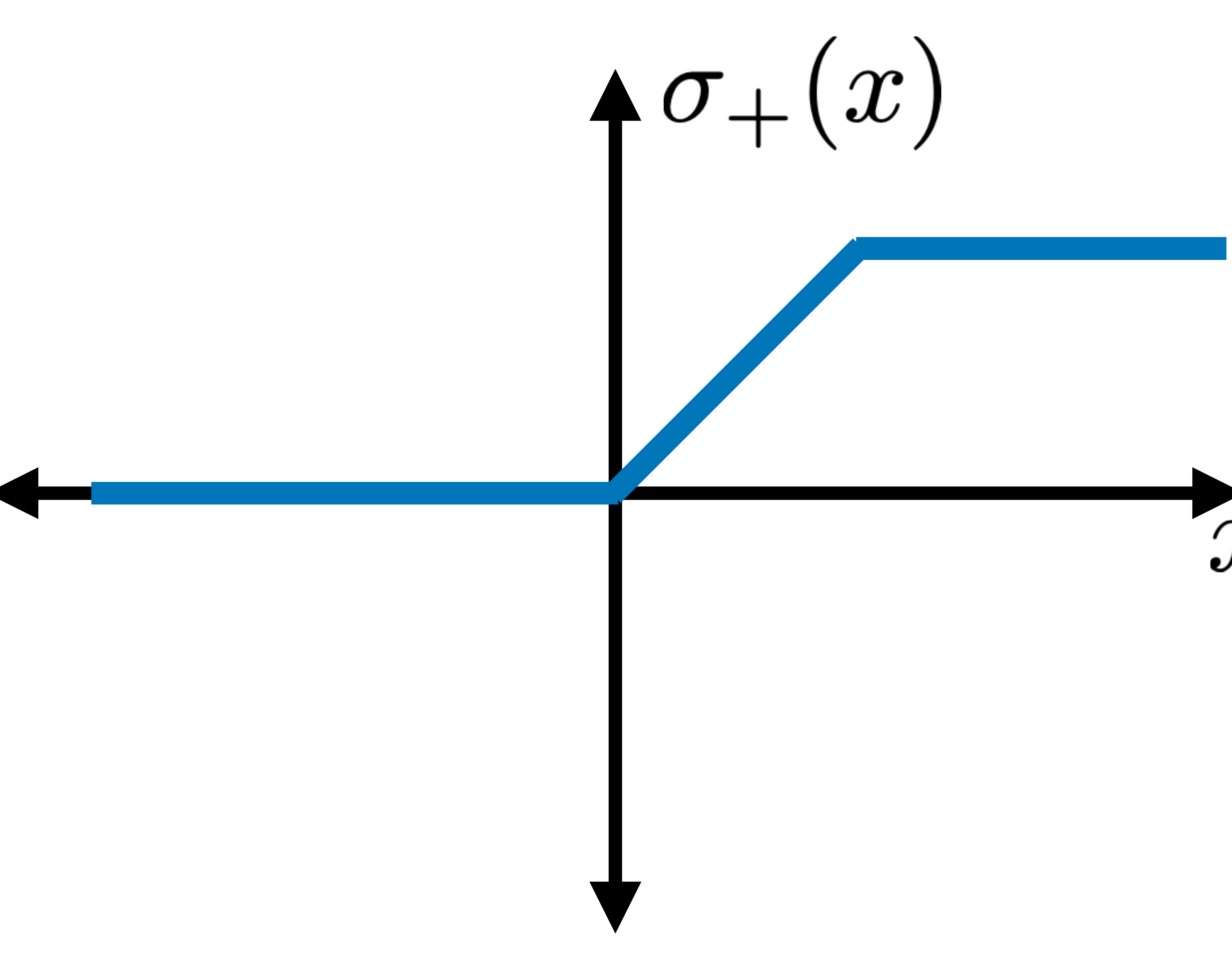}
   \end{minipage}}\\
    & Nonnegative & {\scriptsize $\vy_{t,i}(\tau)= \sigma_+\left(\frac{\vu_i(\tau)}{{\Gamma_Y}_{ii}(t){D_{2,ii}(t)}}\right)$ } & \\
   &  Anti-sparse & &  \\
    \arrayrulecolor{black}\hline 
    \multirow{3}{*}{ \begin{minipage}{.11\textwidth}
   \includegraphics[width=1.7cm, trim=7.5cm 12.7cm 6.5cm 12.0cm,clip]{picture_results_new/B1plot.pdf}
   \end{minipage}}& $\Pcal=\mathcal{B}_{1}$ &  {\scriptsize $\vy_{t,i}(\tau)= \text{ST}_{\lambda_1(\tau)}\left(\frac{u_i(\tau)}{\lambda_{SM}(1 - \beta){\Gamma_Y}_{ii}(t){D_{2,ii}(t)}} \right)$} &\multirow{3}{*}{ \begin{minipage}{.125\textwidth}
   \includegraphics[width=1.6cm, trim=11cm 5.0cm 18.5cm 3.5cm,clip]{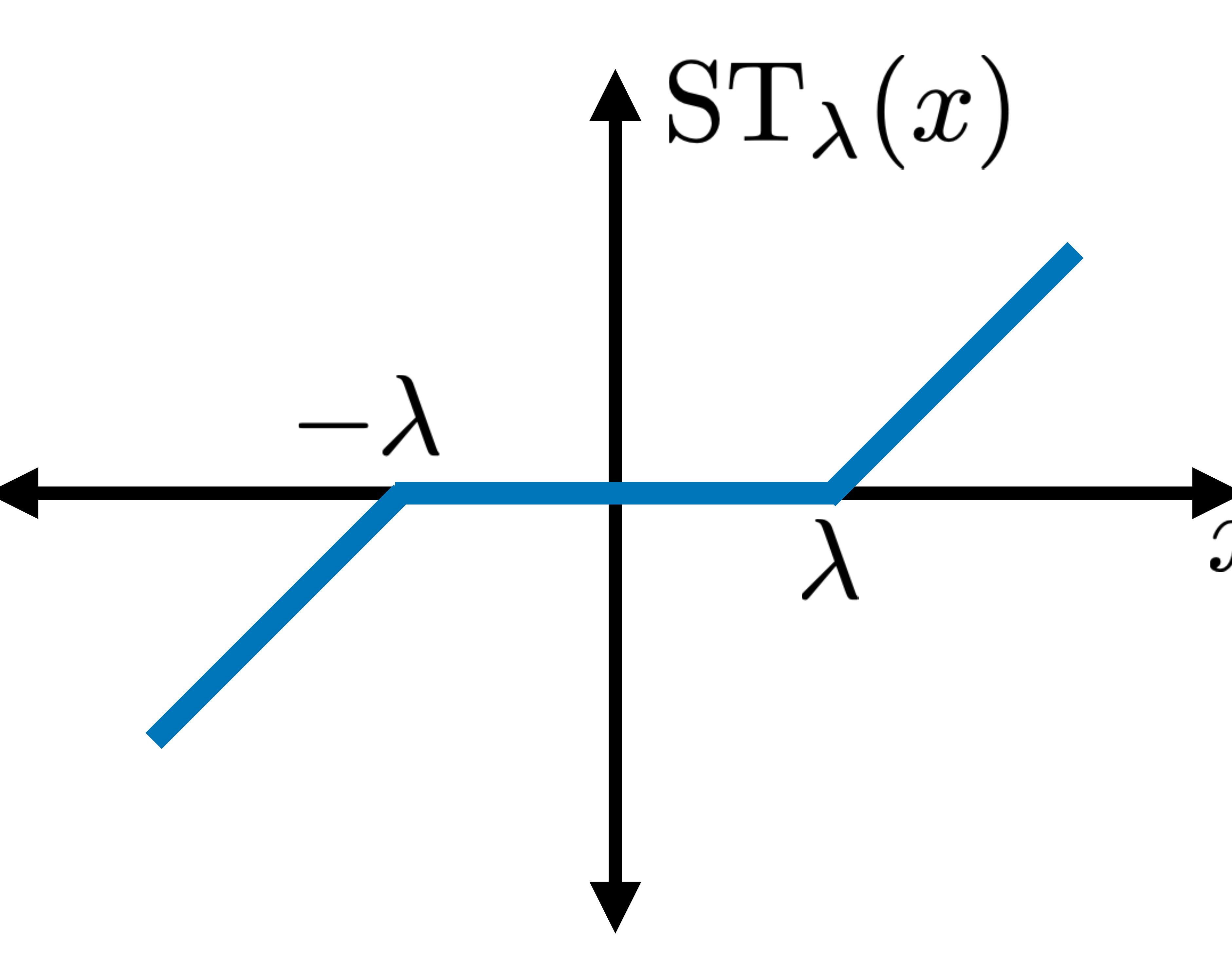}
   \end{minipage}}\\
    & Sparse &  {\scriptsize $\frac{da(\tau)}{d\tau}= -a(\tau)+\sum_{k=0}^n|\vy_{t,k}(\tau)|-1+\lambda_1(\tau)$},&\\ 
& &{\scriptsize $\lambda_1(\tau)=\text{ReLU}(a(\tau))$}&\\
    \hline
  \multirow{3}{*}{ \begin{minipage}{.05\textwidth}
   \includegraphics[width=1.3cm, trim=7.0cm 12.3cm 8.5cm 12.0cm,clip]{picture_results_new/B1plusplot.pdf}
   \end{minipage}}  & $\Pcal=\mathcal{B}_{1,+}$ & {\scriptsize $\vy_{t,i}(\tau)= \text{ReLU}\left(\frac{u_i(\tau)}{\lambda_{SM}(1 - \beta){\Gamma_Y}_{ii}(t){D_{2,ii}(t)}}- \lambda_1(\tau) \right)$} & \multirow{3}{*}{ \begin{minipage}{.125\textwidth}
   \includegraphics[width=1.6cm, trim=11cm 5.0cm 18.5cm 3.5cm,clip]{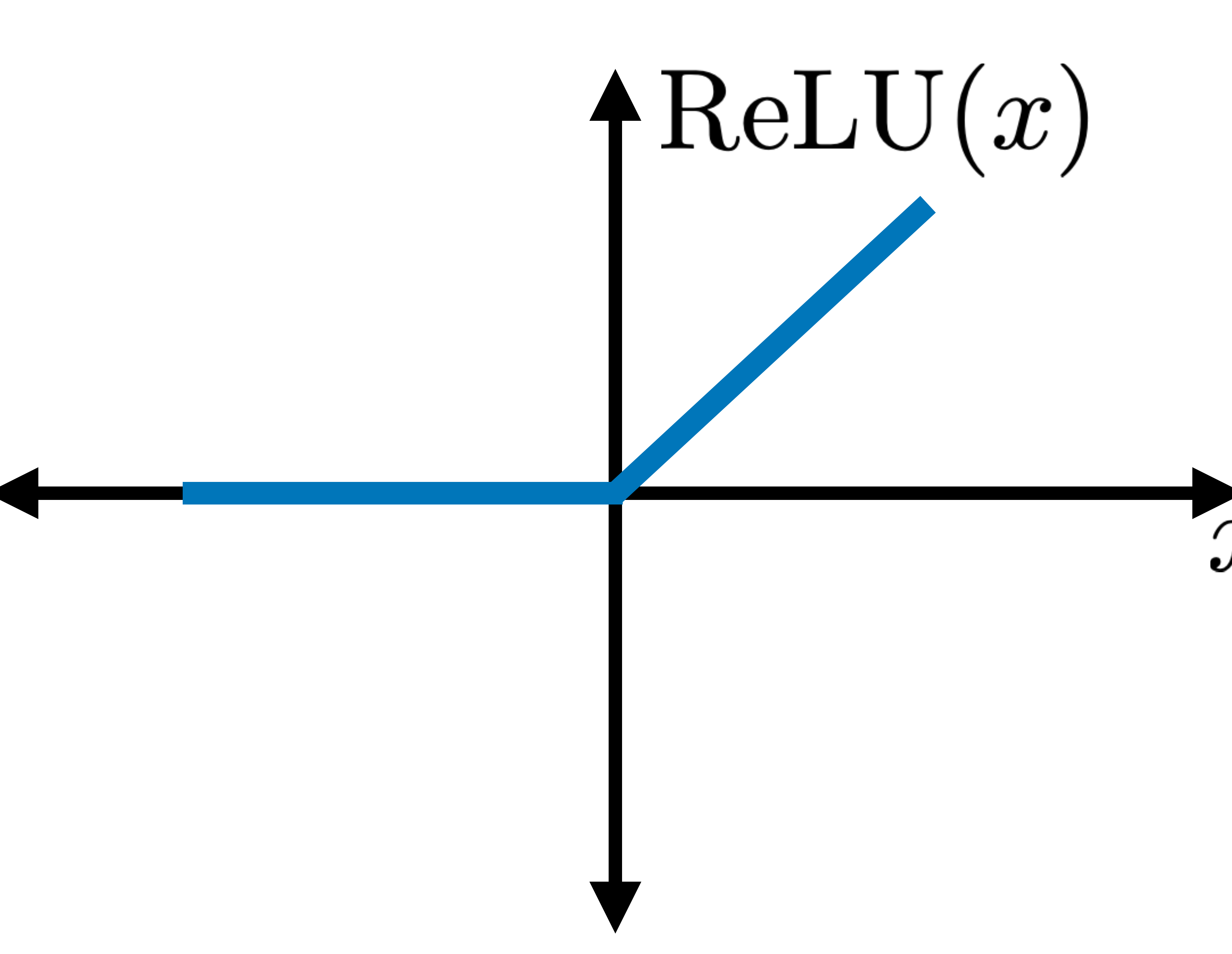}
   \end{minipage}}\\
   &  Nonnegative & {\scriptsize $\frac{da(\tau)}{d\tau}= -a(\tau)+\sum_{k=0}^n\vy_{t,k}(\tau)-1+\lambda_1(\tau)$}, & \\
   &  Sparse &  {\scriptsize $\lambda_1(\tau)=\text{ReLU}(a(\tau))$}&\\
    \hline 
 \multirow{3}{*}{ \begin{minipage}{.05\textwidth}
   \includegraphics[width=1.3cm, trim=7.0cm 12.3cm 8.5cm 12.0cm,clip]{picture_results_new/UnitSimplexplot.pdf}
   \end{minipage}} &  $\Pcal=\Delta$ & {\scriptsize $\vy_{t,i}(\tau)= \text{ReLU}\left(\frac{u_i(\tau)}{\lambda_{SM}(1 - \beta){\Gamma_Y}_{ii}(t){D_{2,ii}(t)}}- \lambda_1(\tau) \right)$} & \multirow{3}{*}{ \begin{minipage}{.125\textwidth}
   \includegraphics[width=1.6cm, trim=11cm 5.0cm 18.5cm 3.5cm,clip]{picture_results_new/relu.pdf}
   \end{minipage}} \\
   & Unit & {\scriptsize $\frac{d\lambda_1(\tau)}{d\tau}= -\lambda_1(\tau)+\sum_{k=0}^n\vy_{t,k}(\tau)-1+\lambda_1(\tau)$} & \\
   & Simplex & & \\
    \hline
    \end{tabular}
\end{table}

We can make the following observations on Table \ref{tab:nnoutdynamicsexamples}: (1) For sparse and unit simplex settings, there is an additional inhibitory neuron which takes input from all outputs and  whose activation is the inhibitory signal $\lambda_1(\tau)$, (2) The source attributes, which are globally defined over all sources, determine the activation functions at the output layer.  The proposed framework can be applied to any polytope described by (\ref{eq:polygeneral}) for which  the corresponding Det-Max neural network will contain combinations of activation functions in Table \ref{tab:nnoutdynamicsexamples} as illustrated in Figure \ref{fig:GenericNN}.(a).

\section{Numerical experiments}
\label{sec:numexp}
\looseness = -1 In this section, we  illustrate the applications of the proposed WSM-based BSS framework for both synthetic and natural sources. More details on these experiments and additional examples are provided in Appendix \ref{sec:numexpappendix}, including sparse dictionary learning. Our implementation  code is publicly available\footnote{ \href{https://github.com/BariscanBozkurt/Biologically-Plausible-DetMaxNNs-for-Blind-Source-Separation}{\footnotesize https://github.com/BariscanBozkurt/Biologically-Plausible-DetMaxNNs-for-Blind-Source-Separation}}.
\subsection{Synthetically correlated source separation}
\label{sec:numexpcorrcopula}

In order to illustrate the correlated source separation capability of the proposed WSM neural networks, we consider a numerical experiment with five copula-T distributed (uniform and correlated) sources.
For the correlation calibration matrix for these sources, we use Toeplitz matrix whose first row is $\begin{bmatrix}1 & \rho & \rho & \rho & \rho\end{bmatrix}$. The $\rho$ parameter determines the correlation level, and we considered the range $\left[0,0.6\right]$ for this parameter. These sources are mixed with a $10\times5$ random matrix with independent and identically distributed (i.i.d.) standard normal random variables.  
The mixtures are corrupted by i.i.d. normal noise corresponding to $30$dB signal-to-noise ratio (SNR) level. In this experiment, we employ the nonnegative-antisparse-WSM neural network (Figure \ref{fig:NNBinftyplus} in Appendix \ref{appsec:nnantisparse}) whose activation functions at the output layer are nonnegative-clipping functions, as the sources are nonnegative uniform random variables.

\begin{wrapfigure}{l}{0.55\textwidth}
\begin{center}
\includegraphics[trim = {1.3cm 0.0cm 1.9cm 1.0cm},clip,width=0.43\textwidth]{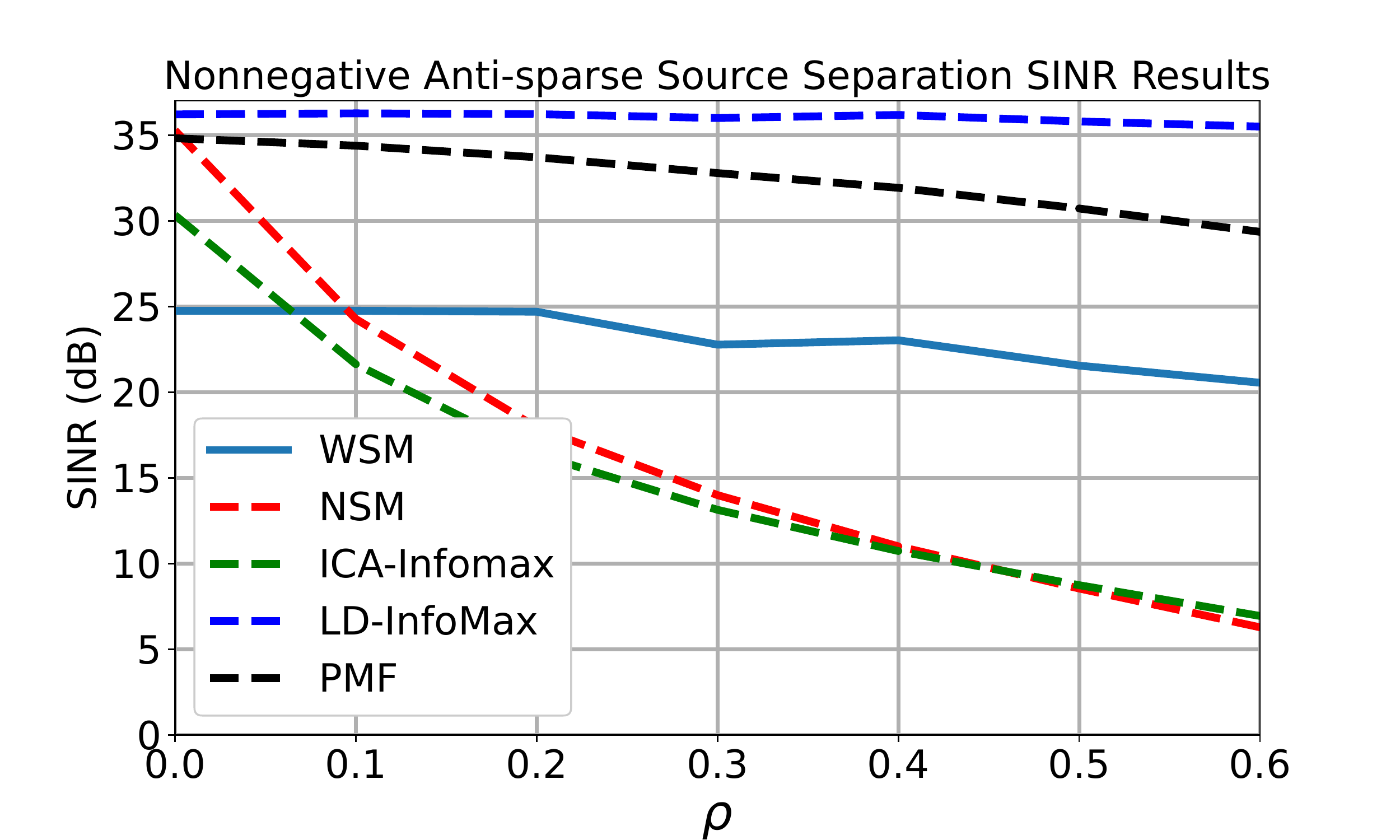}
\end{center}
\caption{The SINRs of WSM, NSM, ICA, PMF, and LD-InfoMax versus the correlation parameter $\rho$.}
\label{fig:Copulafig}
\end{wrapfigure}
We compared the signal-to-interference-plus-noise-power-ratio (SINR) performance of our algorithm with the NSM algorithm \citep{pehlevan2017blind}, Infomax ICA algorithm \citep{bell1995information}, as implemented in Python MNE Toolbox \citep{GramfortEtAl2013a}, LD-InfoMax algorithm \cite{erdogan2022information}, and PMF algorithm \cite{tatli2021tspsubmitted}. Figure \ref{fig:Copulafig} shows the SINR performances of these algorithms (averaged over 300 realizations) as a function of the correlation parameter $\rho$. We observe that our WSM-based network performs well despite correlations. In contrast, performance of NSM and ICA algorithms, which assume uncorrelated sources, degrade noticeably with increasing correlation levels. In addition, we note that the performance of batch Det-Max algorithms, i.e., LD-InfoMax and PMF, are also robust against source correlations. Furthermore, due to their batch nature, these algorithms typically achieved better performance results than our neural network with online-restriction, as expected.

\subsection{Image separation}
\label{sec:numexpimageseparation}
To further illustrate the correlated source separation advantage of our approach, 
we consider a natural image separation scenario. For this example, we have three RGB images with sizes $324 \times 432 \times 3$ as sources (Figure \ref{fig:imsources}). The sample Pearson correlation coefficients between the images are $\rho_{12} = 0.263$, $\rho_{13} = 0.066$, $\rho_{23} = 0.333$. We use a random $5 \times 3$ mixing matrix whose entries are drawn from i.i.d standard normal distribution. The corresponding mixtures are shown in Figure \ref{fig:immixtures}.


\begin{figure} [H]
\centering
\begin{tabular}{cccc}
\includegraphics[trim = {2cm 0cm 2cm 0cm},clip,width=0.33\textwidth]{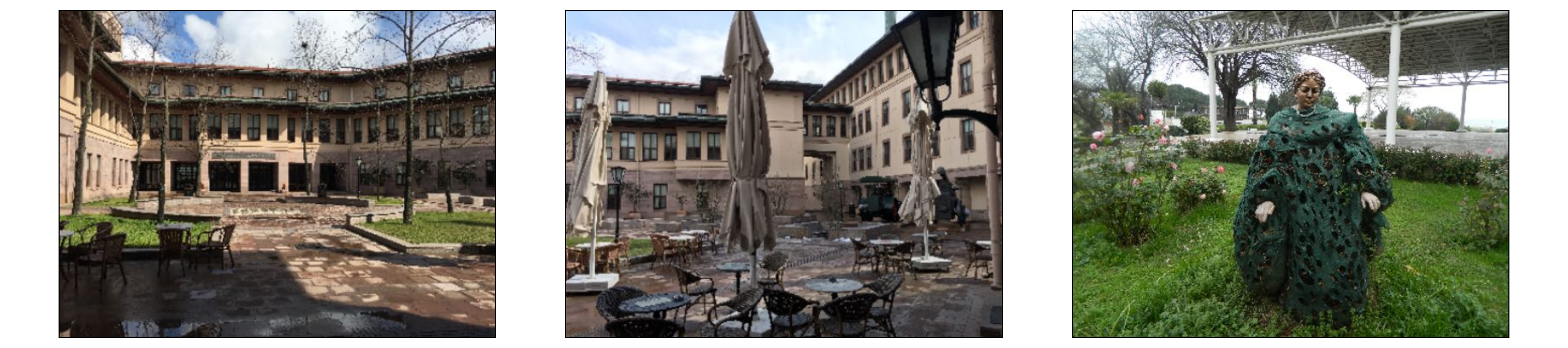} &
\includegraphics[trim = {1.2cm 0.6cm 1.2cm 0cm},clip,width=0.55\textwidth]{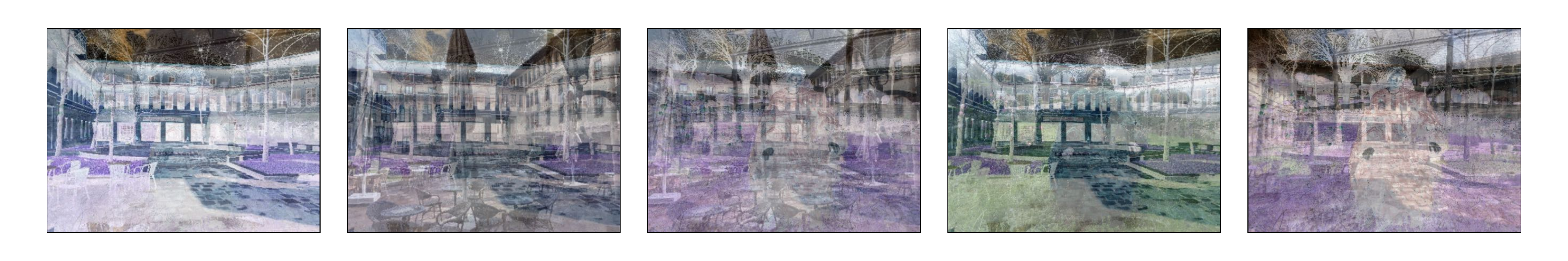}\\
\textbf{(a)}  & \textbf{(b)}  \\[6pt]
\end{tabular}
\begin{tabular}{cccc}
\includegraphics[trim = {2cm 0cm 2cm 0cm},clip,width=0.3\textwidth]{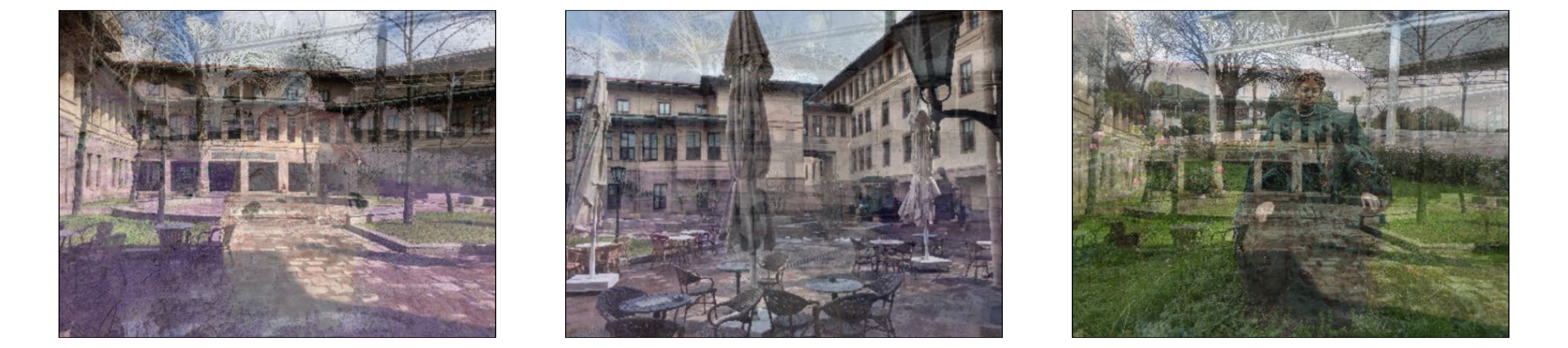} &
\includegraphics[trim = {2cm 0cm 2cm 0cm},clip,width=0.3\textwidth]{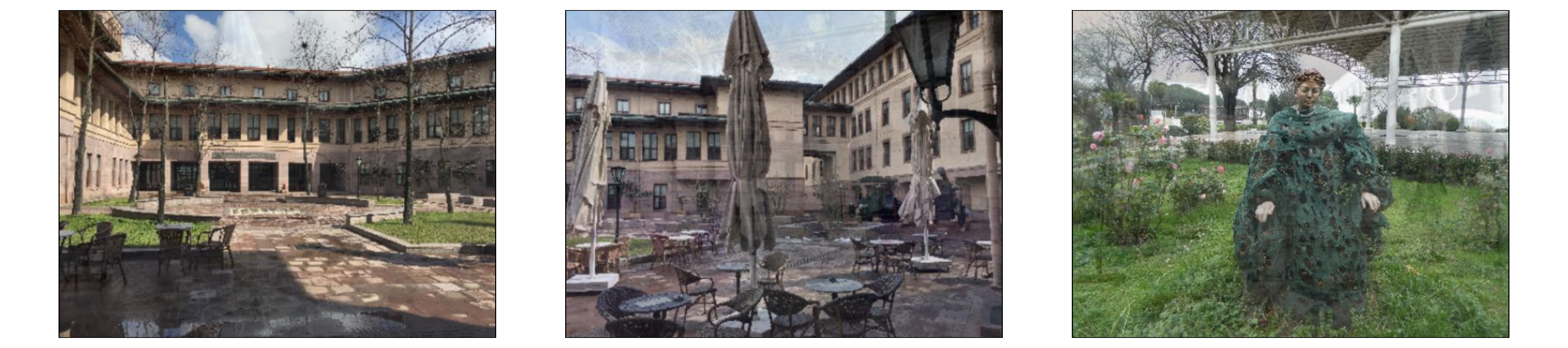} &
\includegraphics[trim = {2cm 0cm 2cm 0cm},clip,width=0.3\textwidth]{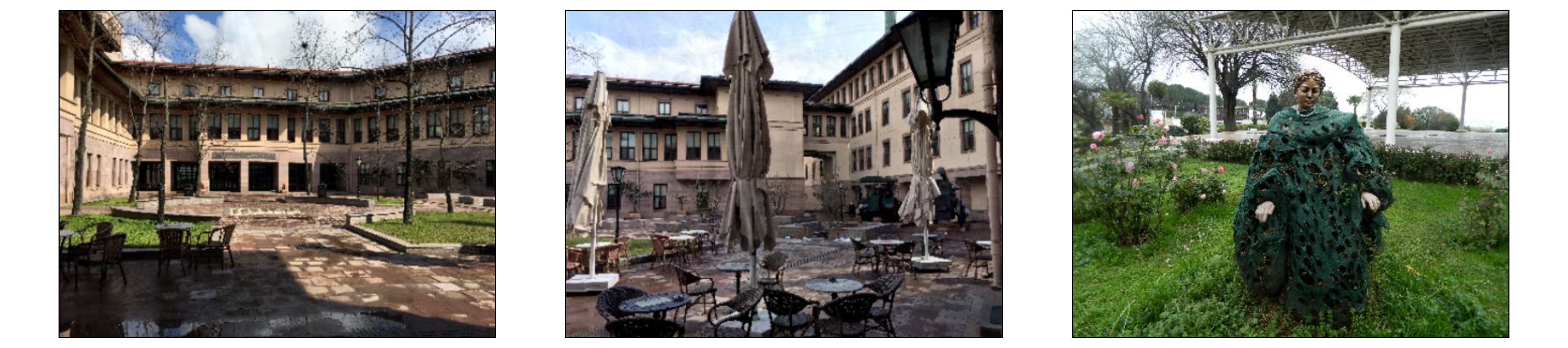} \\
\textbf{(c)} & \textbf{(d)}  & \textbf{(e)}  \\[6pt]
\end{tabular}
\caption{ \textbf{(a)} Original RGB images,
\textbf{(b)} Mixture RGB images,
\textbf{(c)} ICA outputs,
\textbf{(d)} NSM outputs (using pre-whitened mixtures),
\textbf{(e)} WSM outputs.}
\label{fig:imageseparation}
\extralabel{fig:imsources}{.(a)}
\extralabel{fig:immixtures}{.(b)}
\extralabel{fig:imoutputs}{.(c),(d),(e)}
\end{figure}

We  applied ICA, NSM and WSM algorithms to the mixtures. Figure \ref{fig:imoutputs} shows the corresponding outputs. High-resolution versions of all  images in this example are available in Appendix \ref{sec:imageseparaionappendix} in addition to the comparisons with LD-Infomax and PMF algorithms. The Infomax ICA algorithm's outputs have SINR level of  $13.92$dB, and this performance is perceivable as residual interference effects in the corresponding output images. The NSM algorithm achieves significantly higher SINR level of $17.45$dB and the output images visually reflect this better performance. Our algorithm achieves the best SINR level of $27.49$dB, and the corresponding outputs closely resemble the original source images.

\section{Discussion and Conclusion}
\label{sec:conclusion}

We proposed a general framework for generating biologically plausible neural networks that are capable of separating correlated sources from their linear mixtures, and demonstrated their successful correlated source separation capability through synthetic and natural sources. 

Another motivation for our work is to link network structure with function. This is a long standing goal of neuroscience, however examples where this link can be achieved are limited. Our work provides concrete examples where clear links between a network's architecture--i.e. number of interneurons, connections between interneurons and output neurons, nonlinearities (frequency-current curves)-- and its function, the type of source separation or feature extraction problem the networks solves, can be established. These links may provide insights and interpretations that might generalize to real biological circuits.

Our networks suffer from the same limitations of other recurrent biologically-plausible BSS networks. First, certain hyperparameters can significantly influence algorithm performance (see Appendix \ref{sec:hyperparameterablationappendix}). Especially, the inner product gains ($D_{ii}$) are sensitive to the combined choices of algorithm parameters, which require careful tuning. Second, the numerical experiments with our neural networks are relatively slow due to the recursive computations in  (\ref{eq:descv})-(\ref{eq:htnn}) and (\ref{eq:descu})-(\ref{eq:ytnn}) for hidden layer and output vectors, which is common to all biologically plausible recurrent source separation networks (see Appendix \ref{sec:complexityappendix}). This could perhaps be addressed by early-stopping the recursive computation \cite{minden2018biologically}.

\begin{ack}
This work/research was supported by KUIS AI Center Research Award. C. Pehlevan acknowledges support from the Intel Corporation.
\end{ack}
\bibliography{wsmbibfile}
\newpage

\section*{Checklist}


\begin{enumerate}

\item For all authors...
\begin{enumerate}
  \item Do the main claims made in the abstract and introduction accurately reflect the paper's contributions and scope?
    \answerYes{We noticeably express the  contributions and scope of the paper, i.e., providing a general framework for constructing biologically plausible neural networks for separating both independent and correlated sources while covering more generic source domains. }
  \item Did you describe the limitations of your work?
    \answerYes{See Section \ref{sec:conclusion}}
  \item Did you discuss any potential negative societal impacts of your work?
    \answerNA{We strongly believe that this work does not have any potential negative societal impacts.}
  \item Have you read the ethics review guidelines and ensured that your paper conforms to them?
    \answerYes{We have read the ethics review guidelines and our article conforms to them.}
\end{enumerate}

\item If you are including theoretical results...
\begin{enumerate}
  \item Did you state the full set of assumptions of all theoretical results?
    \answerYes{Section \ref{sec:BSSsetting} provides the underlying assumptions for our theoretical results.}
        \item Did you include complete proofs of all theoretical results?
    \answerYes{We provide the complete proof of Theorem \ref{th:1} in the appendix. Moreover, we include the detailed derivations of the proposed network dynamics in the appendix.}
\end{enumerate}

\item If you ran experiments...
\begin{enumerate}
  \item Did you include the code, data, and instructions needed to reproduce the main experimental results (either in the supplemental material or as a URL)?
    \answerYes{Yes, we include the code in the supplemental material with tutorial Python notebooks and multiple-run simulation scripts for our experiments in addition to README.md file.}
  \item Did you specify all the training details (e.g., data splits, hyperparameters, how they were chosen)?
    \answerYes{In the appendix, we provide the hyperparameter selections and training details of the experiments. We also share our code for numerical experiments.}
        \item Did you report error bars (e.g., with respect to the random seed after running experiments multiple times)?
    \answerYes{In the appendix, we provide the SINR convergence behaviors for several experiments based on multiple realizations with corresponding percentile envelopes. }
        \item Did you include the total amount of compute and the type of resources used (e.g., type of GPUs, internal cluster, or cloud provider)?
    \answerNo{We did not discuss the computational power and memory concerns since our method does not require an advanced computing system. An individual source separation experiment can run on a basic computer.}
\end{enumerate}

\item If you are using existing assets (e.g., code, data, models) or curating/releasing new assets...
\begin{enumerate}
  \item If your work uses existing assets, did you cite the creators?
    \answerYes{We refer to \citep{olshausen1997sparse} and link to the data in \ref{sec:sparsedictlearningappendix} for image patches we used for sparse dictionary learning.}
  \item Did you mention the license of the assets?
    \answerNo{We were unable to find the official license of the image patches we used for sparse dictionary learning. However, we cite the creators of the data \citep{olshausen1997sparse} and give the link for the image patches in \ref{sec:sparsedictlearningappendix}.} 
  \item Did you include any new assets either in the supplemental material or as a URL?
    \answerYes{Our code is the only new asset which we include in the supplemental material.}
  \item Did you discuss whether and how consent was obtained from people whose data you're using/curating?
    \answerNA{}
  \item Did you discuss whether the data you are using/curating contains personally identifiable information or offensive content?
    \answerNA{}
\end{enumerate}

\item If you used crowdsourcing or conducted research with human subjects...
\begin{enumerate}
  \item Did you include the full text of instructions given to participants and screenshots, if applicable?
    \answerNA{}
  \item Did you describe any potential participant risks, with links to Institutional Review Board (IRB) approvals, if applicable?
    \answerNA{}
  \item Did you include the estimated hourly wage paid to participants and the total amount spent on participant compensation?
    \answerNA{}
\end{enumerate}

\end{enumerate}


\pagebreak

\appendix

\section*{Appendix}
\renewcommand{\theequation}{A.\arabic{equation}}
\setcounter{equation}{0}

\section{On sufficient scattering condition for source vectors}
\label{appsec:suffscat}
The determinant maximization criterion used in matrix factorization frameworks is based on the assumption that the latent factors are sufficiently scattered in their presumed domain to somewhat reflect its shape. Both simplex structure matrix factorization and polytopic matrix factorization frameworks propose precise sufficient scattering conditions for the latent vectors to guarantee their identifiability under the  determinant maximization criterion. In this section, we briefly summarize these conditions.

\begin{figure}[h!]%
    \centering
    \subfloat[The unit simplex $\Delta$  and the second order cone $\mathcal{C}$.  ]{{\includegraphics[width=2.5cm, trim=0.0cm 0.0cm 0.0cm 0.0cm,clip]{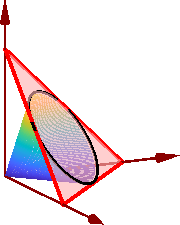} }}%
    \hspace{0.25in}
    \subfloat[ An illustration of the sufficient scattering condition for $\Delta$ samples.]{{\includegraphics[width=3.5cm, trim={2cm 0.0cm 2cm 1.1cm},clip]{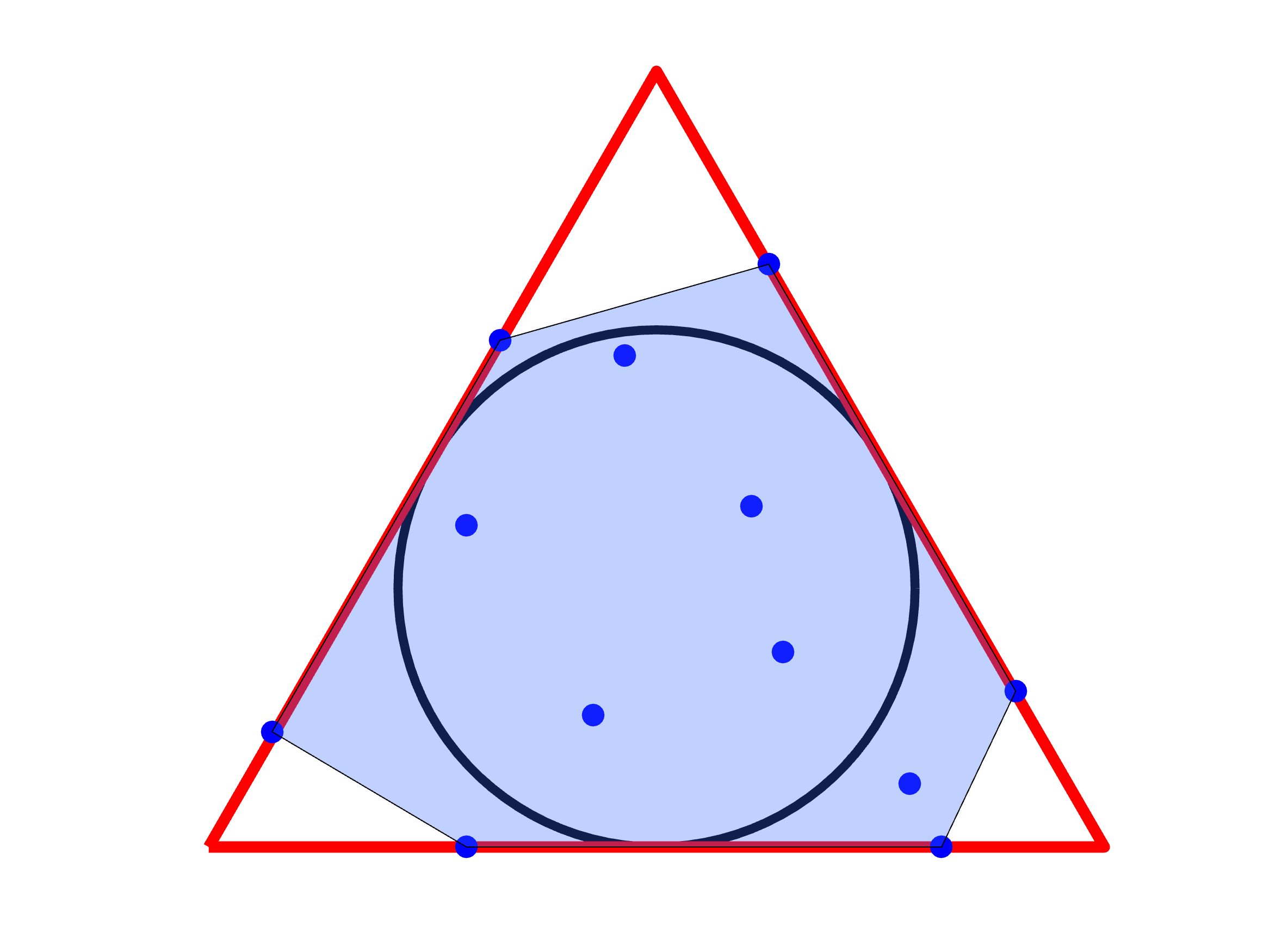} }}
    \hspace{0.25in}
        \subfloat[The polytope $\mathcal{B}_{1,+}$  and its MVIE $\mathcal{E}_{\mathcal{B}_1}$.  ]{{\includegraphics[width=2.2cm, trim=0.0cm 0.0cm 0.0cm 0.0cm,clip]{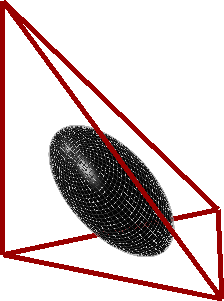} }}%
    \hspace{0.25in}
    \subfloat[ An illustration of the sufficient scattering condition for $\mathcal{B}_{1,+}$ samples.]{{\includegraphics[width=2.0cm, trim={0cm 0.0cm 0cm 0.0cm},clip]{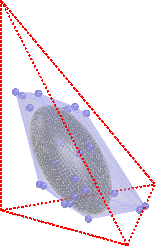} }}
    \caption{The geometry of sufficient scattering conditions for the unit simplex and polytopes illustrated in three-dimensions.}
    \label{fig:suffscat}
\end{figure}

\underline{\it Sufficient scattering condition for unit simplex sources:} 
An earlier latent factor identfiability assumption for SSMF required the inclusion of the vertices of the unit simplex in the generative latent vector samples \citep{donoho2003does}. This, so-called {\it separability} or {\it local dominance}, assumption was later replaced by a weaker {\it sufficiently scattered condition} (SSC) in \cite{fu2018identifiability}. This new condition uses the second order cone
\begin{eqnarray*}
\mathcal{C}=\{\mathbf{s} \hspace{0.02in} \vert \hspace{0.02in} \mathbf{1}^T\mathbf{s}\ge \sqrt{n-1}\|\mathbf{s}\|_2, \mathbf{s}\in\mathbb{R}^n\},
\end{eqnarray*}
which is illustrated in Figure \ref{fig:suffscat}.(a) together with the unit simplex $\Delta$, as a reference object for defining SSC. The SSC proposed in \cite{fu2018identifiability} for SSMF requires that conic hull of the simplex samples contains $\mathcal{C}$, i.e.,
\begin{eqnarray*}
\text{cone}(\{\vs_1, \vs_2, \ldots, \vs_t\})\supseteq \mathcal{C}.
\end{eqnarray*}
Let $\mathcal{A}_\Delta$ represent the affine hull of $\Delta$. Figure \ref{fig:suffscat}.(b) illustrates this requirement restricted to $\mathcal{A}_\Delta$: the red triangle is the boundary of $\Delta$,  the blue dots are sufficiently scattered samples from $\Delta$, the black circle and the blue polyhedral region are the boundary of $\mathcal{C}$ and the conic hull of sufficiently scattered samples from $\Delta$ restricted to $\mathcal{A}_\Delta$,   respectively. There is an additional requirement that the boundaries of  $\Delta$ and  $\text{cone}(\{\vs_1, \vs_2, \ldots, \vs_t\})\cap\mathcal{A}_\Delta$  intersect the boundary of $\mathcal{C}\cap \mathcal{A}_\Delta$ at the identical points.

\underline{\it Sufficient scattering condition for polytopic sources:} The reference \cite{tatli2021tspsubmitted} offers a similar SSC for polytopic sources for which the reference object for SSC is the maximum volume inscribed ellipsoid (MVIE), represented by $\mathcal{E}_\Pcal$ of the polytope $\Pcal$. Figure \ref{fig:suffscat}.(c) illustrates MVIE (the black ellipsoid) for the polytope selection $\Pcal=\mathcal{B}_{1,+}$ whose edges are the red lines. The SSC for polytopic sources require that convex hull of the polytopic samples contain the polytope's MVIE,i.e., 
\begin{eqnarray*}
\text{conv}(\{\vs_1, \vs_2, \ldots, \vs_t\})\supseteq \mathcal{E}_\Pcal.
\end{eqnarray*}
This condition is illustrated in Figure \ref{fig:suffscat}.(d), where the dots represent sufficiently scattered samples and the blue polyhedral region is their convex hull. The SSC in \cite{tatli2021tspsubmitted} further require that the boundaries of $\Pcal$ and $\text{conv}(\{\vs_1, \vs_2, \ldots, \vs_t\})$ intersect $\mathcal{E}_\Pcal$ at the identical points.


\section{Proof of theorem 1}
\label{app:thm1proof}
The proof of Theorem \ref{th:1} relies on the following lemma, which follows from equality constraints (\ref{eq:wsm3constr1}) and (\ref{eq:wsm3constr2}):

\begin{lemma}
Given the mixing model in Section \ref{sec:mixingmodel}, for sufficiently large sample sizes enabling full column-rank condition on  $\vX(t)$, the constraints in (\ref{eq:wsm3constr1}) and (\ref{eq:wsm3constr2}) define an arbitrary linear mapping between input and output vectors in the form
\begin{align}
\vy_i={\vW}(t) \vx_i, \qquad i \in \{1, \ldots, t \}.
\end{align}
where ${\vW}(t)$ is full-rank.\label{lemma:1}
\end{lemma}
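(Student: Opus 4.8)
The plan is to remove the diagonal gain matrices from the two weighted-similarity-matching constraints, reduce each to an ordinary Gram-matrix identity, invoke the elementary fact that two matrices with identical Gram matrices are related by an isometry of their column spaces, and then compose the two resulting maps and count ranks. Concretely, since $\vD_1(t)$ and $\vD_2(t)$ are diagonal with strictly positive entries by \eqref{eq:wsm3constr5}, they admit symmetric positive-definite square roots $\vD_1(t)^{1/2}$ and $\vD_2(t)^{1/2}$. Setting $\tilde{\vH}(t)=\vD_1(t)^{1/2}\vH(t)$ and $\tilde{\vY}(t)=\vD_2(t)^{1/2}\vY(t)$, constraints \eqref{eq:wsm3constr1} and \eqref{eq:wsm3constr2} become the plain equalities $\vX(t)^T\vX(t)=\tilde{\vH}(t)^T\tilde{\vH}(t)$ and $\vH(t)^T\vH(t)=\tilde{\vY}(t)^T\tilde{\vY}(t)$.

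Next I would prove and apply the key observation that equal Gram matrices force an isometric relation. For any coefficient vector $\mathbf{c}$ we have $\|\vX(t)\mathbf{c}\|_2^2=\mathbf{c}^T\vX(t)^T\vX(t)\mathbf{c}=\mathbf{c}^T\tilde{\vH}(t)^T\tilde{\vH}(t)\mathbf{c}=\|\tilde{\vH}(t)\mathbf{c}\|_2^2$, so $\vX(t)\mathbf{c}=0 \iff \tilde{\vH}(t)\mathbf{c}=0$; the columns obey identical linear-dependence relations and preserve all inner products, so the column-to-column map is a well-defined linear isometry on the column space of $\vX(t)$. Hence there is a matrix $\vU\in\mathbb{R}^{n\times m}$, acting isometrically on that column space, with $\tilde{\vH}(t)=\vU\vX(t)$, and similarly a matrix $\vV\in\mathbb{R}^{n\times n}$ with $\tilde{\vY}(t)=\vV\vH(t)$. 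Unwinding the square roots gives $\vH(t)=\vD_1(t)^{-1/2}\vU\vX(t)$ and $\vY(t)=\vD_2(t)^{-1/2}\vV\vH(t)$, whence $\vY(t)=\vW(t)\vX(t)$ with $\vW(t)=\vD_2(t)^{-1/2}\vV\vD_1(t)^{-1/2}\vU$; reading this columnwise yields $\vy_i=\vW(t)\vx_i$.

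It then remains to establish that $\vW(t)$ is full-rank, which I would do by propagating the rank of $\vX(t)$ through the chain. Under the mixing model $\vX(t)=\vA\vS(t)$ with $\vA$ full-rank and sufficiently many, sufficiently scattered samples, $\vX(t)$ attains its maximal rank $n$, so its columns span an $n$-dimensional space. Because $\vU$ is an isometry on that space, $\tilde{\vH}(t)=\vU\vX(t)$, and therefore $\vH(t)$, also has rank $n$; repeating the argument through $\vV$ and the invertible $\vD_2(t)^{-1/2}$ shows $\vY(t)$ has rank $n$ as well. Since $\vY(t)=\vW(t)\vX(t)$ carries an $n$-dimensional column space onto another with no loss of rank, $\vW(t)$ must be injective on that subspace, forcing $\mathrm{rank}(\vW(t))=n$; as $\vW(t)\in\mathbb{R}^{n\times m}$ with $m\ge n$, it is full-rank.

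I expect the main obstacle to be the rank bookkeeping rather than the isometry lemma itself. One must interpret the full-rank hypothesis on $\vX(t)$ correctly as the data matrix reaching rank $n$, the dimension of the mixing subspace, and then verify that neither isometry, nor the diagonal rescalings, nor the final composition can drop the rank below $n$. If any of these steps collapsed the signal subspace, $\vW(t)$ could fail to be invertible there and the linear relation, though still of the form $\vy_i=\vW(t)\vx_i$, would no longer support the ideal-separation conclusion that Theorem \ref{th:1} ultimately needs.
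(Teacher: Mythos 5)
Your proof is correct, and it reaches the same final factorization as the paper, $\vW(t)=\vD_2(t)^{-1/2}\vV\vD_1(t)^{-1/2}\vU$ with partial isometries $\vU,\vV$, but by a genuinely different route. The paper factors through the mixing model: it substitutes $\vX(t)=\vA\vS(t)$, takes the reduced SVD $\vA=\vU_\vA\bSig_\vA\vV_\vA^T$, deduces $\vh_i=\vD_1(t)^{-1/2}\vQ_1^T(t)\bSig_\vA\vV_\vA^T\vs_i$ for an orthogonal $\vQ_1(t)$, and only then rewrites this in terms of $\vx_i$ via $\bTheta_1(t)=\vU_\vA\vQ_1(t)$; full-rankness of $\vW(t)$ is then read off the resulting SVD-like form rather than argued separately. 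You instead absorb the diagonal gains into $\tilde{\vH}(t)=\vD_1(t)^{1/2}\vH(t)$ and $\tilde{\vY}(t)=\vD_2(t)^{1/2}\vY(t)$ and apply the equal-Gram-matrices-imply-partial-isometry fact directly between the data matrices, never invoking $\vA$ or its SVD except through the hypothesis $\mathrm{rank}(\vX(t))=n$; you then establish full-rankness by explicit rank propagation (rank of a Gram matrix equals rank of the matrix, invertible diagonals preserve rank, and $\vY(t)=\vW(t)\vX(t)$ of rank $n$ forces $\mathrm{rank}(\vW(t))\ge n$). Your version is more elementary and self-contained, and it makes the full-rank claim airtight where the paper merely asserts it from the factored form. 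What the paper's detour through the SVD of $\vA$ buys is immediate set-up for Theorem \ref{th:1}: there one needs $\vG(t)=\vW(t)\vA$ and the identity $2\log(|\det(\vG(t))|)=-\log(\det(\vD_1))-\log(\det(\vD_2))+2\log(\det(\bSig_\vA))$, which falls out because $\bTheta_1(t)^T\vU_\vA=\vQ_1(t)^T$ and $\bTheta_2(t)$ are orthogonal. Your $\vU,\vV$ support the same computation, since $\vU$ restricted to the column space of $\vX(t)$ (equivalently, the range of $\vU_\vA$) is an isometry so that $\vU\vU_\vA$ is orthogonal, but that extra step would need to be spelled out before your lemma could be plugged into the theorem's proof.
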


\begin{proof}[Proof of Lemma \ref{lemma:1}]
 To see the relation between $\vh_i$ and $\vx_i$, note that mixing relation \eqref{eq:mixing} and equality constraint \eqref{eq:wsm3constr1} enforces $\vS(t)^T\vA^T\vA\vS(t) =\vH(t)^T\vD_1(t)\vH(t)$.
	Defining $\vA=\vU_\vA\bSig_\vA\vV_\vA^T$ as the reduced SVD decomposition for $\vA$ matrix with $\vU_\vA \in \mathbb{R}^{m \times n},\bSig_\vA\in\mathbb{R}^{n \times n}, \vV_\vA\in\mathbb{R}^{n \times n}$, we can write 
	$\vA^T\vA=\vV_\vA\bSig_\vA^2\vV_\vA^T$.
	For sufficiently large sample sizes enabling full rank $\vX(t)$, these imply $\vh_i=\vD_1(t)^{-1/2}\vQ^T_1(t)\bSig_\vA\vV_\vA^T \vs_i$, $i \in \{1, \ldots, t \}$,
for some real-orthogonal matrix $\vQ_1(t)$. From this expression, we can also write $\vh_i=\vD_1(t)^{-1/2}\vQ^T_1(t)\vU_\vA^T \vx_i = \vD_1(t)^{-1/2}\bTheta_1^T(t) \vx_i$, $i \in \{1, \ldots, t \}$, where $\bTheta_1(t)=\vU_\vA\vQ_1(t) \in\mathbb{R}^{m\times n}$ is a matrix with orthonormal columns.

The weighted inner product matching condition in \eqref{eq:wsm3constr2} implies that the output $\vy_i$'s  are related to the the slack vectors $\vh_i$'s through the relationship $\vy_i=\vD_2(t)^{-1/2}(t)\bTheta_2(t)  \vh_i$, $ i \in \{1, \ldots, t \}$,
where $\bTheta_2(t)$ is another real-orthogonal matrix. Consequently $
\vy_i= \vD_2(t)^{-1/2}\bTheta_2(t) \vD_1(t)^{-1/2} \bTheta^T_1(t) \vx_i$, $i \in \{1, \ldots, t \}$. Here, the multiplier of $\vx_i$ is in the form of a Singular Value Decomposition of a full rank matrix, i.e., $\bTheta_2(t) \vD_1(t)^{-1/2} \bTheta^T_1(t)$ which is left multiplied by a full rank diagonal matrix, i.e., $\vD_2(t)^{-1/2}$ . This implies that the equality constraints (\ref{eq:wsm3constr1}) and (\ref{eq:wsm3constr2}) in conjunction define an arbitrary linear mapping between input and output vectors, through inner product matching. 
%
%
\end{proof}
Now we can prove Theorem \ref{th:1}.
\begin{proof}[Proof of Theorem \ref{th:1}]
 By Lemma \ref{lemma:1},
$\vy_i=\vW(t) \vx_i$, $i=1,\ldots t$, where $\vW(t)$ admits the form $\vW(t)=\vD_2(t)^{-1/2}\bTheta_2(t) \vD_1(t)^{-1/2} \bTheta^T_1(t)$.
Therefore, using the mixture-source relationship $\vx_i=\vA\vs_i$, we can write $\vy_i=\vG(t)\vs_i$,
where $\vG(t)=\vW(t)\vA$ is the linear mapping relationship between outputs and sources. Plugging this into maximization objective in (\ref{eq:bssobjective}), we obtain
$\log(\det(\vY(t)\vY(t)^T))
=2\log(|\det(\vG(t))|)+\log(\det(\vS(t)\vS(t)^T))$.
 To proceed, we define $\vA=\vU_\vA\bSig_\vA\vV_\vA^T$ as the reduced SVD for  matrix $\vA$ to obtain
$\vG(t)=\vD_2(t)^{-1/2}\bTheta_2(t) \vD_1(t)^{-1/2}\bTheta_1(t)^T\vU_\vA\bSig_\vA\vV_\vA^T$. 
Consequently,
$2\log(|\det(\vG(t))|)=-\log(\det(\vD_1))-\log(\det(\vD_2))+2\log(\det(\bSig_\vA))$.
 As a result, maximizing the objective in (\ref{eq:bssobjective}) is equivalent to minimizing $\log(\det(\vD_1))+\log(\det(\vD_2))$ with some additional constant terms. Since $\vD_1$ and $\vD_2$ are diagonal, the equivalent function can be written as 
$\sum_{i=1}^n \log(D_{1,ii}(t))+\sum_{i=1}^n \log(D_{2,ii}(t))$, which is the objective function in (\ref{eq:wsm3objective}).
\end{proof}

\ifappinclude

 \section{Alternative WSM Optimization Settings}
  \label{sec:altopt}
We can pose the following optimization problem  as an alternative to (\ref{eq:wsm1optimization}):
\begin{mini!}[l]<b>
{\vY(t),\vH(t),  D_{11}(t), \ldots D_{nn}(t),\vD(t)} {\sum_{i=1}^n \log(D_{ii}(t))\label{eq:wsm2objective}}{\label{eq:wsm2optimization}}{}
\addConstraint{\vX(t)^T\vX(t) -\vH(t)^T\vD(t)^T\vH(t)=0}{\label{eq:wsm2constr1}}{}
\addConstraint{\vH(t)^T\vH(t)-\vY(t)^T\vY(t)=0}{\label{eq:wsm2constr2}}{}
\addConstraint{ \vy_i \in \mathcal{P}, i=1, \ldots, n}{\label{eq:wsm2constr3}}{}
\addConstraint{\vD(t)=\text{diag}(D_{11}(t), \ldots, D_{nn}(t))}{\label{eq:wsm2constr4}}{}
\addConstraint{D_{11}(t), D_{22}(t), \ldots, D_{nn}(t)>0.}{\label{eq:wsm2constr5}}{}
\end{mini!}
Note that the differences of the new setting relative to (\ref{eq:wsm1optimization}) are:
\begin{itemize}
	\item The weights ($D_{ii}(t)'s$) moved to the first equality constraint involving inputs. 
	\item The maximization in (\ref{eq:wsm1objective}) is replaced with minimization in (\ref{eq:wsm2objective}).
\end{itemize}
The modified equality constraints of the new setting imply,
\begin{eqnarray}
	\vh_i&=&\vD(t)^{-1/2}\bTheta_1^T(t)\vx_i \nonumber \\
	\vy_i&=& \bTheta_2(t) \vh_i \nonumber \\
			&=& \bTheta_2(t)\vD(t)^{-1/2}\bTheta_1^T(t) \vx_i, \hspace{0.2in} i \in \{1, \ldots, t \}, \nonumber
\end{eqnarray}
where $\bTheta_2(t)\in\mathbb{R}^{n\times n}$ is unitary and $\bTheta_1 \in \mathbb{R}^{m \times n}$ is a matrix with orthonormal columns, as in previous case.
Therefore, the new constraint set still defines an arbitrary full rank map between inputs and outputs. The only difference is that the singular values of the corresponding matrix are equal to square roots of inverse weights. Therefore, for {\it Setting II}, 
\begin{eqnarray}
\vG(t)=\bTheta_2(t) \vD(t)^{-1/2}\vQ^T_1(t)\bSig_\vA\vV_\vA^T, \nonumber
\end{eqnarray}
and therefore, the maximization of $\log(|\det(\vG(t))|)$ is equivalent to
 the minimization of $\sum_{i=1}^n \log(D_{ii}(t))$.

As the final twist, we introduce the following modification on (\ref{eq:wsm2optimization}) to obtain a cascade of homogeneous blocks:
\begin{mini!}[l]<b>
{\substack{\vY(t),\vH(t),  D_{1,11}(t), \ldots D_{1,nn}(t),\vD_1(t)\\
 D_{2,11}(t), \ldots D_{2,nn}(t),\vD_2(t)}} {\sum_{i=1}^n \log(D_{ii}(t))\label{eq:wsm3objective}}{\label{eq:wsm3optimization}}{}
\addConstraint{\vX(t)^T\vX(t) -\vH(t)^T\vD_1(t)\vH(t)=0}{\label{eq:wsm3constr1}}{}
\addConstraint{\vH(t)^T\vH(t)-\vY(t)^T\vD_2(t)\vY(t)=0}{\label{eq:wsm3constr2}}{}
\addConstraint{ \vy_i \in \mathcal{P}, i=1, \ldots, n}{\label{eq:wsm3constr3}}{}
\addConstraint{\vD_l(t)=\text{diag}(D_{l,11}(t), \ldots, D_{l,nn}(t)), \hspace{0.1in} l=1,2}{\label{eq:wsm3constr4}}{}
\addConstraint{D_{l,11}(t), D_{l,22}(t), \ldots, D_{l,nn}(t)>0,  \hspace{0.1in} l=1,2}{\label{eq:wsm3constr5}}{}
\end{mini!}
The new optimization problem in (\ref{eq:wsm3optimization}) has inner product weights both for the first and second equality constraints in (\ref{eq:wsm3constr1}) and (\ref{eq:wsm3constr2}), respectively. In this case, the overall linear map from mixtures to the separator outputs can be written as
\begin{eqnarray}
\vW(t)=\vD_1(t)^{-1/2}\bTheta_2(t)\vD_2(t)^{-1/2}\bTheta_1^T(t), \nonumber
\end{eqnarray}
which still defines a general invertible linear map.

For all the proposed  settings, if the original source samples are sufficiently scattered in the constraint set $\Pcal$, the perfect separation condition would be achieved. For the rest of the article, we will  base our treatment on the optimization setting in (\ref{eq:wsm3optimization}). In fact, the next section, we propose a neurodynamical implementation for this setting.

\fi


\section{Derivations}

\subsection{The simplification of the similarity matching cost functions}
\label{sec:smcsimplify}
In this section, we provide the simplification of the similarity matching cost functions $J_1$ and $J_2$ in Section \ref{sec:onlinewsm}, by preserving only the quadratic terms that are relevant to online optimization with respect to $\vh_t$ and $\vy_t$.

	Using the matrix partitions $\bwX(t)=\left[\begin{array}{cc} \gamma \bwX(t-1) & \vx_t\end{array}\right]$ and $\bwH(t)=\left[\begin{array}{cc} \gamma \bwH(t-1) & \vh_t\end{array}\right]$, we can write $J_1(\vH(t),\vD_1(t))$  more explicitly as
	\begin{eqnarray*}
		&&	J_{1}(\vH(t),\vD_1(t)) \nonumber \\
		&&={\scriptsize \frac{\gamma^2}{2\tau^2}\left\|\left[\begin{array}{cc} \gamma \bwX(t-1)^T\bwX(t-1)- \gamma \bwH(t-1)^T\vD_1(t) \bwH(t-1)& \bwX(t-1)^T\vx_{t}-\bwH(t-1)^T\vD_1(t) \vh_{t} \\
			\vx_{t}^T\bwX(t-1)-\vh_{t}^T\vD_1(t) \bwH(t-1)  & \frac{ \|\vx_{t}\||_2^2-\|\vh_{t}\|_{\vD_1(t)}^2}{\gamma}   \end{array}\right]\right\|_{F}^2}\\
		&&=\frac{\gamma^4}{\tau^2} \|\bwX(t-1)^T\bwX(t-1)\|_{F}^2 +\frac{\gamma^4}{\tau^2}\|\bwH(t-1)^T\vD_1(t) \bwH(t-1) \|_{F}^2 \\
		&& -\frac{2\gamma^4}{\tau^2} \text{Tr}(\bwH(t-1)^T\vD_1(t) \bwH(t-1)\bwX(t-1)^T\bwX(t-1)) \\
		&&+ \frac{2\gamma^2}{\tau^2} \|\vx_{t}^T\bwX(t-1)\|_{F}^2+\frac{2\gamma^2}{\tau^2} \|\vh_{t}^T\vD_1(t) \bwH(t-1)\|_{F}^2\\
		&&-\frac{4\gamma^2}{\tau^2} \vh_{t}^T\vD_1(t) \bwH(t-1) \bwX(t-1)^T\vx_{t}+\frac{1}{\tau^2} ( \|\vx_{t}\|^4+\|\vh_{t}\|_{\vD_1(t)}^4-2 \|\vx_{t}\|^2\|\vh_{t}\|_{\vD_1(t)}^2 ).
	\end{eqnarray*}
By keeping only the relevant part of this cost function for online optimization with respect to $\vh_t$,  by scaling with $\tau/\gamma^2$, and ignoring the small final term,  we obtain the effective 
online cost function corresponding to $J_1$ as 
\begin{eqnarray*}
	c_1(\vh_t)&=&2\vh_t^T\vD_1\vM_H(t)\vD_1(t)\vh_t-4\vh_t^T\vD_1(t)\vW_{HX}(t)\vx_t,
\end{eqnarray*}
where
\begin{eqnarray}
\vM_H(t)=\frac{1}{\tau} \bwH(t-1)\bwH(t-1)^T=\frac{1}{\tau}\sum_{k=1}^{t-1}(\gamma^2)^{t-1-k}\vh_k\vh_{k}^T \label{eq:MH}\\ 
\vW_{HX}(t)=\frac{1}{\tau} \bwH(t-1)\bwX(t-1)^T=\frac{1}{\tau}\sum_{k=1}^{t-1}(\gamma^2)^{t-1-k}\vh_k\vx_{k}^T. \label{eq:WHX}
\end{eqnarray}
If we apply the same procedure to $J_2(\vH(t),\vD_2(t),\vY(t))$:
	\begin{eqnarray*}
		&&	J_{2}(\vH(t),\vD_2(t), \vY(t)) \nonumber \\
		&&={\scriptsize \frac{\gamma^2}{2\tau^2}\left\|\left[\begin{array}{cc} \gamma \bwH(t-1)^T\bwH(t-1)- \gamma \bwY(t-1)^T\vD_2(t) \bwY(t-1)& \bwH(t-1)^T\vh_{t}-\bwY(t-1)^T\vD_2(t) \vy_{t} \\
			\vh_{t}^T\bwH(t-1)-\vy_{t}^T\vD_2(t) \bwY(t-1)  & \frac{ \|\vh_{t}\||_2^2-\|\vy_{t}\|_{\vD_2(t)}^2}{\gamma}   \end{array}\right]\right\|_{F}^2}\\
		&&=\frac{\gamma^4}{\tau^2} \|\bwH(t-1)^T\bwH(t-1)\|_{F}^2 +\frac{\gamma^4}{\tau^2}\|\bwY(t-1)^T\vD_2(t) \bwY(t-1) \|_{F}^2 \\
		&& -\frac{2\gamma^4}{\tau^2} \text{Tr}(\bwY(t-1)^T\vD_2(t) \bwY(t-1)\bwH(t-1)^T\bwH(t-1)) \\
		&&+ \frac{2\gamma^2}{\tau^2} \|\vh_{t}^T\bwH(t-1)\|_{F}^2+\frac{2\gamma^2}{\tau^2} \|\vy_{t}^T\vD_2(t) \bwY(t-1)\|_{F}^2\\
		&&-\frac{4\gamma^2}{\tau^2} \vy_{t}^T\vD_2(t) \bwY(t-1) \bwH(t-1)^T\vh_{t}+\frac{1}{\tau^2} ( \|\vh_{t}\|^4+\|\vy_{t}\|_{\vD_2(t)}^4-2 \|\vh_{t}\|^2\|\vy_{t}\|_{\vD_2(t)}^2 ).
	\end{eqnarray*}
Similar to $J_1$, we can simplify the part of the $J_2$ cost function that is   dependent on $\vh_t$ and $\vy_t$  as
\begin{eqnarray*}
	c_2(\vh_t,\vy_t)&=&	2\vy_t^T\vD_2(t)\vM_Y(t)\vD_2(t)\vy_t-4\vy_t^T\vD_2(t)\vW_{YH}(t)\vh_t + 2\vh_t^T\vM_H(t)\vh_t,
\end{eqnarray*}
where
\begin{eqnarray}
\vW_{YH}(t)&=&\frac{1}{\tau} \bwY(t-1)\bwH(t-1)^T=\frac{1}{\tau}\sum_{k=1}^{t-1}(\gamma^2)^{t-1-k}\vy_k\vh_{k}^T, \label{eq:WYH}\\
\vM_Y(t)&=&\frac{1}{\tau} \bwY(t-1)\bwY(t-1)^T=\frac{1}{\tau}\sum_{k=1}^{t-1}(\gamma^2)^{t-1-k}\vy_k\vy_{k}^T. \label{eq:MY}
\end{eqnarray}
As a result, we can write the effective online cost function  $\mathcal{J}$, corresponding to $\vh_t$ and $\vy_t$  as
\begin{eqnarray}
C(\vh_t,\vy_t)=\beta c_1(\vh_t)+(1-\beta)c_2(\vh_t,\vy_t).\label{eq:Ccost}
\end{eqnarray}


\subsection{Derivatives of the WSM cost function}
\label{sec:onlinewsmgradients}
In this section, we provide the expressions for the gradients of the online 
WSM cost function $\mathcal{J}$ in (\ref{eq:onlinecost}) to be used in the descent algorithm formulation in Section \ref{linfdynamics} and Appendix \ref{sec:dersourcedomains}. For the gradients with respect to $\vh_t$ and $\vy_t$, we use  $C(\vh_t,\vy_t)$ in \eqref{eq:Ccost}, which is the simplified version of $\mathcal{J}$, as derived in Section \ref{sec:smcsimplify}.
\begin{itemize}
	\item The (scaled) gradient with respect to $\vh_t$:
	\begin{eqnarray}
	&&\hspace*{-2.2cm}	\frac{1}{4}\nabla_{\vh_t}\mathcal{J}(\vH(t),\vD_1(t),\vD_2(t),\vY(t))=\frac{1}{4}\nabla_{\vh_t}C(\vh_t,\vy_t) \nonumber\\
	&&\hspace*{2.9cm}=((1-\beta )\vM_H(t)+\beta\vD_1(t)\vM_H(t)\vD_1(t))\vh_t\nonumber \\
	&&\hspace*{2.9cm}-\beta \vD_1(t)\vW_{HX}(t)\vx_t-(1-\beta)\vW_{YH}(t)^T\vD_2(t)\vy_t. \label{eq:gradht1}
	\end{eqnarray}
	By applying the decomposition $\vM_H(t)=\bar{\vM}_H(t)+\bGam_H(t)$, where
	\begin{eqnarray}
	{\bGam_H}(t)&=&\text{diag}({\vM_H}_{11}(t), {\vM_H}_{22}(t), \ldots, {\vM_H}_{dd}(t)),\nonumber
	\end{eqnarray}
	we can rewrite the gradient expression in (\ref{eq:gradht1}) as
	\begin{eqnarray}
	\frac{1}{4}\nabla_{\vh_t}C(\vh_t,\vy_t) 
&&  =\vv_t+ ((1-\beta )\bar{\vM}_H(t)+\beta\vD_1(t)\bar{\vM}_H(t)\vD_1(t))\vh_t\nonumber \\
&&-\beta \vD_1(t)\vW_{HX}(t)\vx_t-(1-\beta)\vW_{YH}(t)^T\vD_2(t)\vy_t. \label{eq:gradht}
	\end{eqnarray}
	In (\ref{eq:gradht}), we used the substitution,
	\begin{eqnarray}
   \vv_t= ((1-\beta ){\bGam}_H(t)+\beta\vD_1(t){\bGam}_H(t)\vD_1(t)))\vh_t. \label{eq:v}
	\end{eqnarray}
	\item The gradient with respect to $\vy_t$:
	\begin{eqnarray}
		\frac{1}{4}\nabla_{\vy_t}C(\vh_t,\vy_t)
	= (1-\beta)(-\vD_2(t)\vW_{YH}(t)\vh_t+\vD_2(t)\vM_Y(t)\vD_2(t)\vy_t).\nonumber
\end{eqnarray}	
Note that, since $\vD_2(t)$ is positive,
\begin{eqnarray}
	-\frac{1}{4(1-\beta)}\vD_2(t)^{-1}\nabla_{\vy_t}\mathcal{J}(\vh_t,\vy_t)= \vW_{YH}(t)\vh_t-\vM_Y(t)\vD_2(t)\vy_t \label{eq:ytdescent} 
\end{eqnarray}
is  a descent direction. Furthermore, by decomposing $\vM_Y(t)=\bar{\vM}_Y(t)+\bGam_Y(t)$, where
\begin{eqnarray}
{\bGam_Y}(t)&=&\text{diag}({\vM_Y}_{11}(t), {\vM_Y}_{22}(t), \ldots, {\vM_Y}_{dd}(t)),\nonumber
\end{eqnarray}
we can rewrite the the descent direction in (\ref{eq:ytdescent}) as
\begin{eqnarray}
	-\frac{1}{4(1-\beta)}\vD_2(t)^{-1}\nabla_{\vy_t}C(\vh_t,\vy_t)=-\vu_t+ \vW_{YH}(t)\vh_t-\bar{\vM}_Y(t)\vD_2(t)\vy_t, \label{eq:gradyt}
\end{eqnarray}
where we substituted 
\begin{eqnarray}
\vu_t=\bGam_Y(t)\vD_2(t)\vy_t. \label{eq:ut}
\end{eqnarray}
\def\hshift{-0.0in}
\item  The derivative with respect to $D_{1,ii}(t)$:
\begin{align}
&\frac{\partial \mathcal{J}(\vh_t,\vy_t,\vD_1(t),\vD_2(t))}{\partial D_{1,ii}(t)}\nonumber \\&\hspace*{\hshift}=\lambda_{SM}\beta\mbox{Tr}((\bwH(t)^T\mathbf{E}_{ii}\bwH(t))^T(\bwH(t)^T\vD_1(t)\bwH(t)-\bwX(t)^T\bwX(t)))+\frac{1-\lambda_{SM}}{D_{1,ii}(t)}\MoveEqLeft[50]  \nonumber \\
 &\hspace*{\hshift}=\lambda_{SM} \beta\mbox{Tr}(\bwH(t)_{i,:}^T\bwH(t)_{i,:}(\bwH(t)^T\vD_1(t)\bwH(t)-\bwX(t)^T\bwX(t)))+\frac{1-\lambda_{SM}}{D_{1,ii}(t)} \nonumber\\
 &\hspace*{\hshift}= \lambda_{SM}\beta(\bwH(t)_{i,:}\bwH(t)^T\vD_1(t)\bwH(t)\bwH(t)_{i,:}^T-\bwH(t)_{i,:}\bwX(t)^T\bwX(t))   \bwH(t)_{i,:}^T) +\frac{1-\lambda_{SM}}{D_{1,ii}(t)}\nonumber \\
 &\hspace*{\hshift}= \lambda_{SM} \beta(\|{\vM_{H}}_{i,:}\|_{\vD_1(t)}^2-\|{\vW_{HX}}_{i,:}\|^2_2)+\frac{1-\lambda_{SM}}{D_{1,ii}(t)}.\label{eq:derD1ii}
\end{align}

\item  The derivative with respect to $D_{2,ii}(t)$:
\begin{align}
&\frac{\partial \mathcal{J}(\vh_t,\vy_t,\vD_1(t),\vD_2(t))}{\partial D_{2,ii}(t)} \nonumber \\&\hspace*{\hshift}=\lambda_{SM}(1 - \beta)\mbox{Tr}((\bwY(t)^T\mathbf{E}_{ii}\bwY(t))^T(\bwY(t)^T\vD_2(t)\bwY(t)-\bwH(t)^T\bwH(t)))+ \frac{1-\lambda_{SM}}{D_{2,ii}(t)}\MoveEqLeft[50]  \nonumber \\
&\hspace*{\hshift}=\lambda_{SM}(1 - \beta)\mbox{Tr}(\bwY(t)_{i,:}^T\bwY(t)_{i,:}(\bwY(t)^T\vD_2(t)\bwY(t)-\bwH(t)^T\bwH(t)))+ \frac{1-\lambda_{SM}}{D_{2,ii}(t)} \nonumber\\
&\hspace*{\hshift}= \lambda_{SM}(1 - \beta) (\bwY(t)_{i,:}\bwY(t)^T\vD_2(t)\bwY(t)\bwY(t)_{i,:}^T-\bwY(t)_{i,:}\bwH(t)^T\bwH(t))   \bwY(t)_{i,:}^T) + \frac{1-\lambda_{SM}}{D_{2,ii}(t)}\nonumber \\
&\hspace*{\hshift}=  \lambda_{SM}(1 - \beta) (\|{\vM_{Y}}_{i,:}\|_{\vD_2(t)}^2-\|{\vW_{YH}}_{i,:}\|^2_2)+\frac{1-\lambda_{SM}}{D_{2,ii}(t)}. \label{eq:derD2ii}
\end{align}

\end{itemize}


\section{Det-max WSM neural networks for example source domains}
\label{sec:dersourcedomains}
The proposed Det-Max WSM framework is applicable to infinitely many source domains corresponding to different assumptions on the sources. 
In this section, we provide derivations and illustrations of WSM-based Det-Max neural networks for some selected source domains.
\subsection{Anti-sparse sources}
\label{appsec:antisparse}
Section \ref{linfdynamics} covers the derivation of the network dynamics and the learning rules for antisparse sources, i.e., the source domain selection of $\Pcal=\mathcal{B}_\infty$. If we summarize the dynamics equations obtained:

\underline{Update dynamics for the hidden layer $\vh_t$:}  
	\begin{eqnarray}
	\frac{d\vv(\tau)}{d\tau} &=&-\vv(\tau)- \lambda_{SM} [((1-\beta )\bar{\vM}_H(t)+\beta\vD_1(t)\bar{\vM}_H(t)\vD_1(t))\vh(\tau) \nonumber \\
	                                && \hspace*{-0.25in}+\beta \vD_1(t)\vW_{HX}(t)\vx(\tau)+(1-\beta)\vW_{YH}(t)^T\vD_2(t)\vy(\tau)]  \nonumber\\
 \vh_{t,i}(\tau)&=&\sigma_A\left(\frac{\vv_i(\tau)}{{\lambda_{SM}\Gamma_{H}}_{ii}(t)((1 - \beta)+\beta {D_{1,ii}(t)}^2)}\right) \hspace{0.1in} \text{for } i=1, \ldots n. \nonumber
\end{eqnarray}
where $\sigma_A(\cdot)$ is the clipping nonlinearity with level $A$.

\underline{Update dynamics for the output $\vy_t$:} 
\begin{eqnarray}
\frac{d\vu(\tau)}{d\tau}&=&-\vu(\tau)+ \vW_{YH}(t)\vh(\tau)-\bar{\vM}_Y(t)\vD_2(t)\vy(\tau)\nonumber\\
\vy_{t,i}(\tau)&=& \sigma_1\left(\frac{\vu_i(\tau)}{{\Gamma_Y}_{ii}(t){D_{2,ii}(t)}}\right),  \hspace{0.2in} \text{for } i=1, \ldots n, \nonumber
\end{eqnarray}
Figure \ref{fig:NNBinfty} shows the corresponding two-layer neural network.
\begin{figure}[ht]
\begin{center}
    {\includegraphics[width=0.95\columnwidth, trim={0cm 15cm 5cm 0cm},clip]{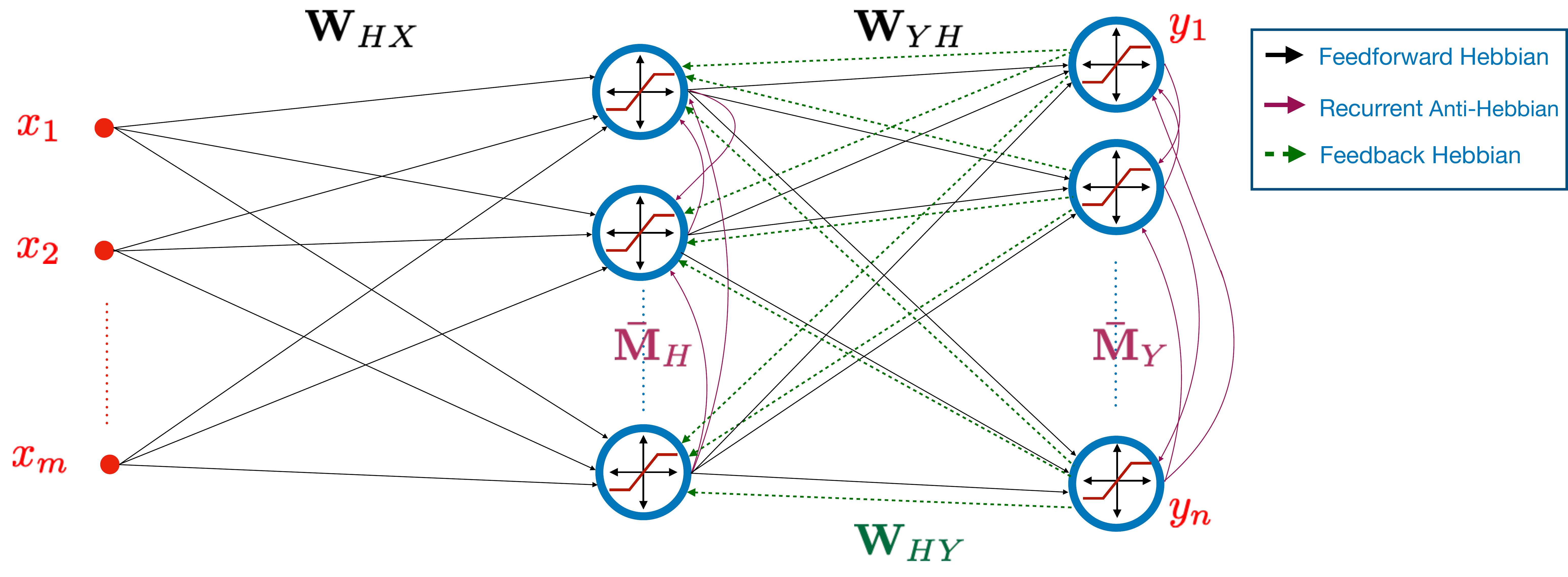}}
\end{center}
\caption{WSM Det-Max neural network for antisparse sources ($\Pcal=\mathcal{B}_\infty$).}
\label{fig:NNBinfty}
\end{figure}

\subsection{Nonnegative anti-sparse sources}
\label{appsec:nnantisparse}
For the case of nonnegative anti-sparse sources, the corresponding network is essentially the same as the antisparse case in Appendix \ref{appsec:antisparse}. The only difference is that the clipping activation functions at the output layer are replaced with nonnegative clipping function $\sigma_+(x)$ illustrated in Figure \ref{fig:nonnegclipping}.
\begin{figure}[ht]
\begin{center}
    {\includegraphics[width=0.6\columnwidth, trim={0cm 5cm 5cm 5cm},clip]{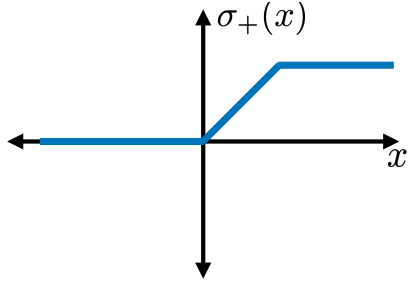}}
\end{center}
\caption{Nonnegative clipping function for elementwise projection to $\mathcal{B}_{\infty,+}$.}
\label{fig:nonnegclipping}
\end{figure}

As a result, we can write the network dynamics corresponding to the nonnegative anti-sparse case as 

\underline{Update dynamics for the hidden layer $\vh_t$:}  
	\begin{eqnarray}
	\frac{d\vv(\tau)}{d\tau} &=&-\vv(\tau)- \lambda_{SM} [((1-\beta )\bar{\vM}_H(t)+\beta\vD_1(t)\bar{\vM}_H(t)\vD_1(t))\vh(\tau) \nonumber \\
	                                && \hspace*{-0.25in}+\beta \vD_1(t)\vW_{HX}(t)\vx(\tau)+(1-\beta)\vW_{YH}(t)^T\vD_2(t)\vy(\tau)]  \nonumber\\
 \vh_{t,i}(\tau)&=&\sigma_A\left(\frac{\vv_i(\tau)}{{\lambda_{SM}\Gamma_{H}}_{ii}(t)((1 - \beta)+\beta {D_{1,ii}(t)}^2)}\right) \hspace{0.1in} \text{for } i=1, \ldots n, \nonumber
\end{eqnarray}
 
\underline{Update dynamics for the output $\vy_t$:} 
\begin{eqnarray}
\frac{d\vu(\tau)}{d\tau}&=&-\vu(\tau)+ \vW_{YH}(t)\vh(\tau)-\bar{\vM}_Y(t)\vD_2(t)\vy(\tau)\nonumber\\
\vy_{t,i}(\tau)&=& \sigma_1\left(\frac{\vu_i(\tau)}{{\Gamma_Y}_{ii}(t){D_{2,ii}(t)}}\right),  \hspace{0.2in} \text{for } i=1, \ldots n. \nonumber
\end{eqnarray}
The network corresponding to nonnegative anti-sparse sources is shown in Figure \ref{fig:NNBinftyplus}.
\begin{figure}[ht]
\begin{center}
    {\includegraphics[width=0.95\columnwidth, trim={0cm 15cm 5cm 0cm},clip]{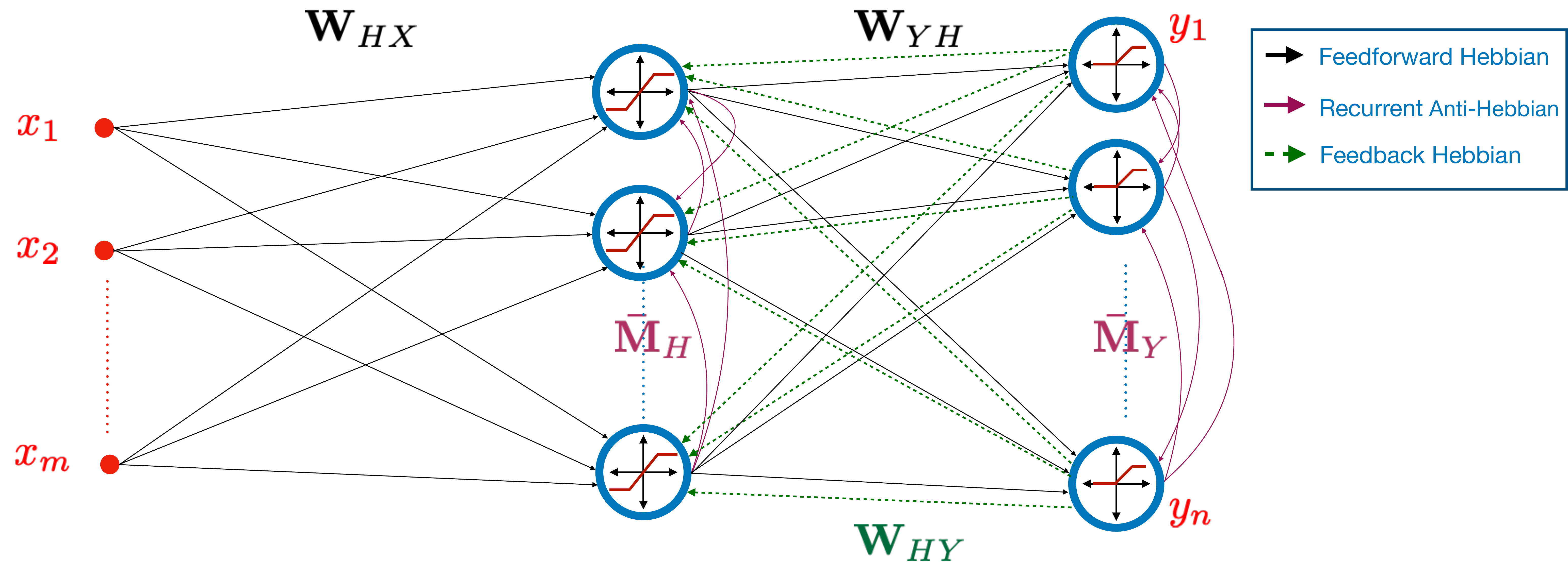}}
\end{center}
\caption{WSM Det-Max neural network for nonnegative anti-sparse sources ($\Pcal=\mathcal{B}_{\infty,+}$).}
\label{fig:NNBinftyplus}
\end{figure}

\subsection{  Nonnegative sparse sources}
\label{appsec:nl1dynamics}
For nonnegative sparse sources, i.e., $\mathcal{P}=\mathcal{B}_{1,+}$, we consider the following optimization setting:
\begin{eqnarray}
\underset{\vh_t,\vy_t}{\text{minimize}}& & \beta c_1(\vh_t)+(1-\beta)c_2(\vh_t,\vy_t) \nonumber \\ 
\mbox{subject to}& &\|\vy_t\|_1\le 1, \quad \vy_t\ge 0\label{eq:nonnegsparse2} 
\end{eqnarray}
for which the Lagrangian based reformulation can be written as
\begin{eqnarray}
\underset{\lambda_1\ge 0}{\text{maximize }}\underset{\vh_t,\vy_t}{\text{minimize}}& & \beta c_1(\vh_t)+(1-\beta)c_2(\vh_t,\vy_t)+\lambda_1(\|\vy_t\|_1-1)\nonumber 
\end{eqnarray}
The updates for $\vh_t$, gain variables $D_{1,ii},D_{2,ii}$ and the synaptic weights follow the equations provided in Section \ref{linfdynamics}. 

For the output component $\vy_t$, the corresponding cost function is an $\ell_1$ regularized quadratic cost function. Following the primal-dual approach in \citep{pehlevan2017clustering}, we can obtain the dynamic equations for output update as
\begin{eqnarray}
\frac{d\vu(\tau)}{d\tau}&=&-\vu(\tau)+ \lambda_{SM}(1 - \beta)[\vW_{YH}(t)\vh(\tau)-\bar{\vM}_Y(t)\vD_2(t)\vy(\tau)],\nonumber\\
\vy_{t,i}(\tau)&=& \text{ReLU}\left(\frac{\vu_i(\tau)}{\lambda_{SM}(1 - \beta){\Gamma_Y}_{ii}(t){D_{2,ii}(t)}}- \lambda_1(\tau) \right), \hspace{0.2in}\text{for } i=1, \ldots n, \nonumber
\end{eqnarray}
where
$\text{ReLU}(x,\lambda_1)$ is the rectified-linear unit mapping defined by $
\text{ReLU}(x)=\left\{  \begin{array}{cc} x & x>0, \\
                                     0 & \text{otherwise} \end{array} \right.\nonumber$.
 Based on the dual maximization, the Lagrangian variable $\lambda_1(\tau)$ is updated by
\begin{eqnarray}
\frac{da(\tau)}{d\tau}= -a(\tau)+\sum_{k=0}^n\vy_{t,k}(\tau)-1+\lambda_1(\tau), \qquad
\lambda_1(\tau)=\text{ReLU}(a(\tau)). \label{eq:inhibneur} 
\end{eqnarray}

According to the expressions obtained above, in addition to the hidden layer and the output layer neurons, there is an additional neuron corresponding to the Lagrangian variable $\lambda_1$ of whose dynamics is governed by (\ref{eq:inhibneur}). The corresponding neuron generates an inhibition signal for the output neurons, based on the total output activation. The corresponding network structure is shown in Figure  \ref{fig:NNBoneplus}.

\begin{figure}[ht]
\begin{center}
    {\includegraphics[width=0.95\columnwidth, trim={0cm 15cm 5cm 0cm},clip]{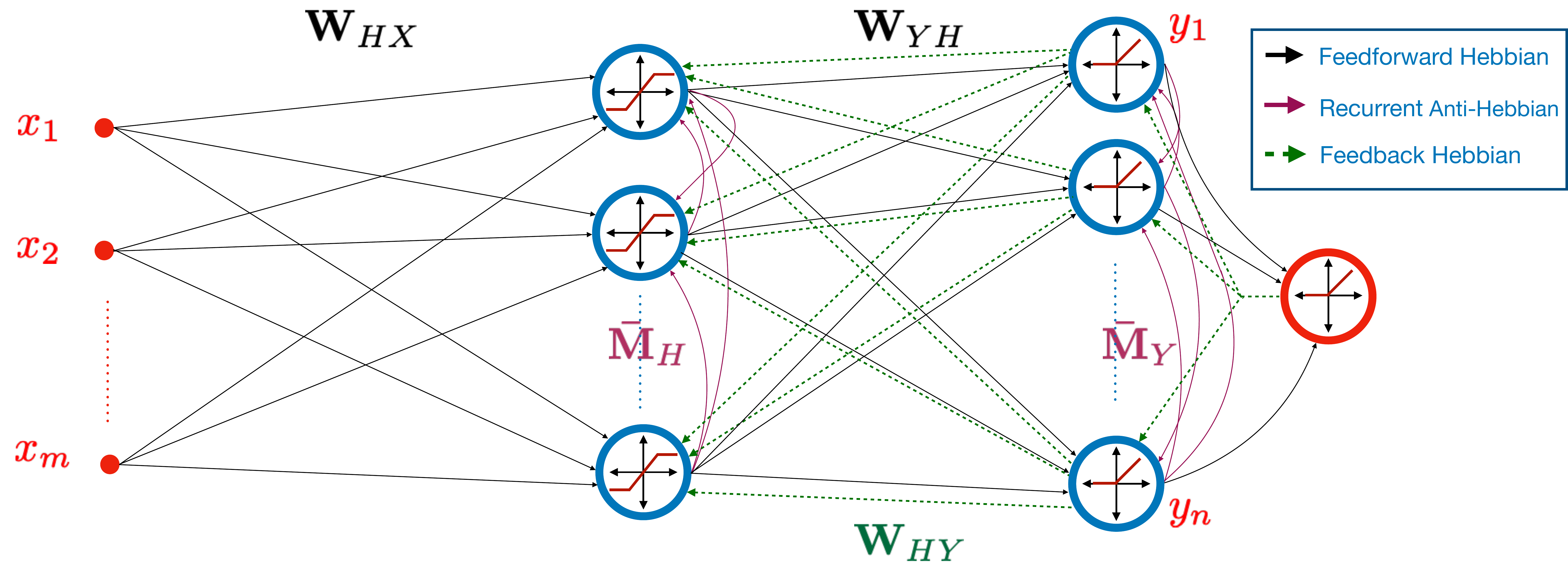}}
\end{center}
\caption{WSM Det-Max neural network for nonnegative sparse sources ($\Pcal=\mathcal{B}_{1,+}$).}
\label{fig:NNBoneplus}
\end{figure}

\subsection{Sparse sources}
\label{appsec:l1dynamics}
In the sparse source setting where $\Pcal=\mathcal{B}_1$, the only change relative to the nonnegative sparse case is the replacement of the ReLU output activation function with the soft thresholding function
\begin{eqnarray*}
ST_\lambda(x)=\left\{\begin{array}{cc}
                        0 & |x|\le \lambda \\
                        x-sign(x)\lambda & \mbox{otherwise}. \end{array}\right.
\end{eqnarray*}
Therefore, we can rewrite the output dynamics for $\Pcal=\mathcal{B}_1$ as 
\begin{eqnarray}
\frac{d\vu(\tau)}{d\tau}&=&-\vu(\tau)+ \lambda_{SM}(1 - \beta)[\vW_{YH}(t)\vh(\tau)-\bar{\vM}_Y(t)\vD_2(t)\vy(\tau)],\nonumber\\
\vy_{t,i}(\tau)&=& \text{ST}_{\lambda_1(\tau)}\left(\frac{\vu_i(\tau)}{\lambda_{SM}(1 - \beta){\Gamma_Y}_{ii}(t){D_{2,ii}(t)}} \right), \hspace{0.2in}\text{for } i=1, \ldots n, \nonumber\\
\frac{da(\tau)}{d\tau}&=& -a(\tau)+\sum_{k=0}^n|\vy_{t,k}(\tau)|-1+\lambda_1(\tau), \qquad
\lambda_1(\tau)=\text{ReLU}(a(\tau)). \nonumber
\end{eqnarray}

Figure \ref{fig:NNBone} illustrates the WSM based Det-Max neural network for sparse BSS. 

\begin{figure}[ht]
\begin{center}
    {\includegraphics[width=0.95\columnwidth, trim={0cm 15cm 5cm 0cm},clip]{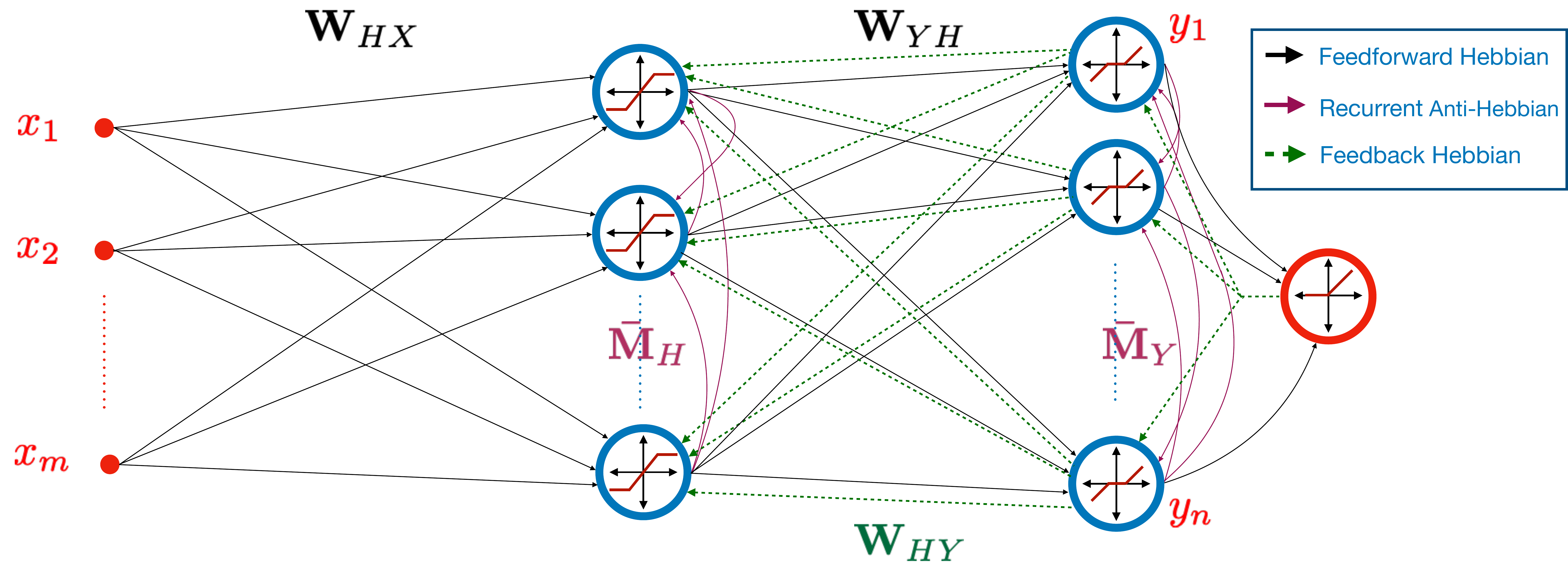}}
\end{center}
\caption{WSM Det-Max neural network for sparse sources ($\Pcal=\mathcal{B}_{1}$).}
\label{fig:NNBone}
\end{figure}

\subsection{Unit simplex sources}
\label{appsec:deltadynamics}
The unit simplex set $\Delta$ is a face of the polytope $\Pcal=\mathcal{B}_{1,+}$ which is the domain for nonnegative sparse sources. Therefore, we replace the $\ell_1$-norm inequality constraint in (\ref{eq:nonnegsparse2}) with the equality constraint to obtain the Det-Max WSM optimization problem for the unit simplex domain:
\begin{eqnarray}
\underset{\vh_t,\vy_t}{\text{minimize}}& & \beta c_1(\vh_t)+(1-\beta)c_2(\vh_t,\vy_t) \nonumber \\ 
\mbox{subject to}& &\|\vy_t\|_1= 1, \quad \vy_t\ge 0\nonumber
\end{eqnarray}
Therefore, for the Lagrangian based formulation
\begin{eqnarray}
\underset{\lambda_1}{\text{maximize }}\underset{\vh_t,\vy_t}{\text{minimize}}& & \beta c_1(\vh_t)+(1-\beta)c_2(\vh_t,\vy_t)+\lambda_1(\|\vy_t\|_1-1),\nonumber 
\end{eqnarray}
we no longer require $\lambda$ to be nonnegative. Therefore, for $\Pcal=\Delta$ only required change relative to $\mathcal{B}_{1,+}$ is the replacement of the  ReLU activation function of the rightmost inhibition neuron in Figure \ref{fig:NNBoneplus} with the linear activation. As a result, the output dynamics for the unit simplex sources can be written as
\begin{eqnarray}
\frac{d\vu(\tau)}{d\tau}&=&-\vu(\tau)+ \lambda_{SM}(1 - \beta)[\vW_{YH}(t)\vh(\tau)-\bar{\vM}_Y(t)\vD_2(t)\vy(\tau)],\nonumber\\
\vy_{t,i}(\tau)&=& \text{ReLU}\left(\frac{\vu_i(\tau)}{\lambda_{SM}(1 - \beta){\Gamma_Y}_{ii}(t){D_{2,ii}(t)}}- \lambda_1(\tau) \right), \hspace{0.2in}\text{for } i=1, \ldots n, \nonumber\\
\frac{d\lambda_1(\tau)}{d\tau}&=& -\lambda_1(\tau)+\sum_{k=0}^n\vy_k(\tau)-1+\lambda_1(\tau). \nonumber
\end{eqnarray}

Figure \ref{fig:NNdelta} shows the WSM-based Det-Max neural network for the unit-simplex sources.

\begin{figure}[ht]
\begin{center}
    {\includegraphics[width=0.95\columnwidth, trim={0cm 15cm 5cm 0cm},clip]{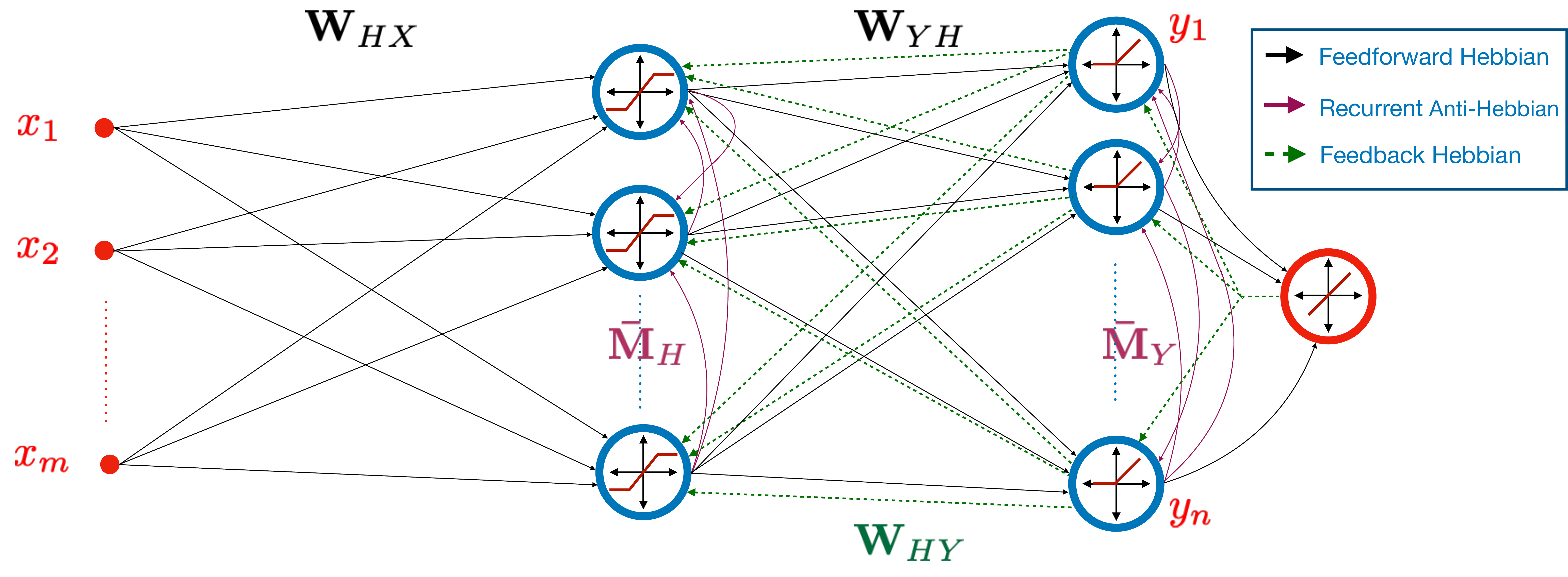}}
\end{center}
\caption{WSM Det-Max neural network for unit-simplex sources ($\Pcal=\Delta$).}
\label{fig:NNdelta}
\end{figure}

\subsection{Sources with mixed attributes}
\label{appsec:mixedattributesdynamics}
We consider the following polytope example provided in \cite{tatli2021tspsubmitted}
\begin{eqnarray}
\Pcal_{ex}=\left\{\mathbf{s}\in \mathbb{R}^3\ \middle\vert \begin{array}{l}   s_1,s_2\in[-1,1],s_3\in[0,1],\\ \left\|\left[\begin{array}{c} s_1 \\ s_2 \end{array}\right]\right\|_1\le 1,\left\|\left[\begin{array}{c} s_2 \\ s_3 \end{array}\right]\right\|_1\le 1 \end{array}\right\}, \label{eq:mixedpolytope}
\end{eqnarray}
which is an example of domains where source attributes such as nonnegativity and sparsity defined only at the subvector level.

The Det-Max WSM optimization setting for this case can be written as
\begin{eqnarray}
\underset{\vh_t,\vy_t}{\text{minimize}}& & C(\vh_t,\vy_t) \nonumber \\ 
\mbox{subject to}& &\left\|\left[\begin{array}{c} y_1 \\ y_2 \end{array}\right]\right\|_1\le 1, \quad \left\|\left[\begin{array}{c} y_2 \\ y_3 \end{array}\right]\right\|_1\le 1 \quad y_3\ge 0\nonumber
\end{eqnarray}
for which the Lagrangian based reformulation can be written as
\begin{eqnarray}
\underset{\lambda_1,\lambda_2\ge 0}{\text{maximize }}\underset{\vh_t,y_3\ge 0, y_1,y_2}{\text{minimize}}& & C(\vh_t,\vy_t)+\lambda_1\left(\left\|\left[\begin{array}{c} y_1 \\ y_2 \end{array}\right]\right\|_1-1\right)+\lambda_2 \left(\left\|\left[\begin{array}{c} y_2 \\ y_3 \end{array}\right]\right\|_1-1\right).\nonumber 
\end{eqnarray}
The proximal operator corresponding to the Lagrangian terms can be defined as
\begin{eqnarray}
\text{prox}_{\lambda_1,\lambda_2}(\mathbf{v})=\underset{q_3\ge0,q_1,q_2}{\text{argmin}}\left(\frac{1}{2}\|\mathbf{v}-\mathbf{q}\|_2^2+\lambda_1\left\|\left[\begin{array}{c} q_1 \\ q_2 \end{array}\right]\right\|_1+\lambda_2\left\|\left[\begin{array}{c} q_2 \\ q_3 \end{array}\right]\right\|_1\right).
\end{eqnarray}
Let $\mathbf{q}^*$ the output of the proximal operator. From the subdifferential set based optimality condition
\begin{itemize}
    \item if $q_1^*\neq 0$ then $q^*_1-v_1+\lambda_1\text{sign}(v_1)=0$ which implies $q^*_1=v_1-\lambda_1\text{sign}(v_1)$,
    \item if $q_2^*\neq 0$ then $q^*_2=v_2-(\lambda_1+\lambda_2)\text{sign}(v_2)$,
    \item if $q_3^*\neq 0$ then $q^*_3=v_3-\lambda_2$.
\end{itemize}
Therefore, we can write $q_1=\text{ST}_{\lambda_1}(v_1)$, $q_2=\text{ST}_{\lambda_1+\lambda_2}(v_2)$ and $q_3=\text{ReLU}(v_3-\lambda_2)$.
As a result, we can write the corresponding output dynamics expressions in the form
\begin{eqnarray*}
\frac{d\vu(\tau)}{d\tau}&=&-\vu(\tau)+ \lambda_{SM}(1 - \beta)[\vW_{YH}(t)\vh(\tau)-\bar{\vM}_Y(t)\vD_2(t)\vy(\tau)],\nonumber\\
\vy_{t,1}(\tau)&=& \text{ST}_{\lambda_1(\tau)}\left(\frac{\vu_1(\tau)}{\lambda_{SM}(1 - \beta){\Gamma_Y}_{11}(t){D_{2,11}(t)}} \right), \\
\vy_{t,2}(\tau)&=& \text{ST}_{\lambda_1(\tau)+\lambda_2(\tau)}\left(\frac{\vu_2(\tau)}{\lambda_{SM}(1 - \beta){\Gamma_Y}_{22}(t){D_{2,22}(t)}} \right), \\
\vy_{t,3}(\tau)&=& \text{ReLU}\left(\frac{\vu_3(\tau)}{\lambda_{SM}(1 - \beta){\Gamma_Y}_{33}(t){D_{2,33}(t)}}-\lambda_2(\tau) \right),\\
\frac{da_1(\tau)}{d\tau}&=& -a_1(\tau)+|\vy_{t,1}(\tau)|+|\vy_{t,2}(\tau)|-1+\lambda_1(\tau),\\
\lambda_1(\tau)&=&\text{ReLU}(a_1(\tau)), \nonumber\\
\frac{da_2(\tau)}{d\tau}&=& -a_2(\tau)+|\vy_{t,2}(\tau)|+\vy_{t,3}(\tau)-1+\lambda_2(\tau),\\
\lambda_2(\tau)&=&\text{ReLU}(a_2(\tau)). \nonumber
\end{eqnarray*}

Figure \ref{fig:NNmixed} shows the Det-Max WSM neural network for the source domain in (\ref{eq:mixedpolytope}).

\begin{figure}[ht]
\begin{center}
    {\includegraphics[width=0.99\columnwidth, trim=0cm 15cm 0cm 0cm,clip]{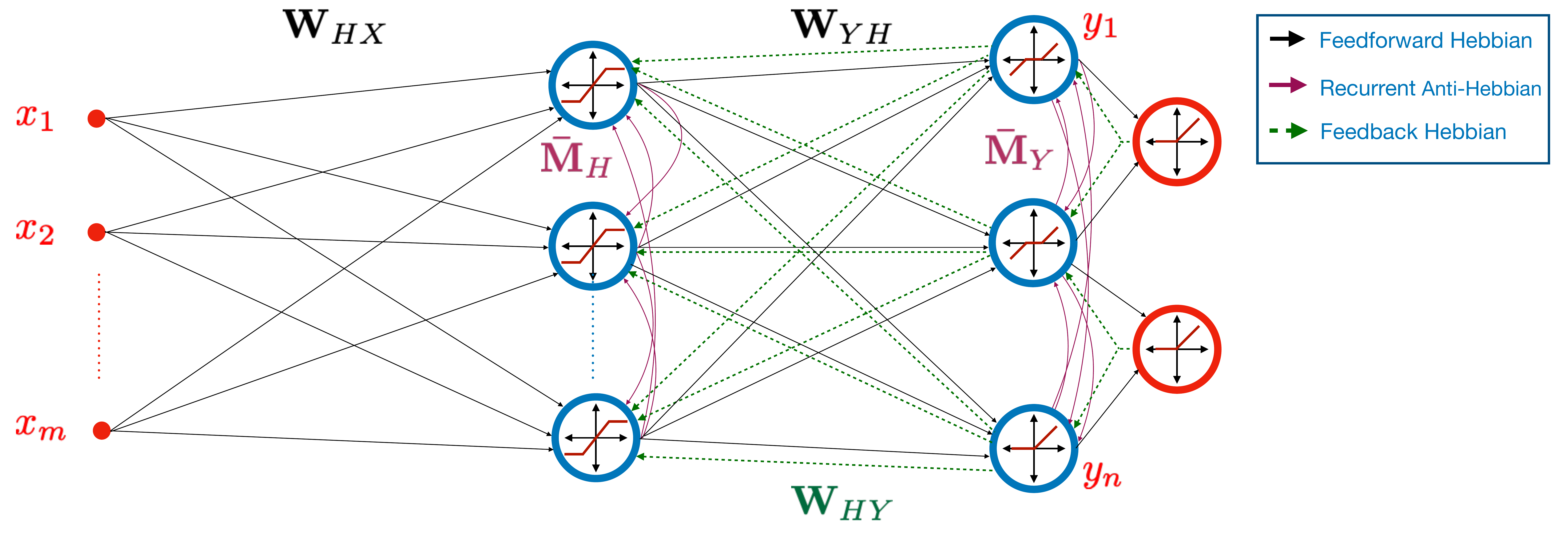}}
\end{center}
\caption{WSM Det-Max neural network for the polytope in (\ref{eq:mixedpolytope}).}
\label{fig:NNmixed}
\end{figure}


\section{Supplementary  on numerical experiments}
\label{sec:numexpappendix}
Update dynamics for the hidden layer $\vh_t$ and the output vector $\vy_t$ are defined by differential equations depending on the selection of the source domain which lead to recursive neural dynamic iterations. Algorithm \ref{alg:neuraldynamiciterationsanti} summarizes the neural dynamic iterations for anti-sparse sources covered in Section \ref{linfdynamics}. Very similar output dynamic calculations for each source assumption can be acquired based on the derivations in Section \ref{sec:dersourcedomains}. We run the neural dynamic iterations until a convergence check is satisfied or a predetermined maximum number of iterations $\tau_{max}$ is reached. In Algorithm \ref{alg:neuraldynamiciterationsanti}, $\epsilon$ denotes the tolerance in the relative error check for the stopping condition, and $\eta(\tau)$ represents the learning rate at the iteration count $\tau$. In the following subsections, we provide the experimental details and additional source separation examples for different assumptions on the sources.
\begin{algorithm}[H]
\begin{algorithmic}[1]
\STATE Initialize  $\tau_{\text{max}}$, $\epsilon$, and $\tau = 1$

\WHILE{($||\vv(\tau) - \vv(\tau-1)||/||\vv(\tau)|| > \epsilon$ or $||\vu(\tau) - \vu(\tau-1)||/||\vu(\tau)|| > \epsilon$) and $\tau < \tau_{\text{max}}$}
\STATE $\vv(\tau) = \vv(\tau - 1) + \eta(\tau)\frac{d\vv(\tau - 1 )}{d(\tau -1) } $
\STATE Apply Equation \ref{eq:htnn} for  $\vh_{t,i}(\tau)$
\STATE $\vu(\tau) = \vu(\tau - 1) + \eta(\tau)\frac{d\vu(\tau -1)}{d(\tau-1)} $
\STATE Apply Equation \ref{eq:ytnn} for  $\vy_{t,i}(\tau)$
\STATE $\tau = \tau + 1$
\ENDWHILE
\end{algorithmic}
\caption{Neural dynamic iterations for anti-sparse sources}
\label{alg:neuraldynamiciterationsanti}
\end{algorithm}

\subsection{Batch algorithms with correlated source separation capability}
In this section, we briefly discuss two batch learning algorithms for blind separation of correlated sources, which reflect the Det-Max problem \ref{eq:bssobjective}: 1. Polytopic Matrix Factorization \cite{tatli2021tspsubmitted}, 2. Log-Det Mutual Information Maximization \cite{erdogan2022information}.
\begin{itemize}
\item {\it \textbf{Polytopic Matrix Factorization}}: \cite{tatli2021tspsubmitted} recently introduced the Polytopic Matrix Factorization (PMF) as a structured matrix factorization framework that models the columns of the input matrix, i.e., the mixture signals in our problem, as a linear transformation of source vectors from a polytope. The choice of the underlying polytope in the PMF framework reflects the attributes of the sources possibly in a heterogeneous perspective; e.g., the polytope discussed in Section \ref{appsec:mixedattributesdynamics} provides an example of heterogeneous feature assumptions at the subvector level such as mutual sparsity. Taking into account the mixing model in Section \ref{sec:mixingmodel}, PMF uses the following optimization problem,

\begin{maxi!}[l]<b>
{\vY(t)\in \mathbb{R}^{n \times t}, \vH \in \mathbb{R}^{m \times n}}{ \log(\det(\vY(t)\vY(t)^T))\label{eq:pmfdetmaxobjective}}{\label{eq:pmfobjective}}{}
\addConstraint{\vX(t) = \vH \vY(t)}{\label{eq:pmfequalityconstraint}}{}
\addConstraint{\vy_i \in \mathcal{P}, i=1, \dots, t,}{\label{eq:pmfdetmax}}{}
\end{maxi!}

where $\vH$ and $\vY(t)$ correspond to the unknown mixing matrix and the source estimates, respectively. The aim of PMF is to obtain the original factors of $\vA$ and $\vS(t)$ up to some acceptable sign and permutation ambiguities, i.e., $\vY(t)=\vP \bLambd\vS(t)$ and $\vH =\vA \vP^T\bLambd^{-1} $. The reference \cite{tatli2021tspsubmitted} provides the sufficient condition for the identifiability of the original factors of $\vA$ and $\vS(t)$ based on the sufficiently scattering condition discussed in Section \ref{appsec:suffscat}, i.e., if the source vectors are sufficiently scattered in a permutation-and/or sign only invariant polytope $\Pcal$, then all global optima of the problem \ref{eq:pmfobjective} lead to the ideal separation. For the corresponding algorithm to solve the problem \ref{eq:pmfobjective}, we refer to the pseudo-code in \cite{tatli2021tspsubmitted}, which is a batch algorithm with a projected gradient search.

\item {\it \textbf{Log-Det Mutual Information Maximization}}: The reference \cite{erdogan2022information} brings a statistical interpretation to the PMF framework based on a log-determinant (LD) based  mutual information measure. According to this approach, the LD-mutual information between the input and output is maximized, under the constraint that the outputs are in the presumed source domain. The corresponding optimization setting is given by

\begin{maxi!}[l]<b>
{\vY(t)\in \mathbb{R}^{n \times t}}{ \frac{1}{2}\log\det(\boldsymbol{\hat{R}_y} + \epsilon \boldsymbol{I} ) - \frac{1}{2}\log\det(\boldsymbol{\hat{R}_y} - \boldsymbol{\hat{R}_{yx}}(\epsilon \boldsymbol{I} + \boldsymbol{\hat{R}_{x}})^{-1} \boldsymbol{\hat{R}_{yx}}^T+ \epsilon \boldsymbol{I} )\label{eq:ldmidetmaxobjective}}{\label{eq:ldmiobjective}}{}
\addConstraint{\vy_i \in \mathcal{P}, i=1, \dots, t,}{\label{eq:ldmidetmax}}{}
\end{maxi!}

where the objective \ref{eq:ldmidetmaxobjective} is defined in terms of sample covariance matrices, i.e., $\boldsymbol{\hat{R}_y} = \frac{1}{t} \vY(t)\vY(t)^T - \frac{1}{t^2}\vY(t) \boldsymbol{1}\boldsymbol{1}^T\vY(t)^T$, and $\boldsymbol{\hat{R}_{yx}} = \frac{1}{t} \vY(t)\vX(t)^T - \frac{1}{t^2}\vY(t) \boldsymbol{1}\boldsymbol{1}^T\vX(t)^T$. Similar to the PMF framework, the LD-InfoMax approach assumes that the source vectors are drawn from a presumed polytope $\Pcal$. The LD-InfoMax approach is capable of separating correlated sources, since it does not assume any statistical independence or uncorrelatedness on the source vectors. Reference \cite{erdogan2022information} proposes a projected gradient ascent-based algorithm to solve the problem \ref{eq:ldmiobjective} as a batch learning approach.
\end{itemize}

We compare our algorithm with the PMF and LD-InfoMax frameworks for correlated source separation experiments in Sections \ref{sec:numexpcorrcopula}, \ref{sec:numexpcorrcopulaantisparse}, \ref{sec:imageseparaionappendix}, and for sparse source separation experiment in Section \ref{appsec:sparsesourceseparation}.

\subsection{Synthetically correlated source separation with nonnegative anti-sparse sources}
In this section, we provide the training details and hyperparameter selections for the numerical experiment provided in Section \ref{sec:numexpcorrcopula}. For this network, we used the following hyperparameter selections and variable initializations:
\begin{itemize}
    \item $\vD_1 = \vI$, and $\vD_2 = \vI$, where $\vI$ is the identity matrix.
    \item $\mu_{\vD_1} = 1$, and $\mu_{\vD_2} = 10^{-2}$.
    \item $\beta = 0.5,\  \lambda_{SM} = 1 - 10^{-5}$.
    \item $1 - \gamma^2$ is dynamically adjusted using $1 - \gamma^2 = max\{\nu/(1 + \log(1 + t)),0.001\}$,  where $t$ is the data sample index, and $\nu =\left\{\begin{array}{cc}  0.1 & \rho \leq 0.4, \\
   0.05 & \text{otherwise.} \end{array} \right.$.
    \item $\vM_H = 2\vI,  \vM_Y = \vI$.
    \item $\vW_{HX} = \vI, \vW_{YH}  = \vI$.
    \item Learning rate for the neural dynamic iterations is adjusted using $max\{0.75/(1 + \tau \times 0.005),0.05\}$, where $\tau$ is the neural dynamic iteration count.
    \item The maximum number of neural dynamic iterations is restricted to $\tau_{\text{max}} = 500$ if the stopping condition is not satisfied.
    \item For the stability of the learning process, we keep the diagonal weights of $\vD_1$ and $\vD_2$ in a predetermined range, i.e., $0.2 \prec \mbox{diag}(\vD_1) \prec 10^6$ and $0.2 \prec \mbox{diag}(\vD_2) \prec 5$.
\end{itemize}

\subsection{Synthetically correlated source separation with anti-sparse sources}
\label{sec:numexpcorrcopulaantisparse}
To illustrate the correlated source separation of WSM neural networks with antisparse sources, we consider a numerical experiment with four copula-T distributed sources in the range $[-1, 1]$ with a Toeplitz correlation calibration matrix whose first row is $\begin{bmatrix}1 & \rho & \rho^2 & \rho^3 \end{bmatrix}$. We consider the range $\rho \in \left[0,0.6\right]$ for the correlation level. The sources are mixed with an $8\times4$ random matrix with i.i.d. standard normal entries, and corrupted by i.i.d. standard normal noise corresponding to $30$dB SNR level. Antisparse-WSM neural network is employed in this experiment, which is illustrated in Figure \ref{fig:NNBinfty}. We compare the SINR performance of WSM algorithm with the  BSM algorithm \citep{erdogan2020blind}, Infomax ICA algorithm \citep{GramfortEtAl2013a}, PMF algorithm \cite{tatli2021tspsubmitted}, and LD-InfoMax algorithm \cite{erdogan2022information}. Figure \ref{fig:Copulafigantisparseappendix} illustrates the SINR performances of these algorithms (averaged over 100 realizations) with respect to the correlation parameter $\rho$. Similar to the results for nonnegative antisparse source separation experiments provided in Section \ref{sec:numexpcorrcopula}, the WSM approach maintains its immunity against source correlations, whereas the BSM and ICA algorithms, which assume uncorrelated sources, deteriorate with increasing source correlation. LD-InfoMax and PMF algorithms achieve relatively similar SINR behaviors while their performance remains comparatively steady with respect to increasing source correlation. We note that both PMF and LD-InfoMax typically achieve better performances compared to our proposed online algorithm since these approaches utilize batch algorithms.

\begin{figure}[H]
\begin{center}
\includegraphics[trim = {0.0cm 0cm 2.0cm 1.0cm},clip,width=0.7\textwidth]{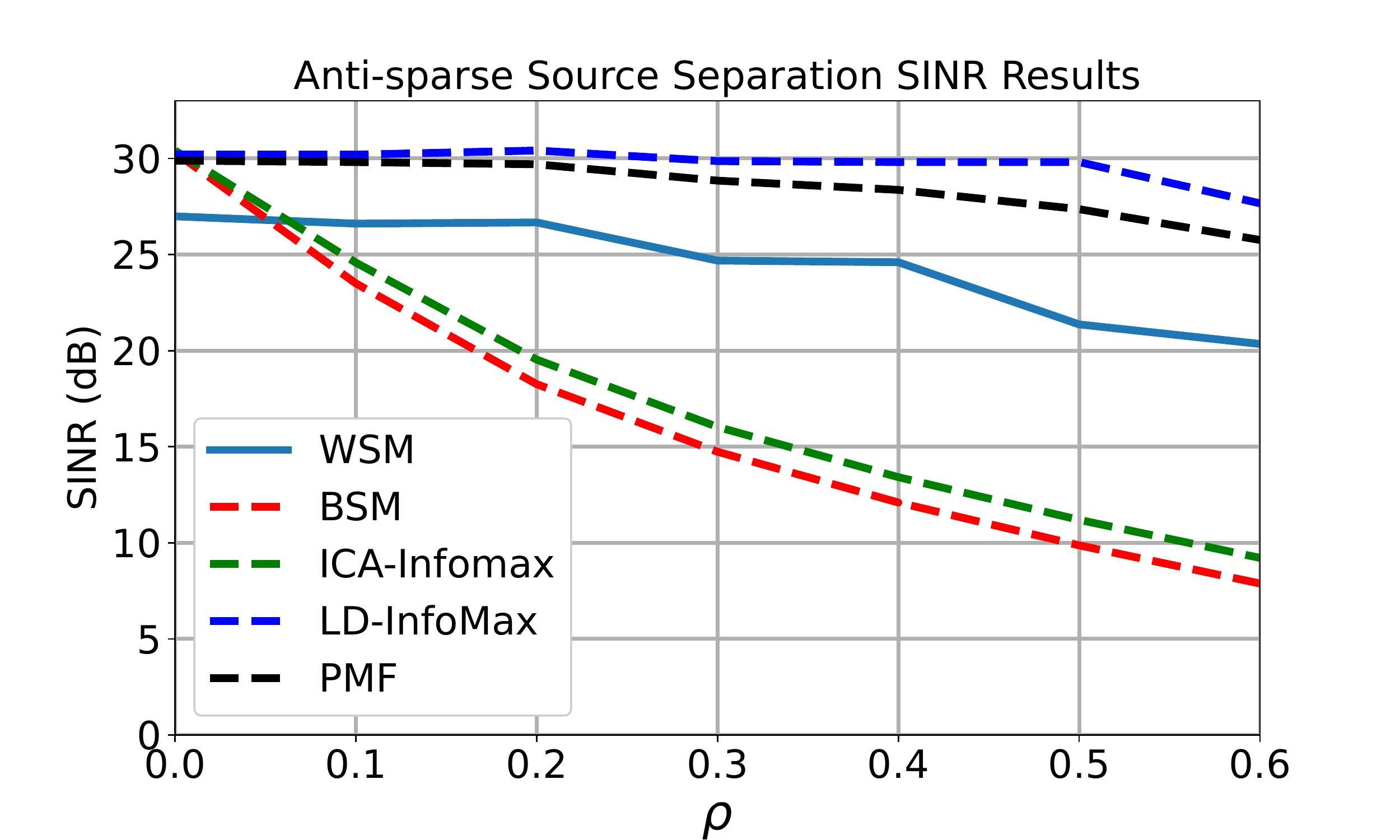}
\end{center}
\caption{The SINR performances of the WSM, BSM, ICA, LD-InfoMax, and PMF algorithms as a function of the correlation parameter $\rho$.}
\label{fig:Copulafigantisparseappendix}
\end{figure}

For the antisparse source separation setting, we used the following hyperparameter selections and variable initializations:
\begin{itemize}
    \item $\vD_1 = \vI$, and $\vD_2 = \vI$, 
    \item $\mu_{\vD_1} = 1.125$, and $\mu_{\vD_2} = 0.2$,
    \item $\beta = 0.5,\  \lambda_{SM} = 1 - 5 \times 10^{-5}$,
    \item $1 - \gamma^2$ is dynamically adjusted using $1 - \gamma^2 = max\{\nu/(1 + \log(1 + t)),0.001\}$,  where $t$ is the data sample index, and $\nu =\left\{\begin{array}{cc}  0.6 & \rho \leq 0.4, \\
   0.25 & \text{otherwise.} \end{array} \right.$.
    \item $\vM_H = 2\vI,  \vM_Y = \vI$,
    \item $\vW_{HX} = \vI, \vW_{YH}  = \vI$.
    \item Learning rate for the neural dynamic iterations is adjusted using $max\{0.75/(1 + \tau \times 0.005),0.05\}$, where $\tau$ is the neural dynamic iteration count.
    \item The maximum number of neural dynamic iterations is restricted to $\tau_{\text{max}} = 750$ if the stopping condition is not satisfied.
    \item For the stability of the learning process, we keep the diagonal weights of $\vD_1$ and $\vD_2$ in a predetermined range, i.e., $0.2 \prec \mbox{diag}(\vD_1) \prec 10^6$ and $0.2 \prec \mbox{diag}(\vD_2) \prec 5$.
\end{itemize}

\subsection{Image separation}
\label{sec:imageseparaionappendix}





For the image separation example provided in Section \ref{sec:numexpimageseparation}, the WSM Det-Max Neural Network illustrated in Figure \ref{fig:NNBinftyplus} is employed. For this network, we used the following hyperparameter selections and variable initializations:

\begin{itemize}
    \item $\vD_1 = \vI$, and $\vD_2 = \vI$.
    \item $\mu_{\vD_1} = 3.725$, and $\mu_{\vD_2} = 1.125$.
    \item $\beta = 0.5,\  \lambda_{SM} = 1 - 10^{-5}$.
    \item $1 - \gamma^2$ is dynamically adjusted using $1 - \gamma^2 = max\{0.11/(1 + \log(1 + t)),0.001\}$,  where $t$ is the data sample index.
    \item $\vM_H = 2\vI,  \vM_Y = \vI$.
    \item $\vW_{HX} = \vI, \vW_{YH}  = \vI$.
    \item Learning rate for the neural dynamic iterations is adjusted using $max\{0.75/(1 + \tau \times 0.005),0.05\}$, where $\tau$ is the neural dynamic iteration count.
    \item Maximum number of neural dynamic iterations is restricted to be $\tau_{\text{max}} = 500$ if stopping condition is not satisfied.
    \item For the stability of the learning process, we keep the diagonal weights of $\vD_1$ and $\vD_2$ in a predetermined range, i.e., $10^{-3} \prec \mbox{diag}(\vD_1) \prec 10^6$ and $10^{-3} \prec \mbox{diag}(\vD_2) \prec 20$.
\end{itemize}
In this section, we also include the results of batch algorithms PMF and LD-InfoMax as illustrated in Figure \ref{fig:imageseparationallappendixcontinued} in addition to the source images, mixture images, and the outputs of the ICA, NSM, and WSM algorithms with better resolutions compared to Figure \ref{fig:imageseparation}. Recall that our WSM-based network outputs illustrated in Figure \ref{fig:imoutputswsmappendix} achieves SINR level of 27.49 dB. LD-InfoMax algorihtm's outputs in Figure \ref{fig:imoutputsldinfomaxappendix} obtain SINR level of 28.65 dB, and the PMF algorithm's outputs in Figure \ref{fig:imoutputspmfappendix} obtain the SINR level of 31.92 dB. As expected, both PMF and LD-InfoMax algorithms achieve better performances due to their batch nature whereas our proposed approach's output is compatible with these frameworks.

\begin{figure}[H]
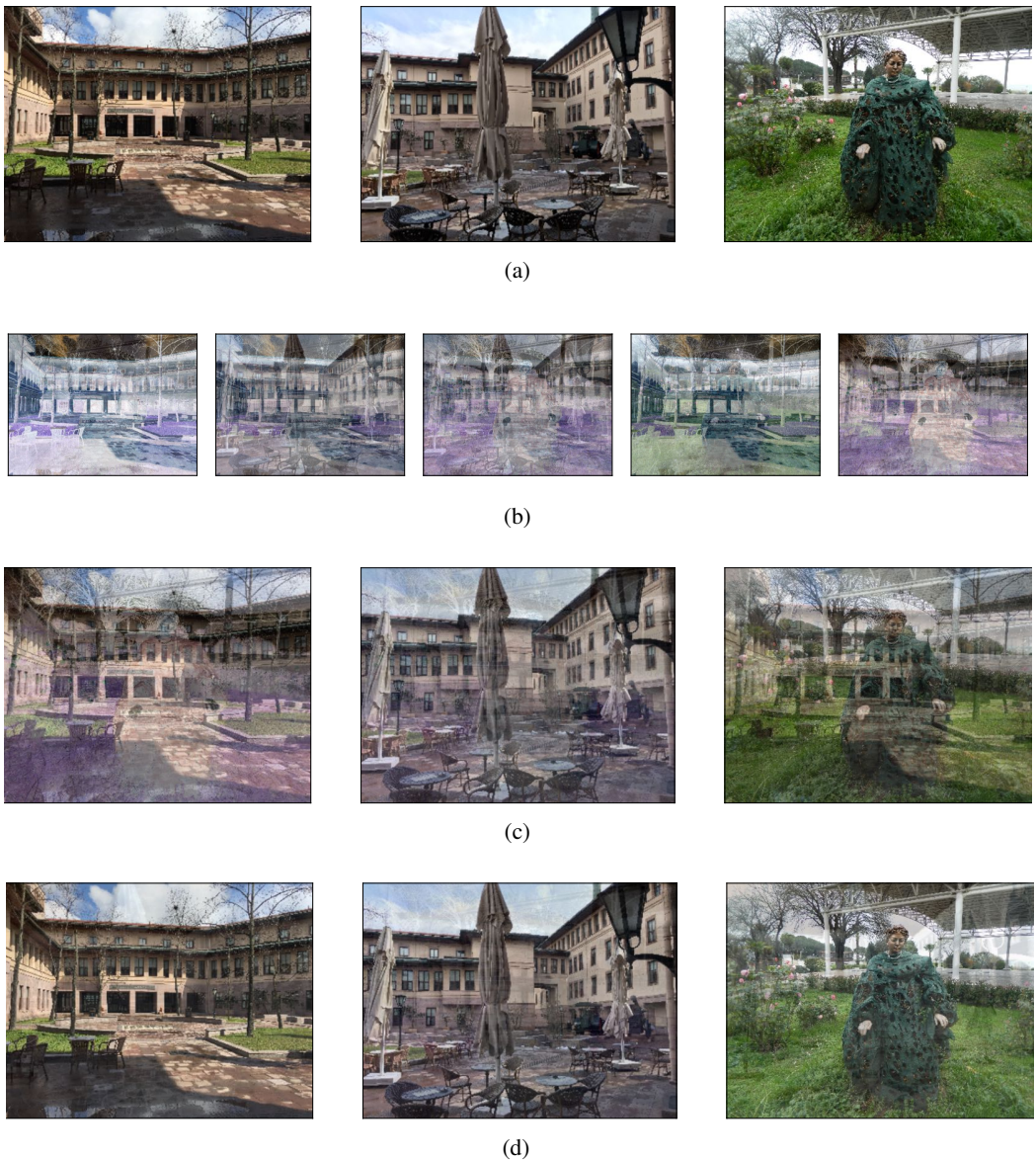

\centering
\subfloat[a][]{
\includegraphics[trim = {2cm 0cm 2cm 0cm},clip,width=0.99\textwidth]{picture_results_new/ImageSeparationOriginal_Images32.pdf}
\label{fig:imsourcesappendix}}\\
\subfloat[b][]{
\includegraphics[trim = {1.2cm 0cm 1.2cm 0cm},clip,width=0.99\textwidth]{picture_results_new/ImageSeparationMixture_Images32.pdf}
\label{fig:immixturesappendix}
}

\subfloat[c][]{{
\includegraphics[trim = {2cm 0cm 2cm 0cm},clip,width=0.99\textwidth]{picture_results_new/ICAOutput2.pdf} }\label{fig:imoutputsicaappendix}}

\subfloat[d][]{{
\includegraphics[trim = {2cm 0cm 2cm 0cm},clip,width=0.99\textwidth]{picture_results_new/NSM_wPreWhiteningOutput2.pdf}}\label{fig:imoutputsnsmappendix}}

\caption{(a) Original RGB images, (b) mixture RGB images, (c)  ICA outputs,  (d)  NSM outputs.}
\qquad
\hfill
\label{fig:imageseparationallappendix}
\end{figure}

\begin{figure}[H]\ContinuedFloat
\centering

\subfloat[e][]{{
\includegraphics[trim = {2cm 0cm 2cm 0cm},clip,width=0.99\textwidth]{picture_results_new/WSM_Output3_Clipped2.pdf}}\label{fig:imoutputswsmappendix}}

\subfloat[f][]{{
\includegraphics[trim = {2cm 0cm 2cm 0cm},clip,width=0.99\textwidth]{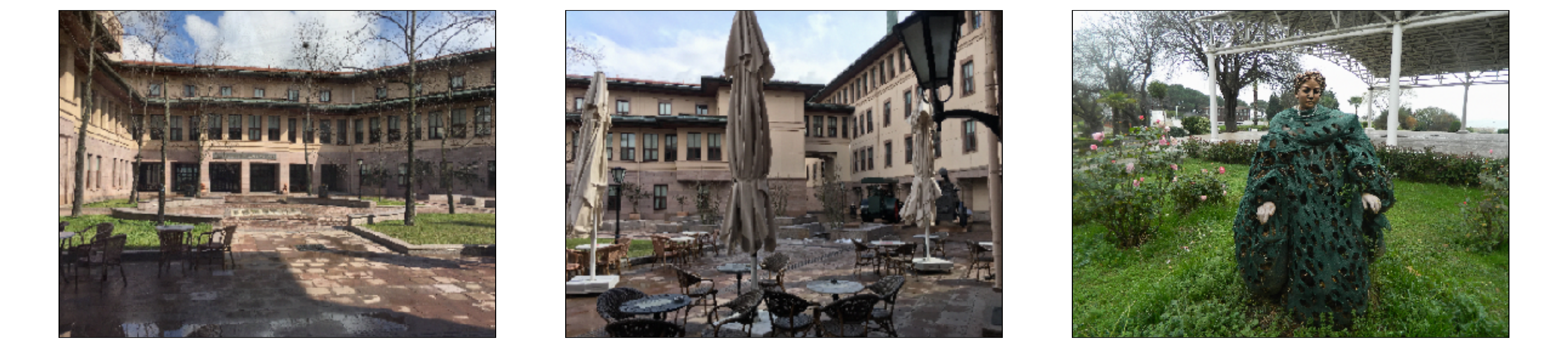}}\label{fig:imoutputsldinfomaxappendix}}

\subfloat[g][]{{
\includegraphics[trim = {2cm 0cm 2cm 0cm},clip,width=0.99\textwidth]{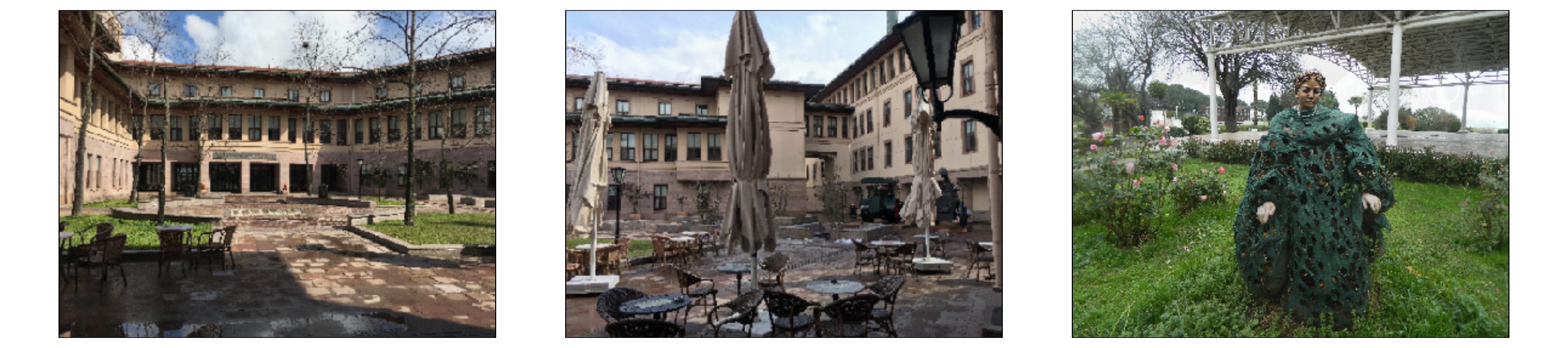}}\label{fig:imoutputspmfappendix}}

\caption{ (e)  WSM outputs, (f) LD-InfoMax Outputs, (g) PMF Outputs. }
\label{fig:imageseparationallappendixcontinued}
\qquad
\hfill
\end{figure}

\subsection{Sparse source separation}
\label{appsec:sparsesourceseparation}
In order to illustrate the use of the proposed framework for a different source domain, we consider sparse sources where $\Pcal=\mathcal{B}_1$. We generate $n=5$ dimensional source vectors, by projecting i.i.d. uniform vectors in $\mathcal{B}_\infty$ to $\mathcal{B}_1$. The mixing matrix is a $10\times 5$-matrix with i.i.d. standard normal entries. The mixtures are used to train the sparse-WSM Det-Max network in Figure \ref{fig:NNBone} introduced in Appendix \ref{appsec:l1dynamics}. For this network, we used the following hyperparameter selections and variable initializations:
\begin{itemize}
    \item $\vD_1 = 8\vI$, and $\vD_2 = \vI$.
    \item $\mu_{\vD_1} = 20$, and $\mu_{\vD_2} = 10^{-2}$.
    \item $\beta = 0.5,\  \lambda_{SM} = 1 - 10^{-5}$.
    \item $1 - \gamma^2$ is dynamically adjusted using $1 - \gamma^2 = max\{0.25/(1 + \log(1 + t)),0.001\}$,  where $t$ is the data sample index.
    \item $\vM_H = 0.02\vI,  \vM_Y = 0.02\vI$.
    \item $\vW$ matrices are initialized first with i.i.d. standard normal random variables. Then, we normalized the Euclidean norm of all rows  to $0.0033$ by proper scaling.
    \item Learning rate for the neural dynamic iterations is determined to be $0.5$.
    \item Maximum number of neural dynamic iterations is restricted to be $\tau_{\text{max}} = 750$ if stopping condition is not satisfied.
    \item For the stability of the learning process, we keep the diagonal weights of $\vD_1$ and $\vD_2$ in a predetermined range, i.e., $10^{-6} \prec \mbox{diag}(\vD_1) \prec 10^6$ and $1 \prec \mbox{diag}(\vD_2) \prec 1.001$.
\end{itemize}
Figure \ref{fig:spex1} illustrates the SINR convergence behavior for the sparse-WSM network, as a function of update iterations, for the input SNR level of $30$dB (averaged over 200 realizations).

\begin{figure}[H]
\centering
\includegraphics[width=0.75\columnwidth]{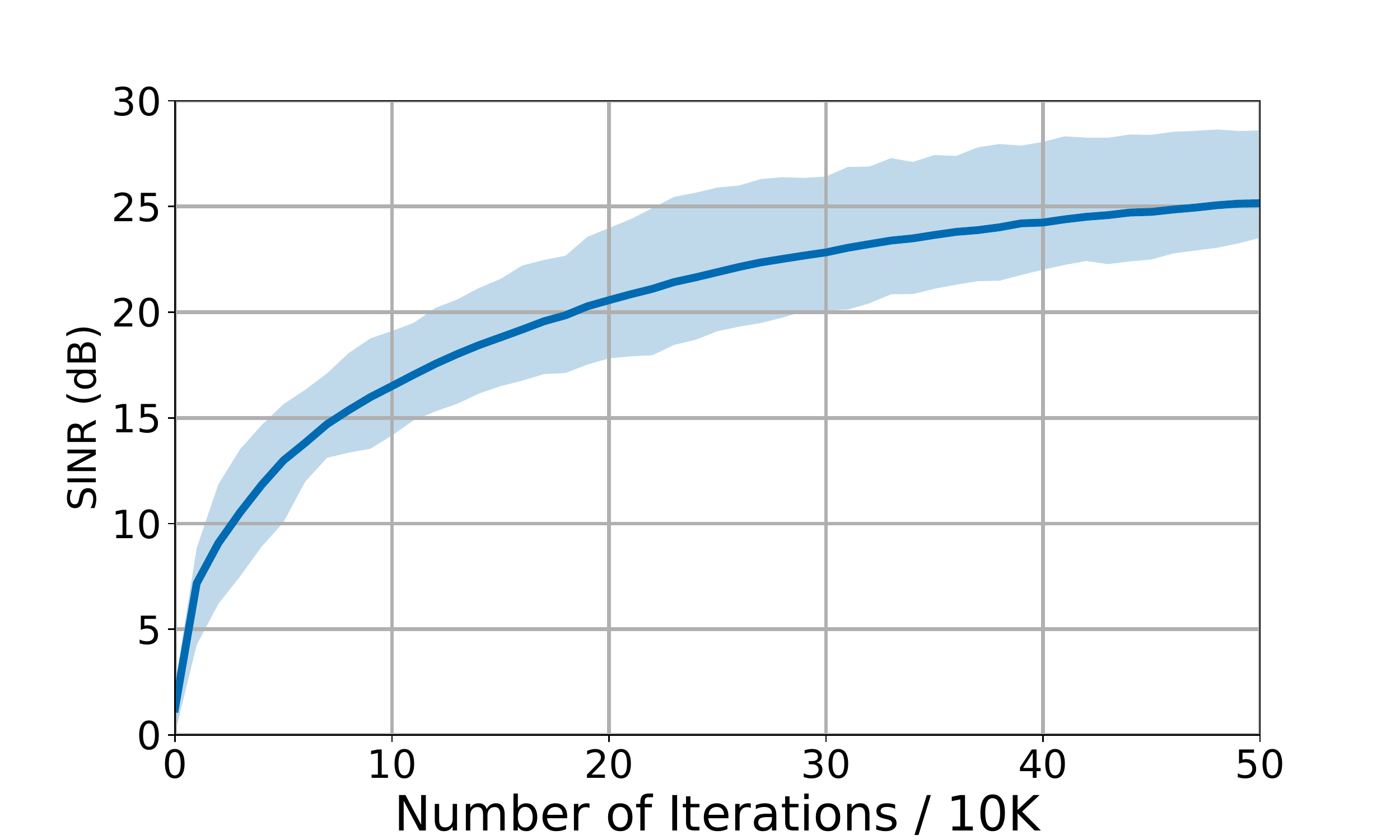}%
\caption{The SINR convergence curve for the sparse-WSM for $30$dB input SNR level: mean-solid line with 25/75-percentile envelope.}
\label{fig:spex1}
\end{figure}

Figure \ref{fig:spex2} demonstrates the separation performance of the sparse-WSM network for different noise levels. 
\begin{figure}[H]
\centering
\includegraphics[width=0.75\columnwidth]{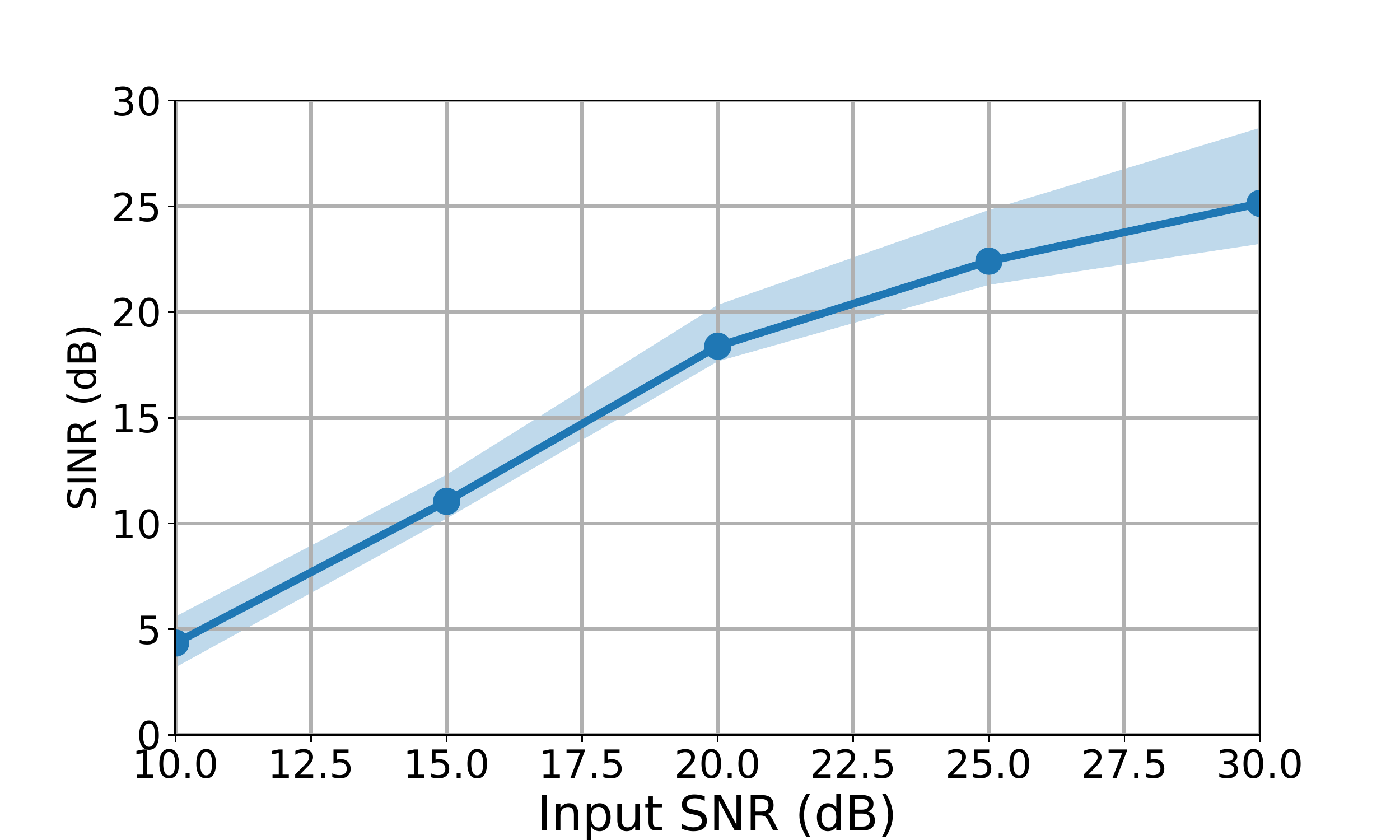}%
\caption{The output SINR with respect to the input SNR level for the sparse-WSM: mean-solid line with 25/75-percentile envelope. }
\label{fig:spex2}
\end{figure}

To compare our online approach with the batch algorithms LD-InfoMax and PMF, we also performed experiments with these algorithms for the input SNR level of 30 dB. Table \ref{tab:sparseBSSResults} summarizes the averaged SINR results of each algorithm over 200 realizations for 30 dB input SNR level. In these experiments, we observe that both PMF and LD-InfoMax obtain better SINR performances on average compared to our WSM Det-Max network. This condition is due to the batch nature of both PMF and LD-InfoMax as discussed earlier.
\begin{table}[h!]
    \centering
\caption{Sparse source separation averaged SINR results of WSM, PMF, and LD-InfoMax.}
    \label{tab:sparseBSSResults}
    \vspace{0.1in} 

    \begin{tabular}{c c | c l l }
    \hline
    & {\bf Algorithm} & {\bf WSM} & {\bf PMF} & {\bf  LD-InfoMax}  \\
    \hline 
    & {\bf SINR} & 25.14 &  30.17 & 30.0 \\

    \hline

    \end{tabular}
\end{table}

\subsection{Sparse dictionary learning}
\label{sec:sparsedictlearningappendix}
Related to the previous example, we consider the well-known example of sparse coding, which is the dictionary learning for natural image patches \citep{olshausen1997sparse}. 
For this experiment, we used $12 \times 12$ prewhitened image patches obtained from  the website, \href{http://www.rctn.org/bruno/sparsenet}{http://www.rctn.org/bruno/sparsenet}. We used the vectorized versions of these patches to train the sparse Det-Max WSM neural network in Figure \ref{fig:NNBone}. Figure \ref{fig:spdictionary} shows the receptive field images obtained from the columns of the inverse of the sparse-WSM separator, which correspond to localized Gabor-like edge features. This example confirms that the sparse WSM neural network with a local update rule successfully captures the behavior observed in primates' primary visual cortical neurons.

\begin{figure}[H]
\centering
\includegraphics[width = 0.6\textwidth]{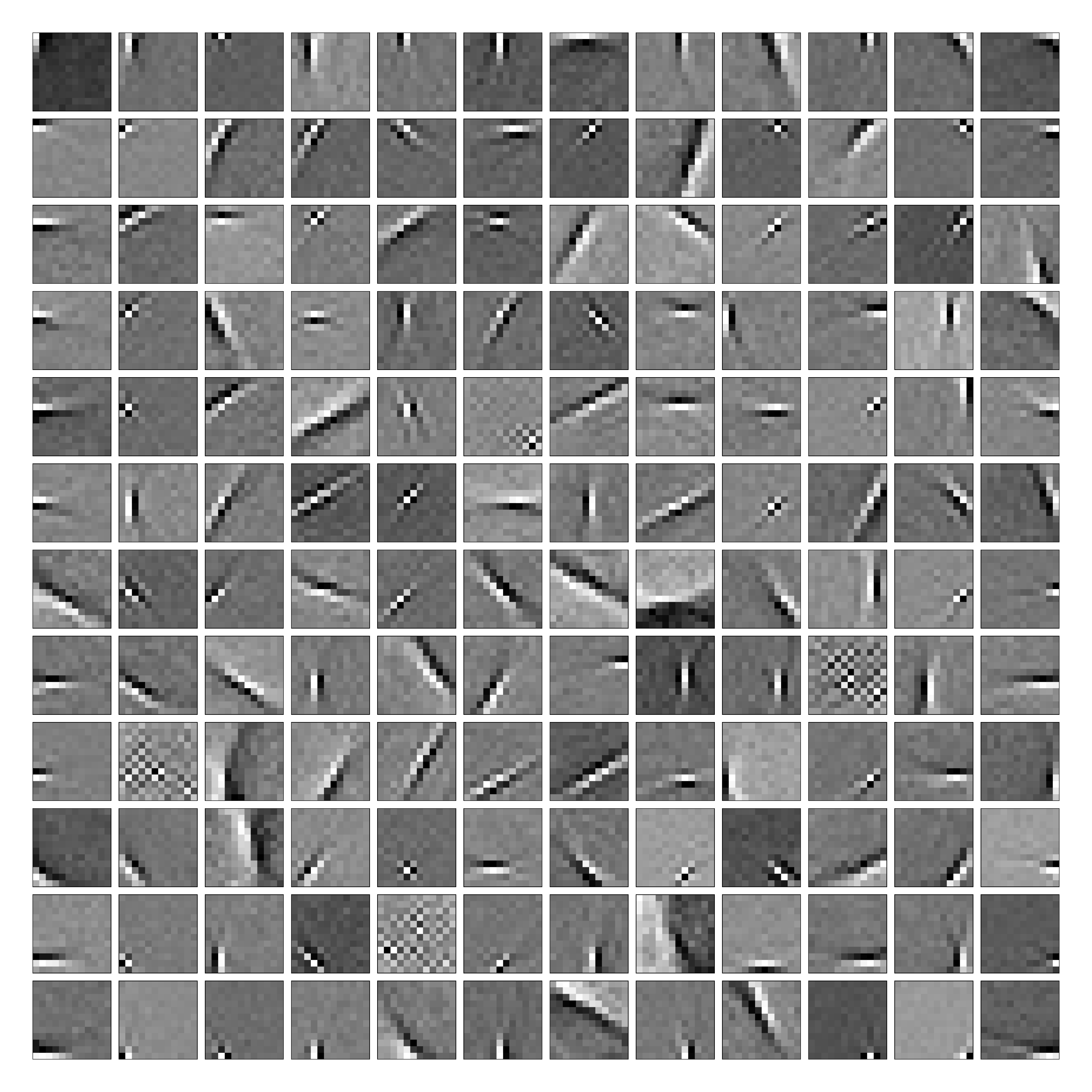}%
\caption{Dictionary obtained from the natural image patches by the sparse-WSM Network.}
\label{fig:spdictionary}
\end{figure}

\subsection{Source separation with mixed latent attributes}
In this section, we illustrate the source separation setting with different identifiable-enabling polytopes similar to the given example in \ref{appsec:mixedattributesdynamics}. These experiments demonstrate the capability of the proposed WSM Neural Network for general identifiable polytopes. The identifiability of the provided sets in this section are verified by the graph automorphism-based identifiability characterization algorithm presented in \cite{bozkurt:2022icassp}.

\subsubsection{Special polytope example in appendix \ref{appsec:mixedattributesdynamics}}
\label{sec:specialpolytopepmfexperimentappendix}
We provide numerical experiment results for the WSM Det-Max network in Figure \ref{fig:NNmixed} corresponding to the polytope in (\ref{eq:mixedpolytope}). To employ this WSM Det-Max Neural Network, we synthetically generated $n = 3$ dimensional uniform vectors in this polytope and mixed them by a random $6\times3$-matrix with i.i.d. standard normal entries. Also, the mixtures are corrupted by i.i.d. standard normal noise corresponding to $30$dB SNR level. Figure \ref{fig:spex3} illustrates the behavior of the overall SINR and individual source SNRs in addition to the behavior of diagonal weight matrices ($\vD_1$ and $\vD_2$) with respect to the number of update iterations for a single experiment. To measure the average behavior of this neural network, we run experiments for $100$ different source and mixing matrix generation, and Figure \ref{fig:spex4} illustrates the averaged SINR convergence behavior with the $25/75$-percentile envelope, as a function of update iterations. 

\begin{figure}[H]
\centering
\includegraphics[width=0.95\columnwidth]{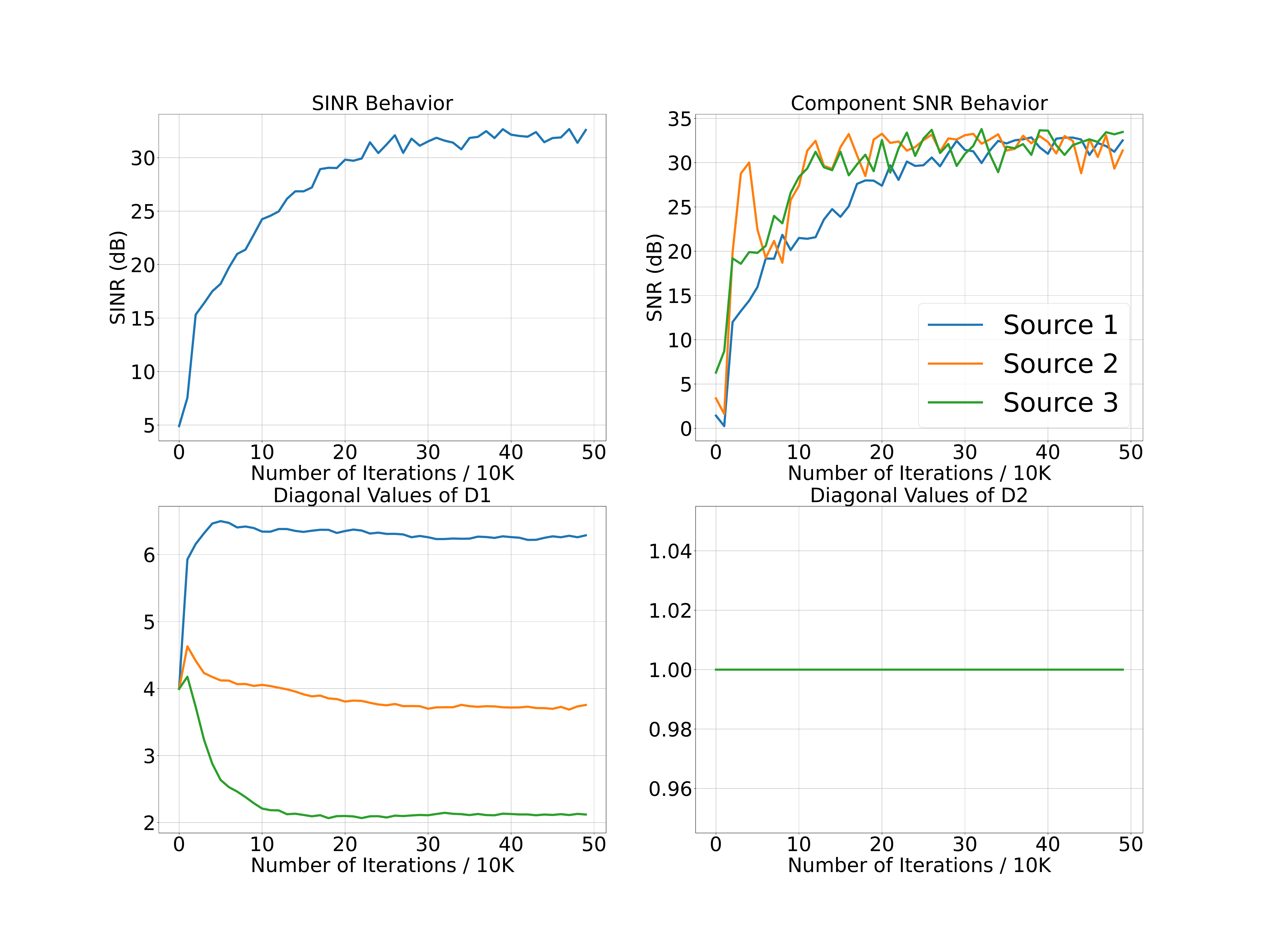}%
\caption{Example behaviors of SINR, component SNR values, and diagonal weights of $\vD_1$ and $\vD_2$ for a single experiment discussed in \ref{sec:specialpolytopepmfexperimentappendix} }
\label{fig:spex3}
\end{figure}

\begin{figure}[H]
\centering
\includegraphics[width=0.75\columnwidth]{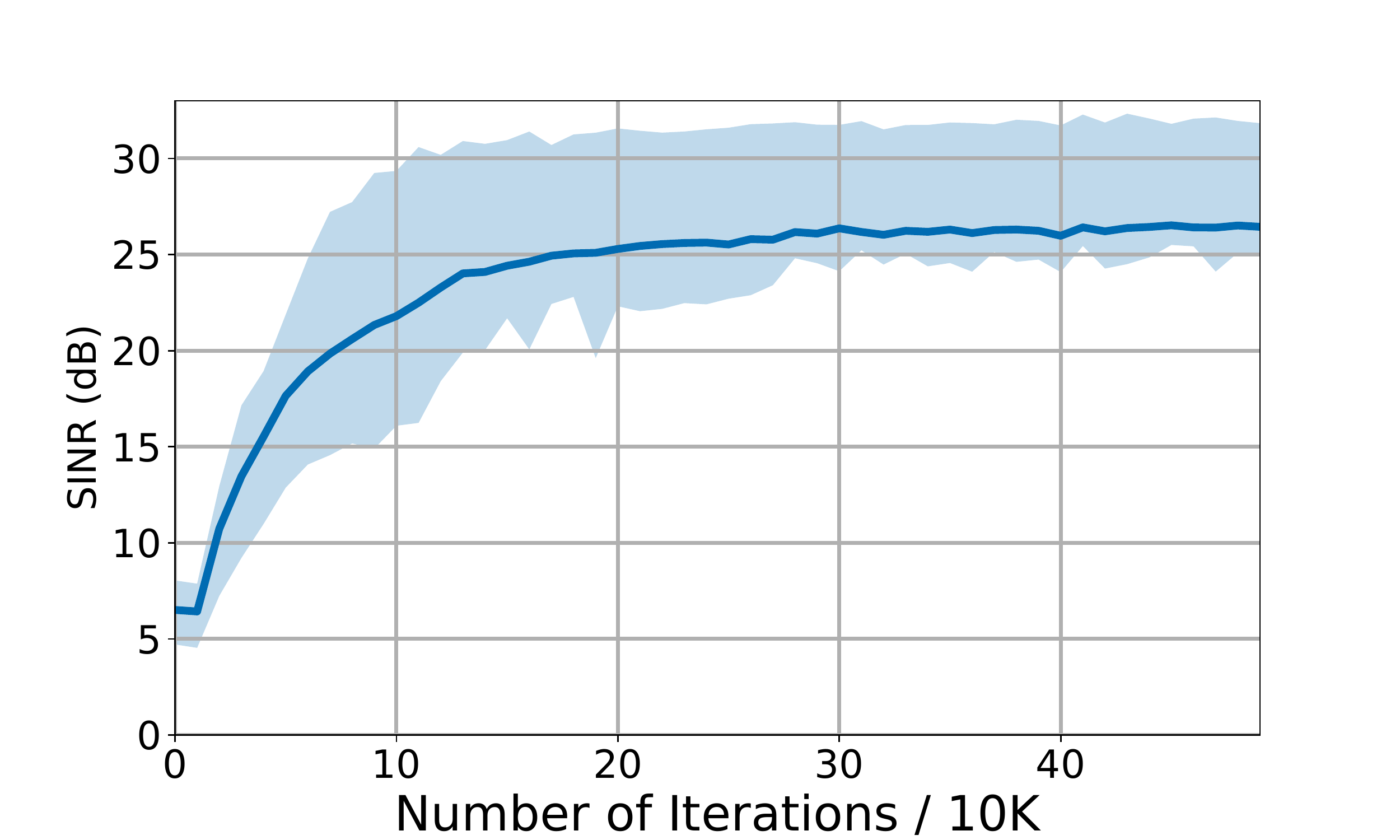}%
\caption{The SINR convergence curve for the experiments discussed in \ref{sec:specialpolytopepmfexperimentappendix}: mean-solid line with 25/75-percentile envelope.}
\label{fig:spex4}
\end{figure}

For this network, we used the following hyperparameter selections and variable initializations:
\begin{itemize}
    \item $\vD_1 = 4\vI$, and $\vD_2 = \vI$.
    \item $\mu_{\vD_1} = 5.725$, and $\mu_{\vD_2} = 10^{-2}$ ($\mu_{\vD_2} = 0$ for the experiment visualized in Figure \ref{fig:spex3}).
    \item $\beta = 0.5,\  \lambda_{SM} = 1 - 10^{-4}$.
    \item $1 - \gamma^2$ is dynamically adjusted using $1 - \gamma^2 = max\{0.25/(1 + \log(1 + t)),0.001\}$,  where $t$ is the data sample index.
    \item $\vM_H = 0.02\vI,  \vM_Y = 0.02\vI$.
    \item $\vW$ matrices are initialized first with i.i.d. standard normal random variables. Then, we normalized the Euclidean norm of all rows  to $0.0033$ by proper scaling.
    \item Learning rate for the neural dynamic iterations is determined to be $0.5$.
    \item Maximum number of neural dynamic iterations is restricted to be $\tau_{\text{max}} = 750$ if stopping condition is not satisfied.
    \item For the stability of the learning process, we keep the diagonal weights of $\vD_1$ and $\vD_2$ in a predetermined range, i.e., $10^{-6} \prec \mbox{diag}(\vD_1) \prec 10^6$ and $1 \prec \mbox{diag}(\vD_2) \prec 1.001$.
\end{itemize}

\subsubsection{Mixed anti-sparse and nonnegative anti-sparse sources}
\label{sec:mixedantinnantiappendix}
As another identifiable polytope example, we consider the following set which assigns mixed antisparse attributes to the source components: signed or nonnegative. For this experiment, we randomly selected two components to be nonnegative whereas the remaining three components are antisparse. The mixing matrix is a $10\times5-$matrix with i.i.d. standard normal entries. The mixtures are used to train the WSM Det-Max network similar to Figure \ref{fig:NNBinfty} where the clippings at the output layer corresponding to nonnegative sources are replaced with nonnegative clipping. 

\begin{eqnarray}
\Pcal_{}=\left\{\mathbf{s}\in \mathbb{R}^3\ \middle\vert \begin{array}{l}   s_{j_1},s_{j_2},s_{j_3}\in[-1,1],s_{j_4}, s_{j_5}\in[0,1], j_i \in \{1,2,3,4,5\}\\  \end{array}\right\}, \label{eq:mixedpolytopenumericexample2}
\end{eqnarray}
To train the WSM Det-Max network in this scenario, we used the following hyperparameter selections and variable initializations:
\begin{itemize}
    \item $\vD_1 = \vI$, and $\vD_2 = \vI$.
    \item $\mu_{\vD_1} = 1.125$, and $\mu_{\vD_2} = 0.1$.
    \item $\beta = 0.5,\  \lambda_{SM} = 1 - 5\times10^{-5}$.
    \item $1 - \gamma^2$ is dynamically adjusted using $1 - \gamma^2 = max\{0.4/(1 + \log(1 + t)),0.001\}$,  where $t$ is the data sample index.
    \item $\vM_H = 2\vI,  \vM_Y = \vI$.
    \item $\vW_{HX} = \vI, \vW_{YH}  = \vI$.
    \item Learning rate for the neural dynamic iterations is adjusted using $max\{0.75/(1 + \tau \times 0.005),0.05\}$, where $\tau$ is the neural dynamic iteration count.
    \item Maximum number of neural dynamic iterations is restricted to be $\tau_{\text{max}} = 750$ if stopping condition is not satisfied.
    \item For the stability of the learning process, we keep the diagonal weights of $\vD_1$ and $\vD_2$ in a predetermined range, i.e., $0.2 \prec \mbox{diag}(\vD_1) \prec 10^6$ and $0.5 \prec \mbox{diag}(\vD_2) \prec 5$.
\end{itemize}
\begin{figure}[H]
\centering
\includegraphics[width=0.75\columnwidth]{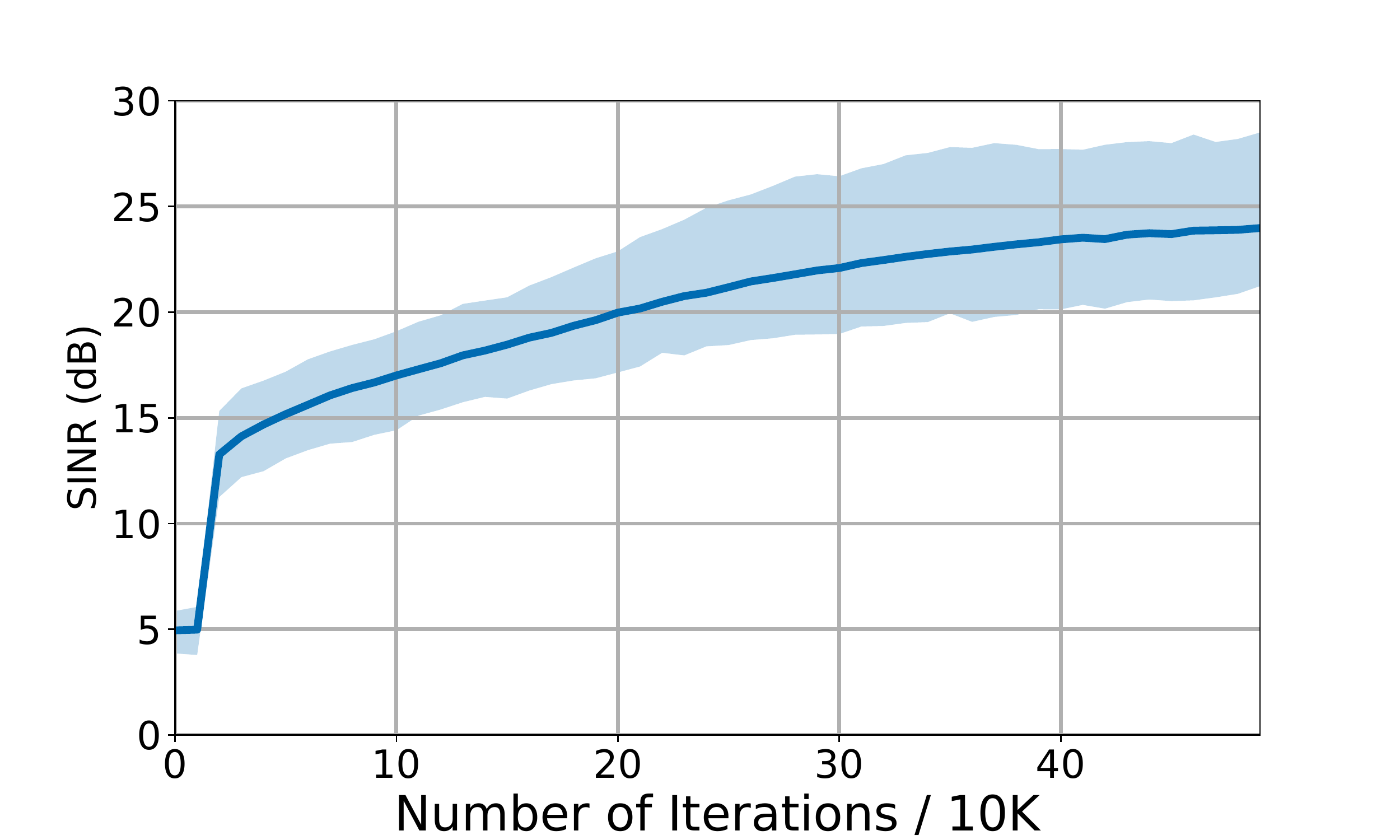}%
\caption{The SINR convergence curve for the experiments discussed in \ref{sec:mixedantinnantiappendix}: mean-solid line with 25/75-percentile envelope.}
\label{fig:spex5}
\end{figure}

\subsubsection{Mixed sparse and nonnegative anti-sparse sources}
\label{sec:mixedsparsennantiappendix}
As the last illustration of source separation on identifiable domains, we consider the following polytope,
\begin{eqnarray}
\Pcal_{}=\left\{\mathbf{s}\in \mathbb{R}^3\ \middle\vert \begin{array}{l} s_{j_1}\in[0,1],  \left\|\left[\begin{array}{c} s_{j_2} \\ s_{j_3} \\ s_{j_4} \\ s_{j_5} \end{array}\right]\right\|_1\le 1,  j_i \in \{1,2,3,4,5\}\end{array}\right\}, \label{eq:mixedpolytopenumericexample3}
\end{eqnarray}
where only one component is nonnegative and the subvector containing the remaining components is sparse. To demonstrate the source separation ability of WSM Det-Max Neural Network for this underlying domain, we generated $n = 5$ dimensional uniform vectors in this polytope. The sources are mixed with a $10\times5$ random matrix with standard normal entries. To train the WSM Det-Max network in this setting, we used the following hyperparameter selections and variable initializations:
\begin{itemize}
    \item $\vD_1 = 8\vI$, and $\vD_2 = \vI$.
    \item $\mu_{\vD_1} = 6$, and $\mu_{\vD_2} = 0.1$.
    \item $\beta = 0.5,\  \lambda_{SM} = 1 - 10^{-4}$.
    \item $1 - \gamma^2$ is dynamically adjusted using $1 - \gamma^2 = max\{0.25/(1 + \log(1 + t)),0.001\}$,  where $t$ is the data sample index.
    \item $\vM_H = 0.02\vI,  \vM_Y = 0.02\vI$.
    \item $\vW$ matrices are initialized first with i.i.d. standard normal random variables. Then, we normalized the Euclidean norm of all rows  to $0.0033$ by proper scaling.
    \item Learning rate for the neural dynamic iterations is adjusted using $max\{0.5/(1 + \tau \times 0.005),0.01\}$, where $\tau$ is the neural dynamic iteration count.
    \item Maximum number of neural dynamic iterations is restricted to be $\tau_{\text{max}} = 750$ if stopping condition is not satisfied.
    \item For the stability of the learning process, we keep the diagonal weights of $\vD_1$ and $\vD_2$ in a predetermined range, i.e., $10^{-6} \prec \mbox{diag}(\vD_1) \prec 10^6$ and $1 \prec \mbox{diag}(\vD_2) \prec 5$.
\end{itemize}
Figure \ref{fig:spex6} illustrates the SINR convergence behavior (averaged over 100 realizations) of the WSM Det-Max network for this scenario, as a function of update iterations.

\begin{figure}[H]
\centering
\includegraphics[width=0.75\columnwidth]{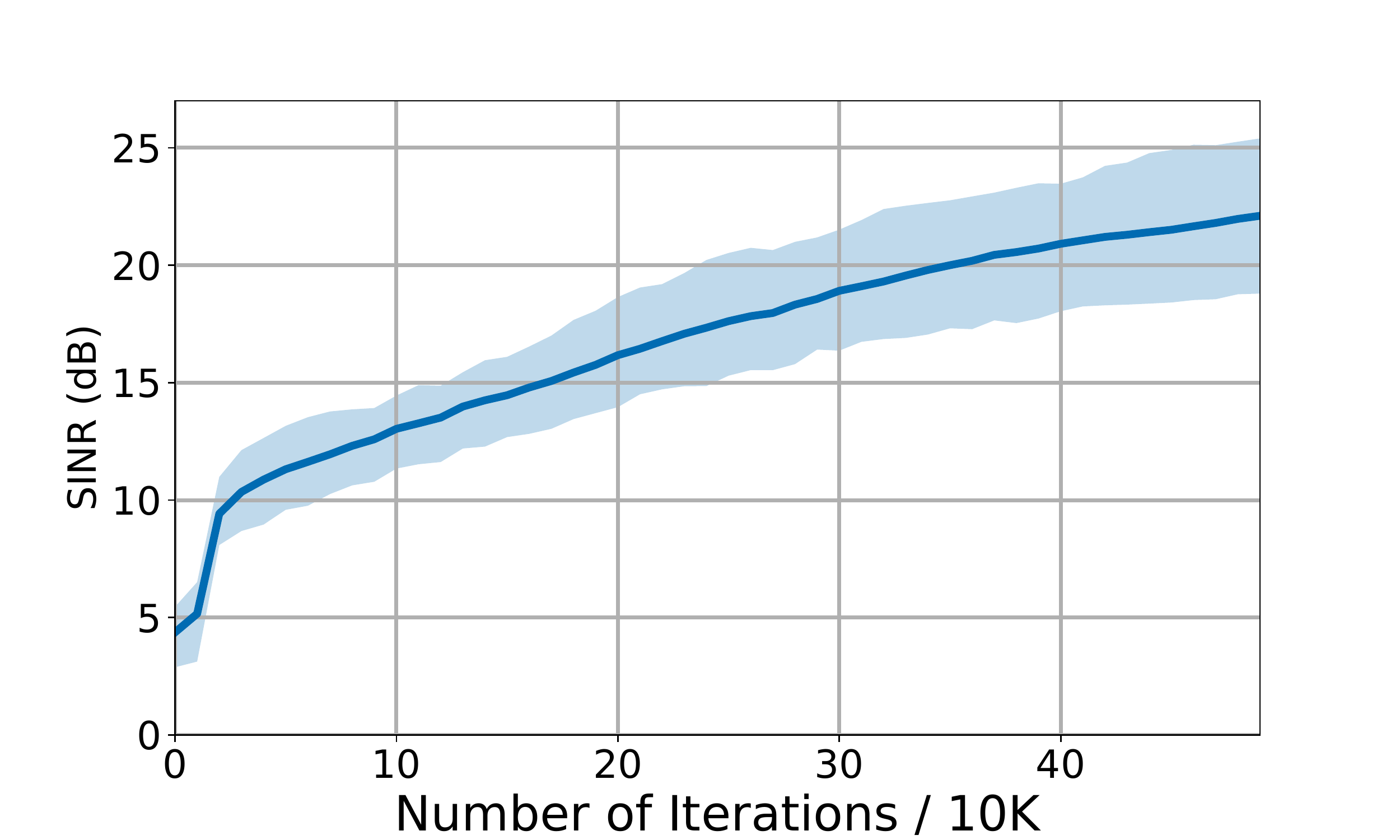}%
\caption{The SINR convergence curve for the experiments discussed in \ref{sec:mixedsparsennantiappendix}: mean-solid line with 25/75-percentile envelope.}
\label{fig:spex6}
\end{figure}

\subsection{Digital communication example: 4-PAM modulation scheme}
\label{sec:digitalcommunicationappendix}
We consider the 4 Pulse-amplitude modulation (4-PAM) scheme as a realistic application of blind separation of digital communication signals, with the symbols $\{\pm 3, \pm 1\}$. We consider a uniform symbol distribution, i.e., $P(s = i) = \frac{1}{4}\ \  \forall i = \pm3,\pm1$, where $s$ represents the transmitted symbol. We assume that $5$ sources are transmitted, with $400000$ samples each, and mixed through a $10\times5$ random matrix with standard normal entries. Without loss of generality, we make use of $\mathcal{B}_{\ell_\infty}$ polytope as the source domain assumption so that we feed the mixtures to the WSM Det-Max neural network for the antisparse sources. To train this network, we used the following hyperparameter selections and variable initializations:

\begin{itemize}
    \item $\vD_1 = 0.5\vI$, and $\vD_2 = 0.5\vI$.
    \item $\mu_{\vD_1} = 0.01$, and $\mu_{\vD_2} = 0.01$.
    \item $\beta = 0.5$, and  $\lambda_{SM} = 1 - 5\times10^{-3}$.
    \item $1 - \gamma^2$ is dynamically adjusted using $1 - \gamma^2 = max\{0.3/(1 + \log(1 + t)),0.05\}$,  where $t$ is the data sample index.
    \item $\vM_H = 2\vI,  \vM_Y = \vI$.
    \item $\vW$ matrices are initialized first with i.i.d. standard normal random variables. Then, we normalized the Euclidean norm of all rows  to $0.005$ by proper scaling.
    \item Learning rate for the neural dynamic iterations is adjusted using $max\{0.5/(1 + \tau \times 0.005),0.01\}$, where $\tau$ is the neural dynamic iteration count.
    \item Maximum number of neural dynamic iterations is restricted to be $\tau_{\text{max}} = 750$ if stopping condition is not satisfied.
    \item For the stability of the learning process, we keep the diagonal weights of $\vD_1$ and $\vD_2$ in a predetermined range, i.e., $0.2 \prec \mbox{diag}(\vD_1) \prec 10^6$ and $0.2 \prec \mbox{diag}(\vD_2) \prec 25$.
\end{itemize}

Figure \ref{fig:spex6} illustrates the SINR convergence behavior (averaged over 20 realizations) of the WSM Det-Max network for this scenario, as a function of update iterations.We conclude that our proposed approach is able to separate the source symbols from their mixtures.
\begin{figure}[H]
\centering
\includegraphics[width=0.75\columnwidth]{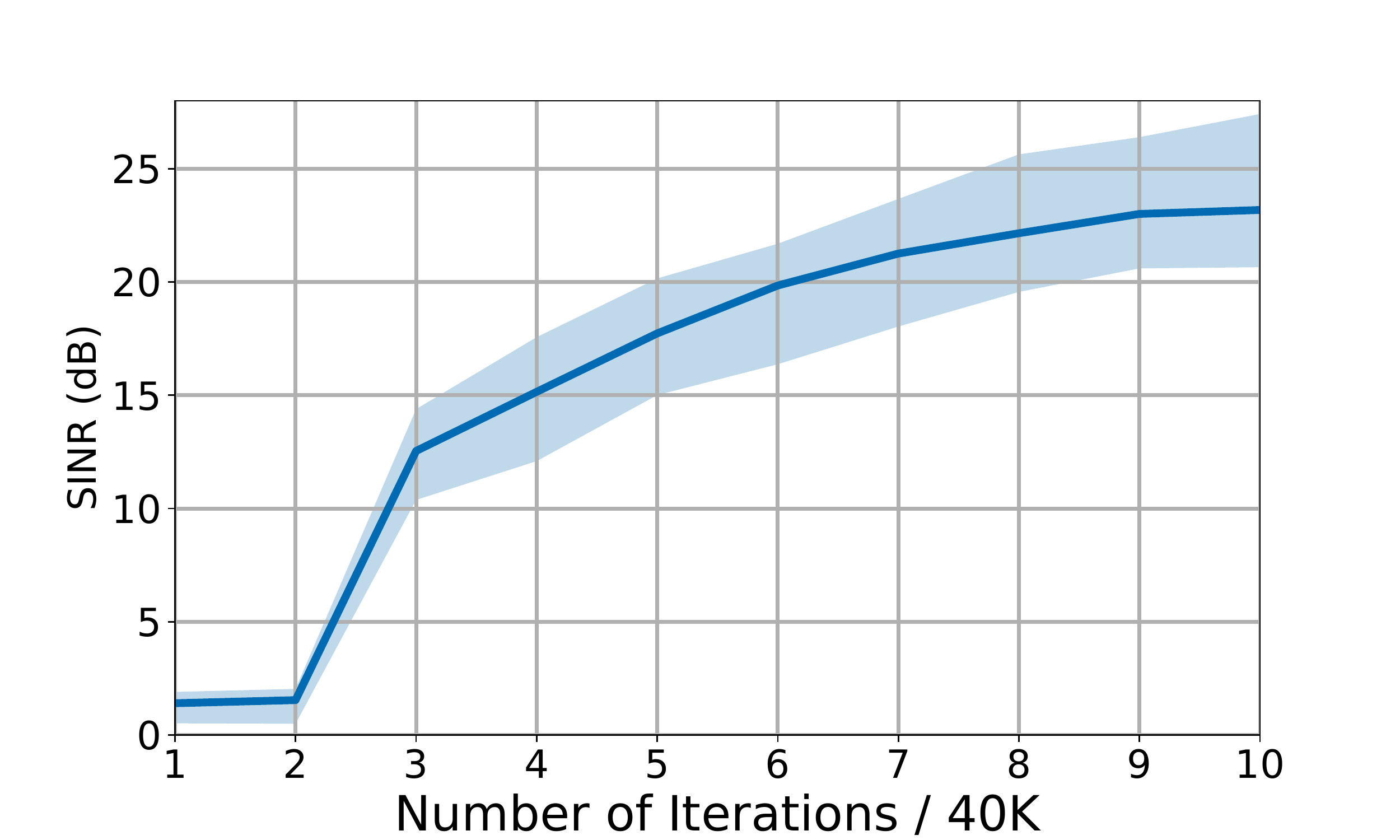}%
\caption{The SINR convergence curve for the 4-PAM digital communication signals: mean solid line with 25/75-percentile envelope.}
\label{fig:spex7}
\end{figure}

\subsection{Ablation study on hyperparameter selection for nonnegative sparse sources}
\label{sec:hyperparameterablationappendix}
The proposed  Det-Max WSM  framework requires many hyperparameter selections. In Section \ref{sec:numexpappendix}, we discuss the selection of these hyperparameters for different source domains. Most of the time, we find these hyperparameters by trial error and sensitivity analysis. Several ablation studies similar to grid search are useful to find the optimal values for the hyperparameters. In this section, we provide such ablation studies on effects of the selection of $\lambda_{\text{SM}}$, $\vD_1$, $\mu_{\vD_1}$, and $\gamma$. We chose to focus on $\lambda_{\text{SM}}$ here because we observed that it is one of the most sensitive parameters. Although the other parameters appear to have less of an effect on the final result than $\lambda_{\text{SM}}$, the cumulative impacts of the combined hyperparameter choices can substantially influence overall performance.

We consider nonnegative sparse source separation setup, i.e., $\Pcal=\mathcal{B}_{\infty,+}$. We generate $n=5$ dimensional source vectors uniformly in $\mathcal{B}_{\infty,+}$, and the mixing matrix is a $10\times 5$-matrix with i.i.d. standard normal entries. The mixtures train the nonnegative sparse-WSM Det-Max network illustrated in Figure \ref{fig:NNBoneplus}. In these ablation studies, we specifically consider the effect of hyperparameter selection for $1 - \lambda_{\text{SM}}$, initial $\vD_1$, $\mu_{\vD_1}$, and initial $1 - \gamma^2$. For each of the mentioned hyperparameters, we consider the following choices,
\begin{itemize}
    \item $1 - \lambda_{\text{SM}} \in \{10^{-3},10^{-4},10^{-5},10^{-6}\}$,
    \item $\vD_1 \in \{4\vI,8\vI,12\vI,16\vI\}$,
    \item $\mu_{\vD_1} \in \{5, 10 , 15, 20\}$,
    \item $\text{initial } 1 - \gamma^2 \in \{0.15, 0.20, 0.25, 0.30\}$
\end{itemize}
While experimenting with one hyperparameter, we fixed the rest of them as given in the following list,
\begin{itemize}
    \item $\vD_1 = 4\vI$, and $\vD_2 = \vI$.
    \item $\mu_{\vD_1} = 15$, and $\mu_{\vD_2} = 0.01$.
    \item $\beta = 0.5$, and  $\lambda_{SM} = 1 - 10^{-4}$.
    \item $1 - \gamma^2$ is dynamically adjusted using $1 - \gamma^2 = max\{0.25/(1 + \log(1 + t)),10^{-3}\}$,  where $t$ is the data sample index.
    \item $\vM_H = 0.02\vI,  \vM_Y = 0.02\vI$.
    \item $\vW$ matrices are first initialized with i.i.d. standard normal random variables. Then, we normalize the Euclidean norm of all rows to $0.0033$ by proper scaling.
    \item The learning rate for the neural dynamic iterations is adjusted using $max\{0.5/(1 + \tau \times 0.005),0.2\}$, where $\tau$ is the neural dynamic iteration count.
    \item Maximum number of neural dynamic iterations is restricted to be $\tau_{\text{max}} = 750$ if stopping condition is not satisfied.
    \item For the stability of the learning process, we keep the diagonal weights of $\vD_1$ and $\vD_2$ in a predetermined range, i.e., $10^{-6} \prec \mbox{diag}(\vD_1) \prec 10^6$ and $1 \prec \mbox{diag}(\vD_2) \prec 1.001$.
\end{itemize}


Figure \ref{fig:zetaablation} illustrates the SINR performance of the WSM Det-Max network concerning $1 - \lambda_{\text{SM}}$, and it demonstrates that it significantly affects the final SINR behavior of the proposed approach. We argue that the selection $\lambda_{\text{SM}} = 1 - 10^{-4}$ is a near-optimal for nonnegative sparse source separation with the WSM Det-Max network, whereas one can also implement a more detailed search based on possibly other hyperparameter dependencies. We also analyze the effect of initial $\vD_1$ on the final SINR, and Figure \ref{fig:D1startablation} demonstrates the performance change with $\vD_1$ gain initialization. We inspect that the WSM Det-Max network for nonnegative sparse sources relatively maintains its averaged performance against different initial gain parameters, whereas the selection of $\vD_1 = 4\vI$ leads to best performance with a significantly lower variance compared to other initialization choices. In Figure \ref{fig:muD1ablation}, we visualize the effect of learning rate choice for $\vD_1$. It is noticeable that $\mu_{\vD_1}$ is less effective in SINR performance compared to other considered hyperparameters, but $\mu_{\vD_1} = 15$ achieves the best average result with a lower variance. As the final ablation study on hyperparameter selection, we consider the initial value of $1 - \gamma^2$ which we dynamically adjust using  $max\{\nu /(1 + \log(1 + t)),10^{-3}\}$,  where $t$ is the data sample index and $\nu$ is the initial value. Figure \ref{fig:MUSablation} illustrates the effect for the initial value $\nu$, and it is remarkable that an improved result is attained for $\nu = 0.25$.

\begin{figure}[H]
\centering
\subfloat[a][]{
\includegraphics[trim = {0cm 0cm 0cm 0cm},clip,width=0.5\textwidth]{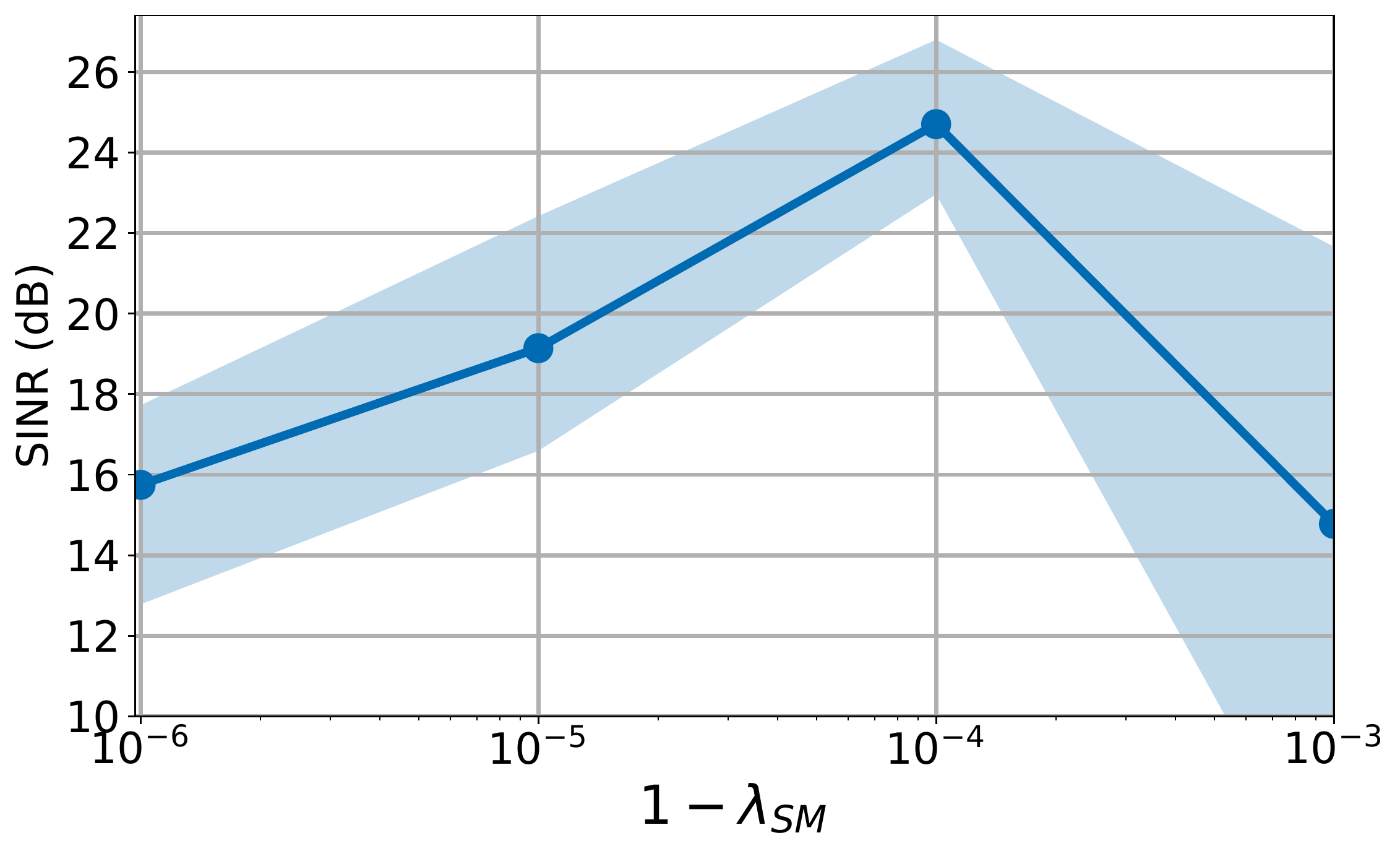}
\label{fig:zetaablation}}
\subfloat[b][]{
\includegraphics[trim = {0cm 0cm 0cm 0cm},clip,width=0.5\textwidth]{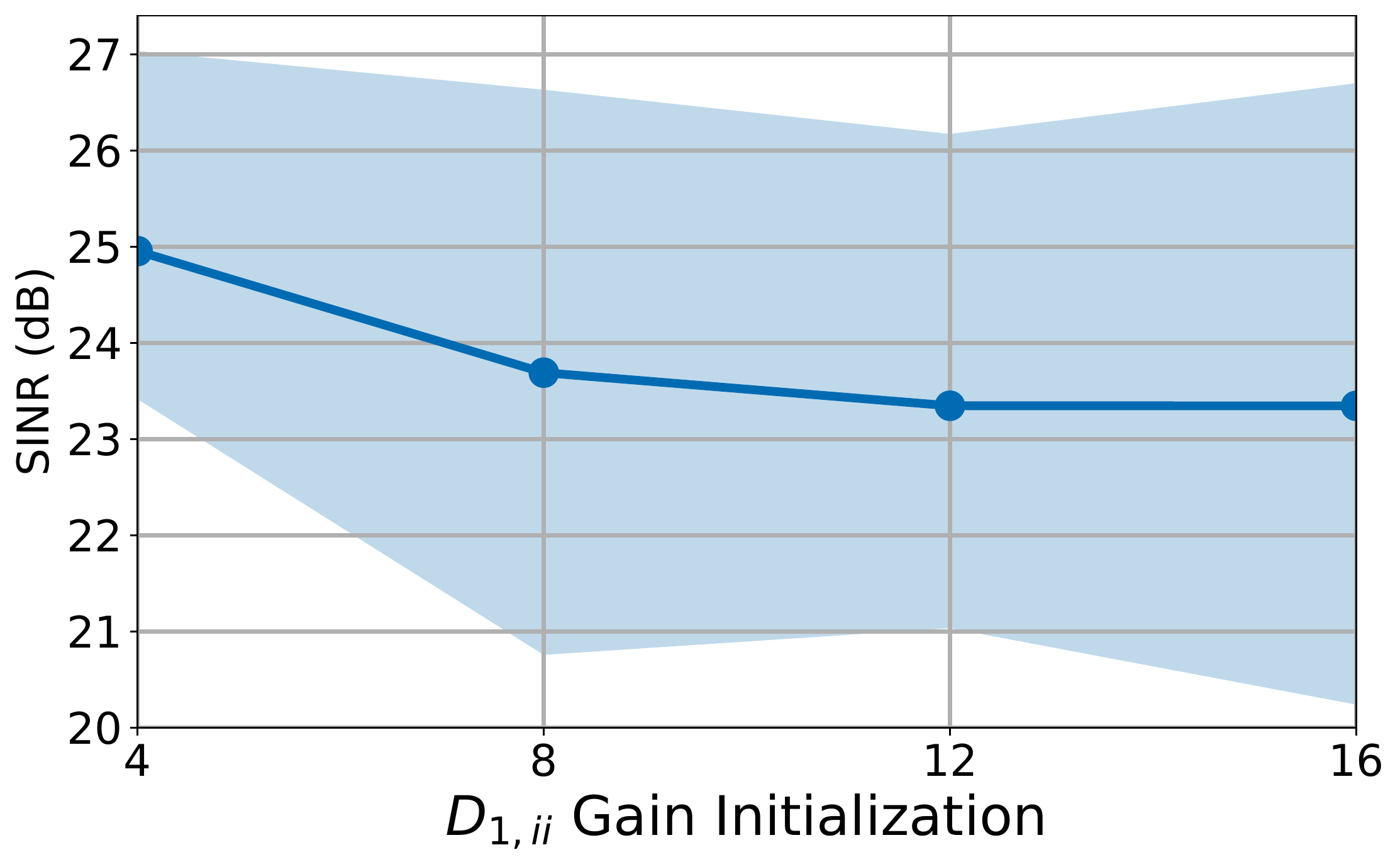}
\label{fig:D1startablation}
}
\end{figure}

\begin{figure}[H]\ContinuedFloat
\centering
\subfloat[c][]{
\includegraphics[trim = {0cm 0cm 0cm 0cm},clip,width=0.5\textwidth]{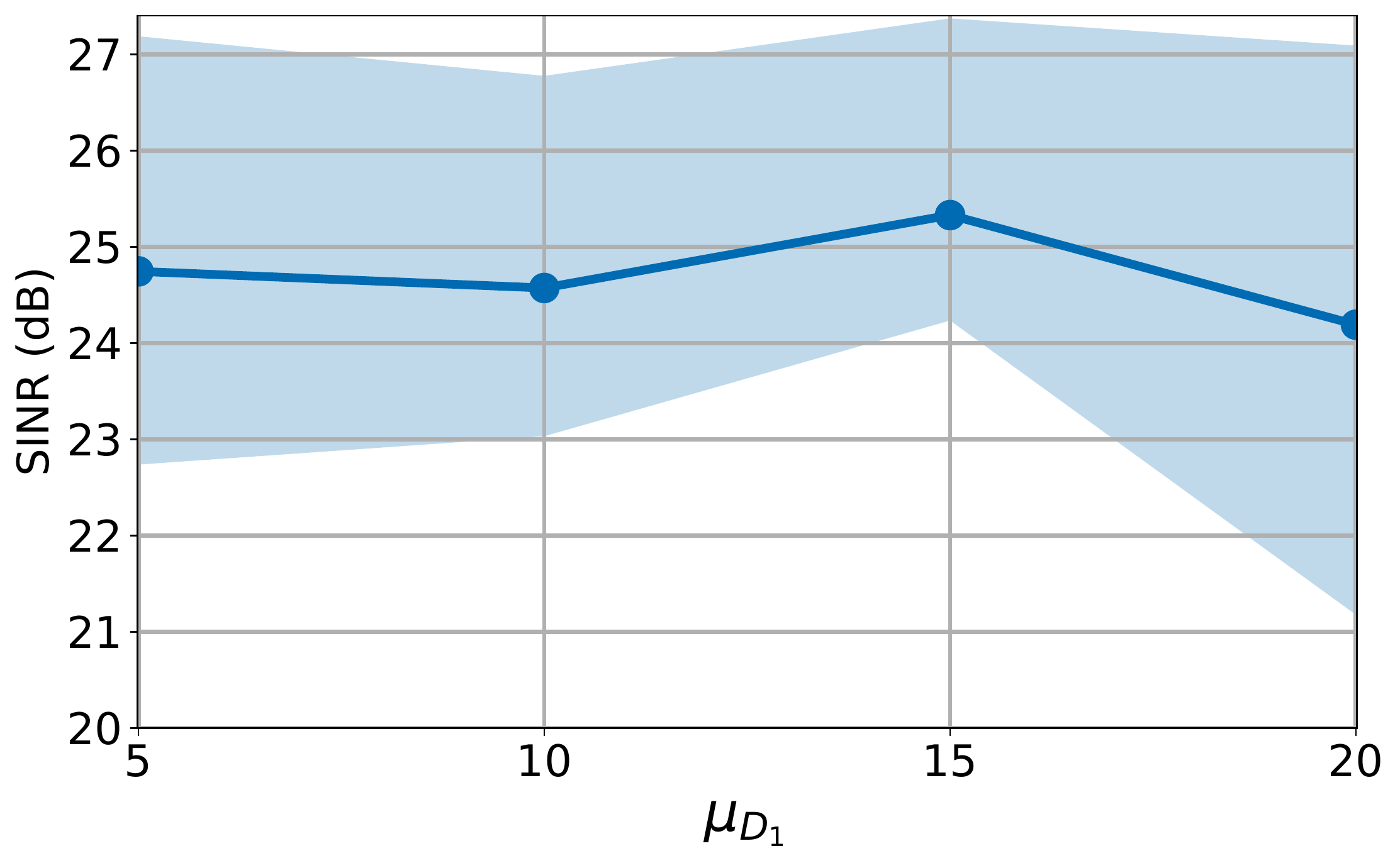}
\label{fig:muD1ablation}}
\subfloat[d][]{
\includegraphics[trim = {0cm 0cm 0cm 0cm},clip,width=0.5\textwidth]{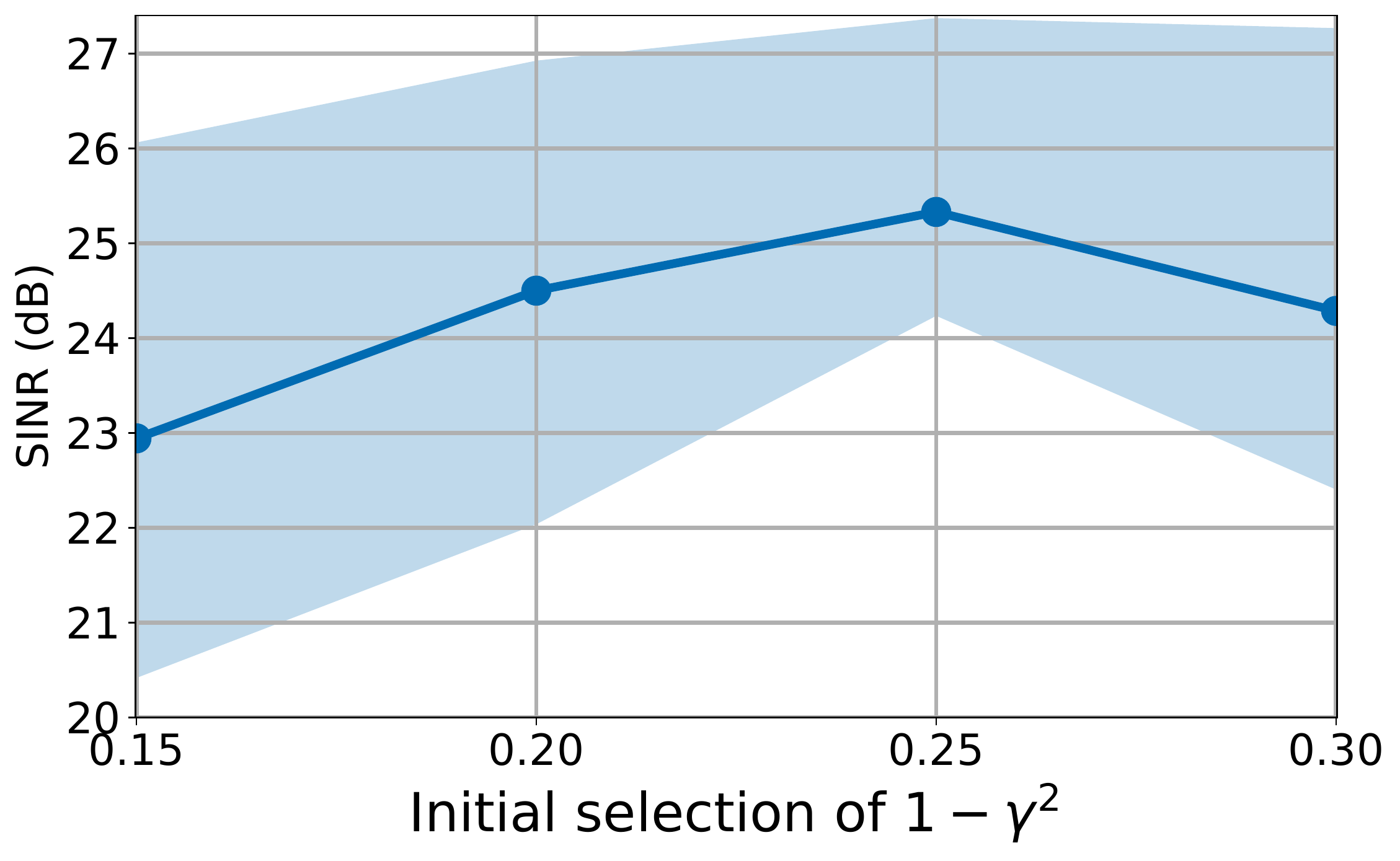}
\label{fig:MUSablation}}
\caption{SINR performances of WSM Det-Max networks for different hyperparameter selections (averaged over 50 realizations, mean solid lines with 25/75-percentile envelopes): (a) averaged SINR performance with respect to $1 - \lambda_{\text{SM}}$, (b) averaged SINR performance with respect to initial $\vD_1$, (c) averaged SINR performance with respect to $\mu_{\vD_1}$, (d) averaged SINR performance with respect to initial $1 - \gamma^2$.}
\qquad

\hfill
\label{fig:ablationfigure}
\end{figure}

\section{Discussion on the complexity of the proposed approach}
\label{sec:complexityappendix}

In this section, we discuss the computational complexity of the proposed WSM Det-Max neural network implementations. For simplicity, we consider the antisparse source separation cases discussed in Section \ref{linfdynamics}. Remarkably, the overall complexity is due to the output computation complexities which are determined by (\ref{eq:descv})-(\ref{eq:htnn}) and (\ref{eq:descu})-(\ref{eq:ytnn}). Note that these differential equations are naturally solved in neuromorphic implementations. However, in digital computer simulations, we need to implement loops to obtain their iterative solutions, as summarized in Algorithm \ref{alg:neuraldynamiciterationsanti}. As described in Section \ref{sec:mixingmodel}, assume that there are $n$ sources and $m$ mixtures, i.e., $\vx_t \in \mathbb{R}^m$, and $\vh_t, \vy_t \in \mathbb{R}^n$ for all $t$. Assuming that the factors $(1-\beta )\bar{\vM}_H(t)+\beta\vD_1(t)\bar{\vM}_H(t)\vD_1(t)$, $\beta \vD_1(t)\vW_{HX}(t)$, $\vW_{YH}(t)^T\vD_2(t)$, and $\bar{\vM}_Y(t)\vD_2(t)$ are computed outside the iterative loop of Algorithm \ref{alg:neuraldynamiciterationsanti}, the expressions in (\ref{eq:descv}) and (\ref{eq:descu}) require $2 n^2 + m n$ and $2 n^2$ multiplications, respectively. If we assume that the neural dynamic loop reaches to the pre-determined maximum number of iterations $\tau_\text{{max}}$, i.e., the numerical relative error check for the convergence is not satisfied, then the total number of multiplication is dominated by the factor $\tau_\text{{max}} (4 n^2 + m n)$. If we analyze the computational requirements of the factors $(1-\beta )\bar{\vM}_H(t)+\beta\vD_1(t)\bar{\vM}_H(t)\vD_1(t)$, $\beta \vD_1(t)\vW_{HX}(t)$, $\vW_{YH}(t)^T\vD_2(t)$, and $\bar{\vM}_Y(t)\vD_2(t)$, these calculations require multiplications of $(n^2-n)/2 + n^2+3n$, $m n + n$, $n^2$, and $n^2$, respectively, since $\vD_1(t)$ and $\vD_2(t)$ are diagonal matrices and $\bar{\vM}_H(t), \bar{\vM}_Y(t)$ are symmetric matrices. Therefore, the complexity of the neural dynamics of our proposed approach is dominated by the factor of $\tau_\text{{max}} (4 n^2 + m n)$. The complexity of the update rules of the gain variables expressed in equations (\ref{eq:D1update}) and (\ref{eq:D2update}) is dominated by the multiplication factor of $3n$ for all $2n$ variables, leading to the dominant multiplication factor of $6n^2$. Moreover, the update rules of the synaptic weight updates expressed in equation (\ref{eq:sydynWYH}) are dominated by the multiplication factor of $n^2$ or $mn$, leading to $4(n^2 + mn)$ number of multiplications. Therefore, the worst-case complexity of our proposed method per sample in terms of the big-O notation is 
$\mathcal{O}(\tau_{\text{max}}mn)$.

We now compare this with the complexity of the NSM and BSM algorithms. We first consider the prewhitening layer introduced in \citep{pehlevan2017blind}, as both algorithms require input to be prewhitened. Taking into account equations (28), (29), and (30) in \citep{pehlevan2017blind} for output computation and synaptic weight updates of the prewhitening layer, the complexity can be expressed in terms of big-O notation as $\mathcal{O}(\tau_{\text{max}}^\text{(NSM)}(m + k)n)$, where $k\geq n$ is an integer introduced as a result of the Lagrangian multiplier in equation (12) in the reference \citep{pehlevan2017blind}, and $\tau_{\text{max}}^\text{(NSM)}$ is the maximum predetermined number of iterations for the neural dynamic loop of NSM (see equation (28) and (33) in the reference). The output dynamics and the synaptic weight updates of the second layer of the online NSM network is described by the equations (33), (34), and (35) in \citep{pehlevan2017blind} which lead to the complexity in terms of big-O notation of $\mathcal{O}( n^2)$. As a result, the overall complexity of the NSM algorithm per sample can be stated as $\mathcal{O}(\tau_{\text{max}}^\text{NSM}(m + k)n)$. Similar to the WSM network, the neural dynamic loop of BSM has the complexity of $\mathcal{O}(\tau_{\text{max}}^\text{(BSM)}mn)$ as a result of recursion defined by Equation (17) in \citep{erdogan2020blind}, where $\tau_{\text{max}}^\text{(BSM)}$ is the predetermined maximum number of iterations for the neural dynamic loop of BSM. Furthermore, synaptic weight and gain updates introduce $\mathcal{O}(n^2)$ complexity similar to the WSM algorithm. Therefore, combining with prewhitening, the overall complexity of the BSM algorithm per sample becomes $\mathcal{O}( \tau_{\text{max}}^\text{(BSM)}(m + k)n)$

In conclusion, for the biologically plausible neural network solutions to the blind source separation problem, the overall complexity is determined by the recursive neural dynamic loops due to the implicit definition of the network output. Although this condition makes the implementation of such algorithms less feasible for digital hardware, they enable low-power implementations in future analog neuromorphic systems with local learning constraints.

\end{document}